\title[A Game-Semantic Model of Computation, Revisited]
  {A Game-Semantic Model of Computation, Revisited: an Automata-Theoretic View}
\author[N. Yamada]
  {N\ls O\ls R\ls I\ls H\ls I\ls R\ls O\ns
   Y\ls A\ls M\ls A\ls D\ls A$^\dagger$\ns \\
   $^\dagger$Department of Computer Science, University of Oxford \\ Wolfson Building, Parks Rd, OX1 3QD
   \if0 Email: \texttt{norihiro.yamada@cs.ox.ac.uk} \\ Email: \texttt{samson.abramsky@cs.ox.ac.uk} \fi }
\date{March 2019}
\newtheorem{definition}{Definition}[section]
\newtheorem{example}{Example}[section]
\newtheorem{theorem}{Theorem}[section]
\newtheorem{proposition}{Proposition}[section]
\newtheorem{corollary}{Corollary}[section]
\begin{document}

\label{firstpage}
\maketitle

\begin{abstract}
In the previous work (Yamada, 2019), we have given a novel, game-semantic model of computation in an intrinsic (i.e., without recourse to any established model of computation), non-inductive, non-axiomatic fashion, which is similar to Turing machines yet beyond computation on natural numbers, e.g., higher-order computation.  
As the main theorem of the work, it has been shown that the game-semantic framework may accommodate all the computations of the  programming language PCF. 
The present paper revisits this result from an automata-theoretic perspective: It shows that \emph{deterministic, non-erasing pushdown automata} whose input tape is equipped with directed edges between cells can implement all the game-semantic PCF-computations, where the edges rather \emph{restrict} the cells of the tape which the automata may move to. 
The non-trivial point of the result is that non-deterministic, erasing pushdown automata are already strictly weaker than Turing machines, let alone than PCF.
In this manner, the present work introduces the game-semantic approach to automata theory, demonstrating high-potential of the resulting framework.
\end{abstract}

\ifprodtf \newpage \else \vspace*{-1\baselineskip}\fi

\section{Introduction}
The previous work \cite{yamada2019game} has given a novel, \emph{game-semantic} \cite{abramsky1999game,abramsky1997semantics,hyland1997game} model of computation in an intrinsic (i.e., without recourse to another model of computation), non-inductive, non-axiomatic fashion, which is similar to classic \emph{Turing machines (TMs)} \cite{turing1937computable} yet beyond computation on natural numbers (which let us call \emph{classical computation}), e.g., higher-order computation \cite{longley2015higher}.
As the main theorem, the work has shown that the game-semantic framework may accommodate all the computations of the higher-order programming language \emph{PCF} \cite{scott1993type,plotkin1977lcf}, and thus it is \emph{Turing complete} in particular \cite{gunter1992semantics,longley2015higher}.
In this manner, the work has established a mathematical foundation of computation beyond classical approaches such as TMs.

In hindsight, symbol manipulations in the game-semantic PCF-computations are \emph{very} simple from the automata-theoretic perspective. 
Hence, we are led to:
\begin{conjecture*}
There is some class of automata that are strictly weaker than TMs yet powerful enough to implement all the game-semantic PCF-computations.
\end{conjecture*}

The present work is dedicated to showing that it is not only a conjecture but a mathematical \emph{fact}.
More specifically, we shall establish:
\begin{theorem*}[Main theorem, informally]
Deterministic, non-erasing pushdown automata whose input tape is equipped with directed edges between cells of the tape, which restrict in a certain manner the cells that the automata may move to, can implement all the game-semantic PCF-computations. 
\end{theorem*}

Note that the theorem is seemingly contradictory because PCF is Turing complete, and non-deterministic, erasing pushdown automata are already not Turing complete \cite{hopcroft1979introduction,sipser2006introduction,kozen2012automata}.
Nevertheless, it \emph{does} hold, for which the seeming contradiction disappears as explained shortly.  
For now, let us just remark that the \emph{interactive} nature of game semantics is one of the two main contributors of the power of the pushdown automata (n.b., the other is the restriction by the edges). 

Prohibiting any interaction with other computational agents, however, the proof of the main theorem implies:
\begin{corollary*}[Main corollary, informally]
Deterministic, non-erasing stack automata whose stack is equipped with directed edges between stack cells, which restrict in a certain manner the stack cells that the automata may access, are Turing complete without any interaction with another computational agent. 
\end{corollary*}
Non-deterministic, non-erasing stack automata are strictly weaker than TMs \cite{hopcroft1967nonerasing} (yet more powerful than non-deterministic, non-erasing pushdown automata), and hence, the corollary is non-trivial. 
The point of the corollary is that the \emph{restriction} on the stack cells that the stack automata may access (without interaction with another agent) actually brings the stack automata Turing completeness (i.e., here, the interactive nature of game semantics contributes only to the higher-order aspect of the game-semantic PCF-computations, not to Turing completeness).

\begin{remark*}
In the following, we are often casual about the distinction between an occurrence of an element in a sequence and the element itself.
This convention would not bring any serious confusion in practice, and it is in fact standard in the literature of game semantics. 
See, e.g., \cite{yamada2019game} for how to formalize the distinction if necessary.
\end{remark*}

\subsection{The idea in a nutshell}
Let us sketch how we shall prove the theorem and the corollary.
First, in the game-semantic model of computation \cite{yamada2019game}, a computational agent or \emph{Player (P)} and an oracle or \emph{Opponent (O)} alternately perform \emph{moves} allowed by the rule of the underlying \emph{game}, where O always acts first.
Thus, a \emph{play} of the game proceeds as: 
\begin{equation}
\label{Plays}
\boldsymbol{\epsilon} \mapsto o_1 \mapsto o_1 p_1 \mapsto o_1 p_1 o_2 \mapsto o_1 p_1 o_2 p_2 \mapsto \dots
\end{equation}
where $\boldsymbol{\epsilon}$ is the \emph{empty sequence}, and $o_i$ (resp. $p_i$) with $i \in \mathbb{N}$ is O's (O-) move (resp. P's (P-) move).
Note that a game is to specify its possible plays, and it interprets a \emph{type} of a programming language.
Each element of the play (\ref{Plays}), i.e., an alternating (with respect to the OP-parity) finite sequence of moves that is `valid' in the game, is called a \emph{(valid) position} of the game.
Strictly speaking, there is the distinction between \emph{initial} and \emph{non-initial} moves of each game, where initial moves are distinguished moves that may initiate a play of the game, and every occurrence $m$ of a non-initial O- (resp. P-) move in a position of the game is associated with a previous occurrence $m'$ of a P- (resp. O-) move in the position, called its \emph{justifier}; we also say that there is a (necessarily unique) \emph{pointer} from $m$ to $m'$ in the position.
That is, positions of a game are certain finite sequences equipped with a structure of pointers.
For instance, there are two possible patters of justifiers for the position $o_1 p_1 o_2 p_2$, where only $o_1$ is initial, in the play (\ref{Plays}): In the first pattern, $o_1$ is the justifier of $p_1$, $p_1$ is the justifier of $o_2$, and $o_2$ is the justifier of $p_2$; the second pattern differs from the first only in the point that $o_1$ is the justifier of $p_2$.

On the other hand, a \emph{strategy} $\sigma$ on a game $G$, for which we write $\sigma : G$, is what tells P `how to play on $G$', and it interprets a \emph{term} or a \emph{program} of a programming language.
More precisely, a strategy is a partial function that maps an odd-length position $\boldsymbol{s} o$ of the underlying game to the next P-move $p$, where it also specifies the justifier of $p$, such that the concatenation $\boldsymbol{s} o p$ (equipped with the specified justifier of $p$) is a position of the game.
For example, maximal positions of the game $N$ of natural numbers \cite{abramsky1999game,hyland1997game} are $q n$, where $q$ is an initial O-move representing O's question `What is your number?', $n \in \mathbb{N}$ is a non-initial P-move representing P's answer `My number is $n$!', and $q$ is the justifier of $n$.
Then, there is a strategy $\underline{n} : N$ for the natural number $n \in \mathbb{N}$ that maps $q \mapsto n$ (and points $q$ as the justifier of $n$).

Hence, `effective computability' of strategies in game semantics must be defined on how to calculate the next P-move for a given odd-length position of the underlying game. 
For this point, following the previous work \cite{yamada2019game}, let us represent moves of each game for PCF-computations by a fixed alphabet, particularly in the following form:
\begin{equation*}
[m]_{e_1 e_2 \dots e_k} \stackrel{\mathrm{df. }}{=} (m, e_1 e_2 \dots e_k)
\end{equation*}
where $m$ is the `essence' of the move, and the finite sequence $e_1 e_2 \dots e_k$ is the `tag' on the move for disjoint union of sets of moves (for constructions on games).
Then, each step of PCF-computations of the previous work \cite{yamada2019game} is executed by:
\begin{enumerate}

\item Locating, with the help of pointers, a bounded number of `relevant' moves in the \emph{P-view} (see Appendix~\ref{DefViews}) of a given odd-length position of the underlying game;

\item Calculating the symbolic representation of the next P-move (and its justifier) from those of the `relevant' moves.

\end{enumerate}

Then, the main idea of the present work is to implement the game-semantic PCF-computations by \emph{deterministic, non-erasing pushdown automata} such that the input tape is equipped with directed edges between cells of the tape, where the cells that the automata may move to are rather \emph{restricted} in a certain manner (specifically to the ones containing symbols for moves in the P-view of the current position), called \emph{j-pushdown automata}. 
We assume that each position during a play is recorded by someone or \emph{Judge (J)} of the game on the input tape, and j-pushdown automata compute the next P-move into the stack. 

More concretely, each position $\boldsymbol{s}$ of a game $G$ is written on the input tape of a j-pushdown automaton in the following form:
\begin{center}
\begin{tikzpicture}[every node/.style={block},
        block/.style={minimum height=1.5em,outer sep=0pt,draw,rectangle,node distance=0pt}]
   \node (A) {$\mathsf{e_k}$};
   \node (B) [left=of A] {$\ldots$};
   \node (C) [left=of B] {$\mathsf{\$}$};
   \node (G) [left=of C] {$\mathsf{n}$};
   \node (H) [left=of G] {$\mathsf{f_1}$};
   \node (K) [left=of H] {$\mathsf{f_2}$}; 
   \node (I) [left=of K] {$\ldots$};
   \node (J) [left=of I] {$\mathsf{f_l}$};
   \node (O) [left=of J] {$\ldots$};
   \node (P) [left=of O] {$\vdash$};
   \node (D) [right=of A] {$\ldots$};
   \node (E) [right=of D] {$\mathsf{e_2}$};
   \node (L) [right=of E] {$\mathsf{e_1}$};
   \node (M) [right=of L] {$\mathsf{m}$};
   \node (N) [right=of M] {$\mathsf{\$}$}; 
   \node (O) [right=of N] {$\ldots$};
   \draw (P.north west) -- ++(-1cm,0) (P.south west) -- ++ (-1cm,0) 
                 (O.north east) -- ++(1cm,0) (O.south east) -- ++ (1cm,0);
   \draw [->] (N) [bend right] to (C);
\end{tikzpicture}
\end{center}
where $[m]_{e_1 e_2 \dots e_k}$ is any occurrence of a non-initial move in $\boldsymbol{s}$, and $[n]_{f_1 f_2 \dots f_l}$ is its justifier in $\boldsymbol{s}$, and the distinguished symbol $\vdash$ is to signify where $\boldsymbol{s}$ begins. 
That is, $\boldsymbol{s}$ is written on the tape from left to right, where each element $[m]_{e_1 e_2 \dots e_k}$ is represented by an expression $\mathsf{e_k \dots e_2 e_1 m}$ postfixed by the distinguished symbol $\mathsf{\$}$, and each pointer is represented by a directed edge between the cells containing the $\mathsf{\$}$'s associated with the pointer's domain and codomain occurrences of moves. 

In addition, we require that j-pushdown automata must \emph{jump} from the current cell $c$ containing $\mathsf{\$}$ to another $c'$ (necessarily containing $\mathsf{\$}$ as well) if the move written on the immediate left of $c$ (i.e., the move which $\mathsf{\$}$ in $c$ is attached to) is a non-initial O-move, and there is a (necessarily unique) edge from $c$ to $c'$ (so that they can move only to the cells that contain symbols representing moves in P-views; see Appendix \ref{DefViews}).  

Let us emphasize here that j-pushdown automata are a rather \emph{restricted} kind of deterministic, non-erasing pushdown automata, and more powerful non-deterministic, erasing pushdown automata are already strictly weaker than TMs.
In this sense, j-pushdown automata are strictly weaker than TMs; see Proposition~\ref{PropStrictWeakness}. 

Note in particular that j-pushdown automata can execute only the following:
\begin{itemize}

\item To move its reading head on a cell of the input tape to another cell on the left, following the restriction by edges as defined above;

\item To change the current state;

\item To push a symbol into the stack.

\end{itemize}
They compute the next P-move $[p]_{g_1 g_2 \dots g_r}$ and its justifier for a given odd-length position $\boldsymbol{s}$ by pushing symbols into the stack so that its content becomes: \\
\begin{center}
\begin{tikzpicture}[every node/.style={block},
        block/.style={minimum height=1.5em,outer sep=0pt,draw,rectangle,node distance=0pt}]
   \node (H) [left=of G] {$\mathsf{g_r}$};
   \node (K) [left=of H] {$\ldots$}; 
   \node (I) [left=of K] {$\mathsf{g_2}$};
   \node (J) [left=of I] {$\mathsf{g_1}$};
   \node (O) [left=of J] {$\mathsf{p}$};
   \node (P) [left=of O] {$\mathsf{J}$};
   \node (Q) [left=of P] {$\ldots$};
   \node (R) [left=of Q] {$\vdash$};
   \node (S) [right=of H] {$\mathsf{\$}$};
   \draw 
                 (H.north east) -- ++(1cm,0) (H.south east) -- ++ (1cm,0);
\end{tikzpicture}
\end{center}
where the bottom of the stack is on the left (indicated by the fixed symbol $\vdash$), and $\mathsf{J}$ is $\mathsf{i}$, $\mathsf{ii}$ or $\mathsf{iii}$, indicating the justifier of the P-move (n.b., in the game-semantic PCF-computations \cite{yamada2019game}, there are only three patterns for the justifier of a P-move occurring in a position of a game).

Then, the main theorem is spelled out as follows. Let $\sigma : G$ be the strategy that interprets a term of PCF \cite{yamada2019game}.
Then, there is a j-pushdown automaton $\mathscr{A}_\sigma$ such that, for any computation $\boldsymbol{s} [o]_{f_1 f_2 \dots f_l} \mapsto [p]_{g_1 g_2 \dots g_r}$ of $\sigma$, if the odd-length position $\boldsymbol{s} [o]_{f_1 f_2 \dots f_l}$ is written on the  tape together with the pointers represented by directed edges as specified above, then the computation of $\mathscr{A}_\sigma$ terminates with the stack content representing the P-move $[p]_{g_1 g_2 \dots g_r}$ in the format specified above.

This may sound too good to be true and even contradictory to the non-equivalence of TMs and pushdown automata; however, the theorem \emph{does} hold. 
The trick is actually the edges on the input tape of j-pushdown automata. 
Recall that we have required that j-pushdown automata can move only to the cells that contain symbols representing moves in the P-view of the current position of the underlying game.
At first glance, this condition restricts the computational power of the automata; however, it implicitly serves as a kind of a `route recorder' and \emph{saves} their computation to locate the cells to read off.\footnote{N.b., it does not mean that the cells to be read off are automatically computed because it is j-pushdown automata that compute justifiers during a play.}
In fact, if we had adopted the ordinary input tape (without edges), then we would need another (erasing) stack for locating the cells to read off (i.e., for following pointers encoded by symbols on the input tape); then, it is a well-known fact that deterministic, erasing pushdown automata \emph{with two stacks} are computationally equivalent to TMs \cite{hopcroft1979introduction}, and thus the seeming contradiction mentioned above has disappeared.\footnote{It is not a very surprising fact that TMs can implement all the game-semantic PCF-computations; see, e.g., the \emph{universality theorem} in \cite{yamada2019game}.}

Nevertheless, the point of the theorem is that we do not add any computational ability to deterministic, non-erasing pushdown automata; instead, we rather \emph{restrict} the cells to be read off. 
The miracle is then that the game-semantic framework gives such highly restricted j-pushdown automata the computational power at least as strong as PCF, which we call \emph{PCF-completeness}.

\subsection{Further investigation}
Seeing more closely what contributes PCF-completeness of j-pushdown automata, there are mainly two contributors:
\begin{enumerate}

\item \textsc{(Game-semantic compromise).} J-pushdown automata compute the next P-move \emph{into the stack}, not onto the input tape, assuming that J instead reads the move (and its justifier) in the stack and modifies the content of the input tape accordingly;

\item \textsc{(Edges on the input tape).} As already explained, edges on the input tape save certain computation by j-pushdown automata.

\end{enumerate}

The game-semantic compromise is somewhat unusual for automata theory \cite{hopcroft1979introduction,sipser2006introduction,kozen2012automata}, though natural from the game-semantic viewpoint, for an automaton is usually expected to execute every computational step by itself, i.e., in the stand-alone fashion. 
Hence, one may wonder what would happen if we prohibit the game-semantic compromise.
Clearly, j-pushdown automata are no longer Turing complete, let alone PCF-complete, without the compromise because they have no means to write on the input tape; even if positions of games are recorded in the stack, they can read off only the symbol in the top cell of the stack.

This situation then suggests us to employ deterministic, non-erasing \emph{stack automata} \cite{ginsburg1967stack,hopcroft1967nonerasing} such that the stack is equipped with directed edges similarly to the input tape of j-pushdown automata, and the stack automata can access only the stack cells corresponding to P-views, where positions occurring in a play of a game are recorded in the stack. 
Let us call such stack automata \emph{j-stack automata}.
Clearly, j-stack automata are more powerful than j-pushdown automata (which is why we employ the latter for the main theorem as it would be more surprising), but they are still strictly weaker than TMs as in the case of j-pushdown automata \cite{hopcroft1967nonerasing}.

Then, essentially in the same way as the proof of the theorem on j-pushdown automata, we may show that j-stack automata are PCF-complete, which is, given the theorem, not surprising at all.
However, if we focus on classical computation, then, unlike j-pushdown automata, j-stack automata compute completely \emph{in the stand-alone fashion}: Given an input in the stack, a j-stack automaton computes an output in the stack without any interaction with another agent such as O or J (where the input tape is not used at all).
That is, the main corollary sketched above has been spelled out as follows: J-stack automata per se are Turing complete.

Unlike the theorem, the corollary is completely automata-theoretic because it does not rely on any assumption specific to game semantics, where the game-semantic framework only leads to the proof. 
Our motivation for the corollary is to carve out the power of the edges on the stack (without interaction with another agent), which brings deterministic, non-erasing stack automata Turing completeness.
To summarize:

\if0
\begin{remark*}
Let us rephrase the main results here. 
In the case of j-pushdown automata, the interaction with J contributes to their Turing completeness, and that with O extends it further to their PCF-completeness.
In the case of j-stack automata, on the other hand, there is no need for the interaction with J; they are Turing (resp. PCF-) complete without (resp. with) the interaction with O.
\end{remark*}
\fi

\begin{center}
\begin{tabular}{| c | c |} \hline
{\bfseries Turing completeness} & {\bfseries PCF-completeness} \\ \hline
J-pushdown automata (JPAs) with J & JPAs with J \& O \\ 
J-stack automata (JSAs) & JSAs with O \\ \hline
\end{tabular}
\end{center}

\subsection{Our contribution and related work}
We believe that the main theorem, i.e., PCF-completeness of j-pushdown automata, is somewhat surprising from the view of theory of computation and recursion theory because it in a sense overturns the well-established hierarchy of automata \cite{chomsky1956three,sipser2006introduction}. 
The main corollary, i.e., Turing completeness of j-stack automata, is also non-trivial for it shows that the game-semantic compromise is not necessary for their Turing completeness; rather, the \emph{restriction} on stack cells by edges is the key contributor.

From a methodological viewpoint, the present work indicates high potential of the game-semantic approach for automata theory; see Section~\ref{ConclusionAndFutureWork} for further directions.
For this point, we have written this paper essentially in a self-contained manner, recalling the game-semantic PCF-computations in Section~\ref{Review}, so that it would be accessible to logicians and mathematicians who have been unfamiliar with game semantics.  
Hence, another (though not main) contribution of the paper is to introduce the game-semantic approach to theory of computation and recursion theory \cite{yamada2019game} to wider audience, rephrasing it by the more familiar automata-theoretic setting. 

As related work, let us mention the work on a correspondence between collapsible pushdown automata and recursion schemes by Ong et al. \cite{hague2008collapsible}. 
Roughly, \emph{collapsible pushdown automata} are \emph{higher-order pushdown automata} \cite{knapik2002higher} such that each symbol in the stack is equipped with a \emph{link} to another stack occurring below, and there is an additional stack operation, called \emph{collapse}, that `collapses' a stack $\boldsymbol{s}$ to the prefix of $\boldsymbol{s}$ as indicated by the link from the $\textit{top}_1$-symbol of $\boldsymbol{s}$ (see \cite{hague2008collapsible} for the precise definition); \emph{recursion schemes} or \emph{simply-typed $\lambda Y$-calculi} are simply-typed $\lambda$-calculi equipped with fixed-point combinators $\mathsf{Y_A}$ for each type $\mathsf{A}$ \cite{amadio1998domains}. 
They have shown, as the main result, that collapsible pushdown automata and recursion schemes have the same expressive power as generators of node-labelled ranked trees. 
Therefore, collapsible pushdown automata can be seen as a computational device that generates the trees that represent terms of \emph{finitary PCF}, i.e., the fragment of PCF that has the boolean type as the sole ground type, and thus they are relevant to the present work.
However, our automata and collapsible pushdown automata are employed for rather different purposes: J-pushdown and j-stack automata are to compute the next P-move from a given P-view (in an interaction with O and/or J), while collapsible pushdown automata are to generate the \emph{entire} (possibly infinite in depth) tree of a term (without any interaction with O or J).
In other words, the former only computes a \emph{single P-move} for a given odd-length position of a game, while the latter enumerates \emph{all positions} of a game.
Hence, it is not very surprising that our automata do not need higher-order stacks or the collapse operation, but they implement (non-finitary) PCF. 

\subsection{Structure of the paper}
The rest of the present paper proceeds as follows. 
This introduction ends with fixing notation. 
Recalling the variant of games and strategies employed in the previous work \cite{yamada2019game} in Section~\ref{Review}, we define j-pushdown automata (resp. j-stack automata) and establish their PCF-completeness (resp. Turing completeness) in Section~\ref{JPointingAutomata}.
Finally, we draw a conclusion and propose future work in Section~\ref{ConclusionAndFutureWork}.

\begin{notation*}
We use the following notations throughout the present paper:
\begin{itemize}

\item We use bold letters $\boldsymbol{s}, \boldsymbol{t}, \boldsymbol{u}, \boldsymbol{v}$, etc. for sequences, in particular $\boldsymbol{\epsilon}$ for the \emph{empty sequence}, and letters $a, b, c, d$, etc. for elements of sequences;

\item We often abbreviate a finite sequence $\boldsymbol{s} = (x_1, x_2, \dots, x_{|\boldsymbol{s}|})$ as $x_1 x_2 \dots x_{|\boldsymbol{s}|}$, where $|\boldsymbol{s}|$ denotes the \emph{length} (i.e., the number of elements) of $\boldsymbol{s}$, and write $\boldsymbol{s}(i)$, where $i \in \{ 1, 2, \dots, |\boldsymbol{s}| \}$, as another notation for $x_i$;


\item A \emph{concatenation} of sequences is represented by the juxtaposition of them, but we write $a \boldsymbol{s}$, $\boldsymbol{t} b$, $\boldsymbol{u} c \boldsymbol{v}$ for $(a) \boldsymbol{s}$, $\boldsymbol{t} (b)$, $\boldsymbol{u} (c) \boldsymbol{v}$, etc., and also $\boldsymbol{s} . \boldsymbol{t}$ for $\boldsymbol{s t}$;

\item We define $\boldsymbol{s}^n \stackrel{\mathrm{df. }}{=} \underbrace{\boldsymbol{s} \boldsymbol{s} \cdots \boldsymbol{s}}_n$ for a sequence $\boldsymbol{s}$ and a natural number $n \in \mathbb{N}$;

\item We write $\mathsf{Even}(\boldsymbol{s})$ (resp. $\mathsf{Odd}(\boldsymbol{s})$) iff $\boldsymbol{s}$ is of even-length (resp. odd-length);

\item Let $S^\mathsf{P} \stackrel{\mathrm{df. }}{=} \{ \boldsymbol{s} \in S \mid \mathsf{P}(\boldsymbol{s}) \}$ for a set $S$ of sequences and $\mathsf{P} \in \{ \mathsf{Even}, \mathsf{Odd} \}$;

\item $\boldsymbol{s} \preceq \boldsymbol{t}$ means $\boldsymbol{s}$ is a \emph{prefix} of $\boldsymbol{t}$, i.e., $\boldsymbol{t} = \boldsymbol{s} . \boldsymbol{u}$ for some sequence $\boldsymbol{u}$, and given a set $S$ of sequences, we define $\mathsf{Pref}(S) \stackrel{\mathrm{df. }}{=} \{ \boldsymbol{s} \mid \exists \boldsymbol{t} \in S . \ \! \boldsymbol{s} \preceq \boldsymbol{t} \ \! \}$;

\item For a poset $P$ and a subset $S \subseteq P$, $\mathsf{Sup}(S)$ denotes the \emph{supremum} of $S$;


\item $X^* \stackrel{\mathrm{df. }}{=} \{ x_1 x_2 \dots x_n \mid n \in \mathbb{N}, \forall i \in \{ 1, 2, \dots, n \} . \ \! x_i \in X \ \! \}$ for each set $X$;

\item For a function $f : A \to B$ and a subset $S \subseteq A$, we define $f \upharpoonright S : S \to B$ to be the \emph{restriction} of $f$ to $S$, and $f^\ast : A^\ast \to B^\ast$ by $f^\ast(a_1 a_2 \dots a_n) \stackrel{\mathrm{df. }}{=} f(a_1) f(a_2) \dots f(a_n)$;


\item Given sets $X_1, X_2, \dots, X_n$, and an index $i \in \{ 1, 2, \dots, n \}$, we write $\pi_i$ (or $\pi_i^{(n)}$) for the \emph{$i^{\text{th}}$-projection function} $X_1 \times X_2 \times \dots \times X_n \to X_i$ that maps $(x_1, x_2, \dots, x_n) \mapsto x_i$.


\end{itemize}
\end{notation*}

\section{Review: games and strategies for PCF-computation}
\label{Review}
This section presents a brief, self-contained review of the game-semantic PCF-computations given in the previous work \cite{yamada2019game}, which only focuses on the contents relevant to the present work.\footnote{The present section actually constitutes about half of the present paper. As already mentioned, it is to introduce our game-semantic approach to wider audience who has been unfamiliar with game semantics, where note that the previous work \cite{yamada2019game} consists of 79 pages. Hence, we believe that a summary of its relevant part like the present section is meaningful.}
We therefore encourage the reader who is already familiar with the previous work to skip the present section. 
On the other hand, we recommend the reader who is unfamiliar with game semantics to consult first with the first 29 pages of the very gentle introduction to ordinary game semantics \cite{abramsky1999game}.

In addition, let us remark that our exposition is technically more involved than ordinary game semantics, in particular by `tags' for disjoint union. 
However, it is a `necessary evil', i.e., any mathematically serious account on theory of computation and recursion theory seems to need to some degree an involved formalism, especially to prove for the first time the computational power of the approach; see, e.g., \cite{hopcroft1979introduction,kleene1952introduction}.

We first recall the general definitions of games and strategies in Section~\ref{GamesAndStrategies}, and standard constructions on them in Section~\ref{Constructions}.
Finally, we recall the games and strategies for the game-semantic PCF-computations in Section~\ref{GamesAndStrategiesForPCF}.

\begin{remark*}
The variant of games and strategies employed in \cite{yamada2019game} are the \emph{dynamic} one introduced for the first time in \cite{yamada2016dynamic}, which we call in this paper \emph{\bfseries games} and \emph{\bfseries strategies}, respectively. 
For brevity, we simplify some of the original definitions, forgetting structures not necessary for the present work. 
\end{remark*}

\subsection{Games and strategies}
\label{GamesAndStrategies}
A \emph{game}, roughly, is a certain kind of a rooted forest whose branches represent possible `developments' or \emph{(valid) positions} of a `game in the usual sense' (such as chess, poker, etc.).
\emph{Moves} of a game are nodes of the game, where some moves are distinguished and called \emph{initial}; only initial moves can be the first element (or occurrence) of a position of the game. 
\emph{Plays} of a game are (finitely or infinitely) increasing sequences $(\boldsymbol{\epsilon}, m_1, m_1 m_2, \dots)$ of positions of the game. 
For our purpose, it suffices to focus on rather standard \emph{sequential} (as opposed to \emph{concurrent} \cite{abramsky1999concurrent}), \emph{unpolarized} (as opposed to \emph{polarized }\cite{laurent2002polarized}) games played by two participants, \emph{Player} (\emph{P}), who represents a computational agent, and \emph{Opponent} (\emph{O}), who represents an oracle or an environment, in each of which O always starts a play (i.e., unpolarized), and then they alternately and separately perform moves allowed by the rules of the game (i.e., sequential).
Strictly speaking, a position is not just a finite sequence of moves: Each occurrence $m$ of O's (O-) (resp. P's (P-)) non-initial move in a position $\boldsymbol{s}$ is assigned a previous occurrence $m'$ of P- (resp. O-) move in $\boldsymbol{s}$, representing that $m$ is performed specifically as a response to $m'$; we call $m'$ the \emph{justifier} of $m$ in $\boldsymbol{s}$, and we also say that there is a (necessarily unique) \emph{pointer} from $m$ to $m'$ in $\boldsymbol{s}$.

In addition, the work \cite{yamada2016dynamic} introduces the \emph{external/internal}-parity on each move of a game, where \emph{external} moves are `official' ones of the game, while \emph{internal} ones represent `internal calculation' by P in the game.
Hence, internal moves are `invisible' to O, and an internal O-move occuring in a position of the game is always a mere `dummy' of the last P-move (see the axiom \textsc{Dum} in Definition~\ref{DefLegalPositions}) so that the internal part of the position consists essentially of P's calculation only.

Having explained the rough idea on what (our variant of) games are, let us recall their precise definition below.
First, as a finitary representation of moves of games, the previous work \cite{yamada2019game} employs \emph{inner tags} for standard constructions on games except \emph{exponential} $\oc$, for which it uses \emph{outer tags}:
\begin{definition}[Inner tags \cite{yamada2019game}]
\label{DefInnerTags}
Let $\mathscr{W}$, $\mathscr{E}$, $\mathscr{N}$ and $\mathscr{S}$ be arbitrarily fixed, pairwise distinct elements. 
An \emph{\bfseries inner tag} is any finite sequence $\boldsymbol{s} \in \{ \mathscr{W}, \mathscr{E}, \mathscr{N}, \mathscr{S} \}^\ast$.
\end{definition}

\begin{definition}[Outer tags \cite{yamada2019game}]
\label{DefOuterTags}
An \emph{\bfseries outer tag} is an expression $\boldsymbol{e} \in (\{ \ell, \hbar, \Lbag, \Rbag \})^\ast$, where $\ell$, $\hbar$, $\Lbag$ and $\Rbag$ are arbitrarily fixed, pairwise distinct elements, generated by the grammar $\boldsymbol{e} \stackrel{\mathrm{df. }}{\equiv} \boldsymbol{\gamma} \mid \boldsymbol{e}_1 \hbar \ \! \boldsymbol{e}_2 \mid \Lbag \boldsymbol{e} \Rbag$, where $\boldsymbol{\gamma} \in \{ \ell, \hbar \}^\ast$.
\end{definition}

An outer tag $\boldsymbol{e}$ is to denote a finite sequence $\mathit{de}(\boldsymbol{e}) \in \mathbb{N}^\ast$ defined by:
\begin{align*}
\mathit{de}(\boldsymbol{\gamma}) &\stackrel{\mathrm{df. }}{=}  (i_1, i_2, \dots, i_k) \ \text{if $\boldsymbol{\gamma} = \ell^{i_1} \hbar \ \! \ell^{i_2} \hbar \dots \ell^{i_{k-1}} \hbar \ \! \ell^{i_k}$} \\
\mathit{de}(\boldsymbol{e}_1 \hbar \ \! \boldsymbol{e}_2) &\stackrel{\mathrm{df. }}{=} \mathit{de}(\boldsymbol{e}_1) . \mathit{de}(\boldsymbol{e}_2) \\
\mathit{de}(\Lbag \boldsymbol{e} \Rbag) &\stackrel{\mathrm{df. }}{=} (\wp(\mathit{de}(\boldsymbol{e})))
\end{align*}
where $\wp : \mathbb{N}^\ast \stackrel{\sim}{\to} \mathbb{N}$ is any recursive bijection fixed throughout the present paper such that $\wp (i_1, i_2, \dots, i_k) \neq \wp ( j_1, j_2, \dots, j_l)$ whenever $k \neq l$ (see, e.g.,  \cite{cutland1980computability}).

Unlike the previous work \cite{yamada2019game}, we embed the \emph{depth} of each occurrence of $\Lbag$ or $\Rbag$ into outer tags for our automata-theoretic implementation of PCF-computations:
\begin{definition}[Depths of $\Lbag$ and $\Rbag$]
\label{DefDepths}
In an outer tag $\boldsymbol{e}$, pairing each occurrence of $\Rbag$ with the most recent yet unpaired occurrence of $\Lbag$, we call one component of such a pair the \emph{\bfseries mate} of the other in $\boldsymbol{e}$.
The \emph{\bfseries depth} of an occurrence of $\Lbag$ in $\boldsymbol{e}$ is the number of previous occurrences of $\Lbag$ in $\boldsymbol{e}$ whose mate does not occur before that occurrence, and the \emph{\bfseries depth} of an occurrence of $\Rbag$ in $\boldsymbol{e}$ is the depth of its mate in $\boldsymbol{e}$.
\end{definition}

\begin{definition}[Extended outer tags]
\label{DefOuterTags}
An \emph{\bfseries extended outer tag} is an expression $\mathscr{O}(\boldsymbol{e})$ obtained from an outer tag $\boldsymbol{e}$ by replacing each occurrence of $\Lbag$ (resp. $\Rbag$) with $\Lbag . \lhd . \ell^d . \rhd$ (resp. $\Rbag . \lhd . \ell^d . \rhd$), where $d \in \mathbb{N}$ is the depth of the occurrence, and $\lhd$ and $\rhd$ are arbitrarily fixed, distinct elements such that $\{ \lhd, \rhd \} \cap \{ \ell, \hbar, \Lbag, \Rbag \} = \emptyset$. 
\end{definition}

\begin{remark*}
We write $\Sigma^\star$ and $\mathcal{T}$ for the sets of all outer tags and of all extended outer tags, respectively.  
We regard $\mathscr{O}$ as the obvious bijection $\Sigma^\star \stackrel{\sim}{\rightarrow} \mathcal{T}$ and define the  \emph{\bfseries decoding function on extended outer tags} to be the composition $\mathit{ede} \stackrel{\mathrm{df. }}{=} \mathcal{T} \stackrel{\mathscr{O}^{-1}}{\rightarrow} \Sigma^\star \stackrel{\mathit{de}}{\rightarrow} \mathbb{N}^\ast$.
\end{remark*}

\begin{convention*}
A \emph{\bfseries tag} refers to an inner or extended outer tag. 
\end{convention*}

\begin{notation*}
We often abbreviate expressions $\Lbag . \lhd . \ell^d . \rhd$ and $\Rbag . \lhd . \ell^d . \rhd$ as $\Lbag^{\underline{d}}$ and $\Rbag^{\underline{d}}$, respectively. 
Given $\boldsymbol{e} \in \mathcal{T}$, we write $\boldsymbol{e}^{+} \in \mathcal{T}$ for the extended outer tag obtained from $\boldsymbol{e}$ by replacing each occurrence of $\Lbag^{\underline{d}}$ (resp. $\Rbag^{\underline{d}}$) with that of $\Lbag^{\underline{d+1}}$ (resp. $\Rbag^{\underline{d+1}}$).
\end{notation*}

Using inner and (extended) outer tags, the previous work \cite{yamada2019game} focuses on games whose moves are all \emph{tagged elements} defined as follows:
\begin{definition}[Inner elements \cite{yamada2019game}]
\label{DefInnerElements}
An \emph{\bfseries inner element} is a finitely nested pair $( \dots ((m, t_1), t_2), \dots, t_k)$, often written $m_{t_1 t_2 \dots t_k}$, such that $m$ is a distinguished element, called the \emph{\bfseries substance} of $m_{t_1 t_2 \dots t_k}$, and $t_1 t_2 \dots t_k$ is an inner tag.
\end{definition}

\begin{definition}[Tagged elements \cite{yamada2019game}]
\label{DefTaggedElements}
A \emph{\bfseries tagged element} is a pair $[m_{t_1 t_2 \dots t_k}]_{\boldsymbol{e}} \stackrel{\mathrm{df. }}{=} (m_{t_1 t_2 \dots t_k}, \boldsymbol{e})$ of an inner element $m_{t_1 t_2 \dots t_k}$ and an extended outer tag $\boldsymbol{e}$.
\end{definition}

\begin{notation*}
We often abbreviate an inner element $m_{t_1 t_2 \dots t_k}$ as $m$ if the inner tag $t_1 t_2 \dots t_k$ is not important. 
Similarly, we often abbreviate a tagged element $[m]_{\boldsymbol{e}}$ as $m$ if the extended outer tag $\boldsymbol{e}$ is not important. 
\end{notation*}

Now, we are ready to recall (a simplified version of) \emph{games}:

\begin{definition}[Arenas \cite{yamada2019game}]
\label{DefArenas}
An \emph{\bfseries arena} is a triple $G = (M_G, \lambda_G, \Delta_G)$ such that:
\begin{itemize}

\item $M_G$ is a set of tagged elements, called \emph{\bfseries moves}, such that the set $\pi_1(M_G)$ of inner elements is finite, equipped with a distinguished subset $M_G^{\mathsf{Init}} \subseteq M_G$ of \emph{\bfseries initial moves};

\item $\lambda_G$ is a map $M_G \rightarrow \{ \mathsf{O}, \mathsf{P} \} \times \{ \mathsf{E}, \mathsf{I} \}$, where $\mathsf{O}$, $\mathsf{P}$, $\mathsf{E}$ and $\mathsf{I}$ are arbitrary, pairwise distinct symbols, called the \emph{\bfseries labeling function}, such that $\forall m \in M_G^{\mathsf{Init}} . \ \! \lambda_G(m) = (\mathsf{O}, \mathsf{E})$;

\item $\Delta_G$ is a bijection $M_G^{\mathsf{PI}} \stackrel{\sim}{\rightarrow} M_G^{\mathsf{OI}}$, where $M_G^{\mathsf{XY}} \stackrel{\mathrm{df. }}{=} \lambda_G^{-1}(\mathsf{X}, \mathsf{Y})$, $\mathsf{X} \in \{ \mathsf{O}, \mathsf{P} \}$ and $\mathsf{Y} \in \{ \mathsf{E}, \mathsf{I} \}$, called the \emph{\bfseries dummy function}, such that there is a \emph{finite} partial map $\delta_G$ on inner tags with $\forall [m_{\boldsymbol{t}}]_{\boldsymbol{e}} \in M_G^{\mathsf{PI}}, [n_{\boldsymbol{u}}]_{\boldsymbol{f}} \in M_G^{\mathsf{OI}} . \ \!\Delta_G([m_{\boldsymbol{t}}]_{\boldsymbol{e}}) = [n_{\boldsymbol{u}}]_{\boldsymbol{f}} \Rightarrow m = n \wedge \boldsymbol{e} = \boldsymbol{f} \wedge \boldsymbol{u} = \delta_G(\boldsymbol{t})$.

\end{itemize}

\end{definition}

\begin{definition}[Legal positions \cite{yamada2019game}]
\label{DefLegalPositions}
A \emph{\bfseries legal positions} of an arena $G$ is a finite sequence $\boldsymbol{s} \in M_G^\ast$ (equipped with \emph{pointers} given below) that satisfies:
\begin{itemize}

\item \textsc{(Alt)} $\forall i \in \{ 1, 2, \dots |\boldsymbol{s}| \} . \ \! \mathsf{Odd}(i) \Leftrightarrow \lambda_G^{\mathsf{OP}}(\boldsymbol{s}(i)) = \mathsf{O}$, where $\lambda_G^{\mathsf{OP}} \stackrel{\mathrm{df. }}{=} \pi_1 \circ \lambda_G$;

\item \textsc{(Jus)} To each occurrence $\boldsymbol{s}(i)$ of a non-initial move, a unique occurrence $\boldsymbol{s}(j)$ such that $0 < j < i$, $\mathsf{Even}(i) \Leftrightarrow \mathsf{Odd}(j)$ and $\lambda_G^{\mathsf{EI}}(\boldsymbol{s}(i)) \neq \lambda_G^{\mathsf{EI}}(\boldsymbol{s}(j)) \Rightarrow \lambda_G^{\mathsf{OP}}(\boldsymbol{s}(i)) = \mathsf{P}$, where $\lambda_G^{\mathsf{EI}} \stackrel{\mathrm{df. }}{=} \pi_2 \circ \lambda_G$, called the \emph{\bfseries justifier} of $\boldsymbol{s}(i)$ and written $\mathcal{J}_{\boldsymbol{s}}(\boldsymbol{s}(i))$, is assigned, for which we say that there is a \emph{\bfseries pointer} from $\boldsymbol{s}(i)$ to $\boldsymbol{s}(j)$ in $\boldsymbol{s}$;

\item \textsc{(EI)} $\boldsymbol{s} = \boldsymbol{t} . m . n . \boldsymbol{u}$ and $\lambda_G^{\mathsf{EI}}(m) \neq \lambda_G^{\mathsf{EI}}(n)$ imply $\lambda_G^{\mathsf{OP}}(m) = \mathsf{O}$;

\item \textsc{(Dum)} $\boldsymbol{s} = \boldsymbol{t} . p . o' . \boldsymbol{u} . p' . o$ (resp. $\boldsymbol{s} = \boldsymbol{t} . o' . \boldsymbol{u} . p' . o$), $o \in M_G^{\mathsf{OI}}$, $o' \in M_G^{\mathsf{OI}}$ (resp. $o' \in M_G^{\mathsf{OE}}$), and $o' = \mathcal{J}_{\boldsymbol{s}}(p')$ imply $o = \Delta_G(p')$ and $\mathcal{J}_{\boldsymbol{s}}(o) = p$ (resp. $\mathcal{J}_{\boldsymbol{s}}(o) = p'$).

\end{itemize}
We write $\mathscr{L}_G$ for the set of all legal positions of $G$.
\end{definition}

\begin{definition}[Games \cite{yamada2019game}]
\label{DefGames}
A \emph{\bfseries game} is a quadruple $G = (M_G, \lambda_G, \Delta_G, P_G)$ such that the triple $(M_G, \lambda_G, \Delta_G)$, also written $G$, is an arena, and $P_G \subseteq \mathscr{L}_G$ is a non-empty, prefix-closed set of \emph{\bfseries (valid) positions} of $G$.
A \emph{\bfseries play} of $G$ is a (finitely or infinitely) increasing sequence $(\boldsymbol{\epsilon}, m_1, m_1m_2, \dots)$ of positions of $G$.
\end{definition}

\begin{convention*}
Given a game $G$, a move $m \in M_G$ is specifically called an \emph{\bfseries O-move} (resp. a \emph{\bfseries P-move}) if $\lambda_G^{\mathsf{OP}}(m) = \mathsf{O}$ (resp. if $\lambda_G^{\mathsf{OP}}(m) = \mathsf{P}$), and \emph{\bfseries external} (resp. \emph{\bfseries internal}) if $\lambda_G^{\mathsf{EI}}(m) = \mathsf{E}$ (resp. if $\lambda_G^{\mathsf{EI}}(m) = \mathsf{I}$).
\end{convention*}

\begin{notation*}
Given a game $G$, we write $\boldsymbol{s} = \boldsymbol{t}$ for any positions $\boldsymbol{s}, \boldsymbol{t} \in P_G$ iff $\boldsymbol{s}$ and $\boldsymbol{t}$ are the same finite sequence of moves equipped with the same structure of pointers, i.e., $\forall i \in \{ 1, 2, \dots, |\boldsymbol{s}| \} . \ \! \boldsymbol{s}(i) \in M_G^{\mathsf{Init}} \Leftrightarrow \boldsymbol{t}(i) \in M_G^{\mathsf{Init}} \wedge (\mathcal{J}_{\boldsymbol{s}}(\boldsymbol{s}(i)) = \boldsymbol{s}(j) \wedge \mathcal{J}_{\boldsymbol{t}}(\boldsymbol{t}(i)) = \boldsymbol{t}(k) \Rightarrow j = k)$.
\end{notation*}

Let us remark again that games given in Definition~\ref{DefGames} are a simplified version of what is given in \cite{yamada2019game}, where \emph{enabling relations}, \emph{qustions/answers}, \emph{views}, \emph{visibility}, \emph{priority orders}, etc.
are omitted.
Of course, we could recall the original definition, but the simplified one suffices for the present work.
Moreover, it is easy to see that theorems and constructions on games in \cite{yamada2019game} recalled below are valid for the simplified games as well, for which simplified proofs are applied. 

\begin{definition}[Subgames \cite{yamada2019game}]
A game $H$ is a \emph{\bfseries subgame} of a game $G$, written $H \trianglelefteqslant G$, if $M_H \subseteq M_G$, $\lambda_H = \lambda_G \upharpoonright M_H$, $\Delta_H = \Delta_G \upharpoonright M_H$ and $P_H \subseteq P_G$.
\end{definition}

The intuition behind the notion of games has been explained at the beginning of Section~\ref{GamesAndStrategies}.
Here, let us comment briefly on the axioms on an arbitrary game $G$:
\begin{itemize}

\item The set $\pi_1(M_G)$ is finite so that each inner element of $G$ is `recognizable';

\item Each initial move of $G$ is an external O-move because internal moves are `invisible' to O, and O has to initiate a play of $G$ (by the axiom \textsc{Alt});

\item $\Delta_G(m) \in M_G^{\mathsf{OI}}$ is the `dummy' of each $m \in M_G^{\mathsf{PI}}$ such that they differ only in inner tags, and the inner tag of the former is obtainable from that of the latter by a finitary computation $\delta_G$;

\item The set $P_G$ is non-empty for the domain-theoretic reason \cite{amadio1998domains}, and prefix-closed because each non-empty position or `moment' of $G$ must have the previous `moment';

\item Each position $\boldsymbol{s}$ of $G$ is a finite sequence such that $(\lambda_G^{\mathsf{OP}})^\ast(\boldsymbol{s}) = \mathsf{OPOP} \dots$ (by \textsc{Alt}) equipped with justifiers on occurrences of non-initial moves (by the axiom \textsc{Jus}), where note that the first element $\boldsymbol{s}(1)$ must be an initial O-move;

\item The axiom \textsc{EI} states that each external/internal-parity change during a play of $G$ must be made by P because internal moves are `invisible' to O;

\item The axiom \textsc{Dum} requires that each internal O-move in a position of $G$ must be the mere `dummy' of the previous internal P-move, where the slightly involved pointers capture the phenomenon of \emph{concatenation} $\ddagger$ of games (Appendix~\ref{DefConcatenationOfGames}). 

\end{itemize}

A game is \emph{\bfseries normalized} if it has no internal moves.
There is an important operation (Definition~\ref{DefOmegaHidingOnGames}) that maps every game to a normalized one:
\begin{definition}[J-subsequences \cite{yamada2019game}]
\label{DefJSubsequences}
Let $G$ be a game, and $\boldsymbol{s} \in P_G$. 
A \emph{\bfseries j-subsequence} of $\boldsymbol{s}$ is a subsequence $\boldsymbol{t}$ of $\boldsymbol{s}$ equipped with pointers such that $\mathcal{J}_{\boldsymbol{t}}(n) = m$ iff there are elements $\boldsymbol{s}(i_1), \boldsymbol{s}(i_2), \dots, \boldsymbol{s}(i_{2k-1}), \boldsymbol{s}(i_{2k})$ of $\boldsymbol{s}$ eliminated in $\boldsymbol{t}$ such that $\mathcal{J}_{\boldsymbol{s}}(n) = \boldsymbol{s}(i_1) \wedge \mathcal{J}_{\boldsymbol{s}}(\boldsymbol{s}(i_1)) = \boldsymbol{s}(i_2) \dots \wedge \mathcal{J}_{\boldsymbol{s}}(\boldsymbol{s}(i_{2k-1})) = \boldsymbol{s}(i_{2k}) \wedge \mathcal{J}_{\boldsymbol{s}}(\boldsymbol{s}(i_{2k})) = m$.
\end{definition}

\begin{definition}[$\omega$-hiding operation on games \cite{yamada2019game}]
\label{DefOmegaHidingOnGames}
The \emph{\bfseries $\boldsymbol{\omega}$-hiding operation} $\mathcal{H}^\omega$ on games maps each game $G$ to the normalized one $\mathcal{H}^{\omega}(G)$ given by:
\begin{itemize}

\item $M_{\mathcal{H}^\omega(G)} \stackrel{\mathrm{df. }}{=} \{ m \in M_G \mid \lambda_G^{\mathsf{EI}}(m) = \mathsf{E} \ \! \}$;

\item $M_{\mathcal{H}^\omega(G)}^{\mathsf{Init}} \stackrel{\mathrm{df. }}{=} M_G^{\mathsf{Init}}$;

\item $\lambda_{\mathcal{H}^\omega(G)} \stackrel{\mathrm{df. }}{=} \lambda_G \upharpoonright M_{\mathcal{H}^\omega(G)}$;

\item $\Delta_{\mathcal{H}^\omega(G)} \stackrel{\mathrm{df. }}{=} \emptyset$;

\item $P_{\mathcal{H}^\omega(G)} \stackrel{\mathrm{df. }}{=} \{ \mathcal{H}^\omega(\boldsymbol{s}) \mid \boldsymbol{s} \in P_G \ \! \}$, where $\mathcal{H}^\omega(\boldsymbol{s})$ is the j-subsequence of $\boldsymbol{s}$ that consists of external moves of $G$.

\end{itemize}
\end{definition}

It is shown in \cite{yamada2016dynamic} that the $\omega$-hiding operation $\mathcal{H}^\omega$ on games is well-defined (e.g., the axiom \textsc{Alt} on $\mathcal{H}^\omega(G)$ is satisfied by the axioms \textsc{Alt} and \textsc{EI} on a given game $G$).
Originally in \cite{yamada2016dynamic}, the \emph{(one-step) hiding operation} $\mathcal{H}$ on games is defined to capture (small-step) operational semantics, and $\mathcal{H}^\omega$ is defined to be the countably-infinite iteration of $\mathcal{H}$.
Nevertheless, we need only $\mathcal{H}^\omega$ for the present work, and so we have introduced it directly as above.

\begin{notation*}
Given a game $G$, we often write $M_G^{\mathsf{Ext}}$ for the set $M_{\mathcal{H}^\omega(G)}$ of all external moves of $G$, and $M_G^{\mathsf{Int}}$ for the set $M_G \setminus M_G^{\mathsf{Ext}}$ of all internal moves of $G$.
\end{notation*}

On the other hand, a \emph{strategy} on a game is what tells P which move (together with its justifier) she should perform at each of her turns (i.e., odd-length positions) of the game.
More precisely, it is defined as follows: 
\begin{definition}[Strategies \cite{yamada2019game}]
\label{DefStrategies}
A \emph{\bfseries strategy} $\sigma$ on a game $G$, written $\sigma : G$, is a subset $\sigma \subseteq P_G^{\mathsf{Even}}$ that satisfies:
\begin{itemize}

\item \textsc{(S1)} It is non-empty and \emph{even-prefix-closed} (i.e., $\boldsymbol{s}mn \in \sigma \Rightarrow \boldsymbol{s} \in \sigma$);

\item \textsc{(S2)} It is \emph{deterministic} ($\boldsymbol{s}mn, \boldsymbol{s'}m'n' \in \sigma \wedge \boldsymbol{s}m = \boldsymbol{s'}m' \Rightarrow \boldsymbol{s}mn = \boldsymbol{s'}m'n'$).

\end{itemize}
\end{definition}

\begin{remark*}
We usually skip describing justifiers in strategies if they are obvious. 
\end{remark*}

\begin{proposition}[Strategies on subgames \cite{yamada2016dynamic}]
\label{PropStrategiesOnSubgames}
$A \trianglelefteqslant B \wedge \alpha : A \Rightarrow \alpha : B$.
\end{proposition}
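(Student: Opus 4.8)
The plan is to unfold the definitions of subgame and strategy and check the two axioms \textsc{S1} and \textsc{S2} transfer along the inclusion. First I would note that from $A \trianglelefteqslant B$ we have $P_A \subseteq P_B$, hence $P_A^{\mathsf{Even}} \subseteq P_B^{\mathsf{Even}}$, so any $\alpha \subseteq P_A^{\mathsf{Even}}$ is automatically a subset of $P_B^{\mathsf{Even}}$. Thus the only thing to verify is that the defining conditions of a strategy on $A$ are literally the same conditions when read as conditions on $B$ — i.e.\ that they do not secretly depend on the ambient game. Since $\alpha$ itself is unchanged, non-emptiness is immediate, and even-prefix-closedness ($\boldsymbol{s}mn \in \alpha \Rightarrow \boldsymbol{s} \in \alpha$) and determinism ($\boldsymbol{s}mn,\boldsymbol{s'}m'n' \in \alpha \wedge \boldsymbol{s}m = \boldsymbol{s'}m' \Rightarrow \boldsymbol{s}mn = \boldsymbol{s'}m'n'$) are purely combinatorial statements about the set $\alpha$ of move-sequences, making no reference to $P_A$, $M_A$, $\lambda_A$ or $\Delta_A$ beyond what is already encoded in $\alpha$. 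Hence they hold for $\alpha : B$ exactly as they held for $\alpha : A$.

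The one genuinely substantive point — the step I would flag as the only place where something could go wrong — is the meaning of the equality $\boldsymbol{s}m = \boldsymbol{s'}m'$ and $\boldsymbol{s}mn = \boldsymbol{s'}m'n'$ appearing in \textsc{S2} (and implicitly the prefix relation in \textsc{S1}), since by the Notation remark after Definition~\ref{DefGames} equality of positions includes agreement of the pointer structure, and pointers are determined relative to a game via its labeling and justification data. Here I would invoke $\lambda_A = \lambda_B \upharpoonright M_A$ and $\Delta_A = \Delta_B \upharpoonright M_A$: because every move occurring in a position of $\alpha$ lies in $M_A \subseteq M_B$, the parity $\lambda^{\mathsf{OP}}$, the external/internal tag $\lambda^{\mathsf{EI}}$, and the dummy function agree on all relevant moves, so a legal pointer structure on a sequence in $M_A^\ast$ witnessing membership in $P_A$ is the very same pointer structure witnessing membership in $P_B$. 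Consequently the two notions of position-equality restricted to sequences over $M_A$ coincide, and the combinatorial axioms are unaffected by the change of ambient game.

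Putting these together yields the claim: given $A \trianglelefteqslant B$ and $\alpha : A$, the set $\alpha$ is a subset of $P_B^{\mathsf{Even}}$ satisfying \textsc{S1} and \textsc{S2} with respect to $B$, hence $\alpha : B$. I do not expect any real obstacle; the proof is essentially a bookkeeping verification that the strategy axioms are "local" to $\alpha$ and that the subgame conditions guarantee coherence of the position-equality relation. If one wanted to be maximally careful, the only lemma worth isolating is that $\boldsymbol{s} \in \mathscr{L}_A \Rightarrow \boldsymbol{s} \in \mathscr{L}_B$ with the same pointers, which follows directly by inspecting the axioms \textsc{Alt}, \textsc{Jus}, \textsc{EI}, \textsc{Dum} of Definition~\ref{DefLegalPositions} and using $\lambda_A = \lambda_B \upharpoonright M_A$, $\Delta_A = \Delta_B \upharpoonright M_A$; this is already implicit in $P_A \subseteq P_B$ but makes the pointer-agreement explicit.
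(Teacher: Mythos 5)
Your proposal is correct: since $A \trianglelefteqslant B$ gives $P_A \subseteq P_B$ and hence $\alpha \subseteq P_A^{\mathsf{Even}} \subseteq P_B^{\mathsf{Even}}$, while \textsc{S1} and \textsc{S2} are conditions on the set $\alpha$ alone (with position-equality already determined by the pointer structures carried by the positions themselves, which the subgame conditions leave untouched), the conclusion $\alpha : B$ is immediate. The paper states this proposition without proof, citing the earlier work, and your definitional unfolding is exactly the intended verification.
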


A strategy $\sigma : G$ is \emph{\bfseries normalized} if no internal moves occur in any element of $\sigma$.
Similarly to the case of games, there is an operation that normalizes strategies:
\begin{definition}[$\omega$-hiding operation on strategies \cite{yamada2019game}]
The \emph{\bfseries $\boldsymbol{\omega}$-hiding operation} $\mathcal{H}^\omega$ on strategies maps $(\sigma : G) \mapsto \{ \boldsymbol{s} \natural \mathcal{H}_G^d \mid \boldsymbol{s} \in \sigma \ \! \}$, where $\boldsymbol{s} \natural \mathcal{H}_G^d \stackrel{\mathrm{df. }}{=} \begin{cases} \mathcal{H}_G^d(\boldsymbol{s}) &\text{if $\boldsymbol{s}$ ends with an external move;} \\ \boldsymbol{t} &\text{otherwise, where $\mathcal{H}_G^d(\boldsymbol{s}) = \boldsymbol{t}m$.} \end{cases}$
\end{definition}

Similarly to the case of games, the $\omega$-hiding operation $\mathcal{H}^\omega$ on strategies is originally defined in \cite{yamada2016dynamic} as the countably-infinite iteration of the \emph{(one-step) hiding operation} $\mathcal{H}$ on strategies.
But again, we need only the $\omega$-hiding one $\mathcal{H}^\omega$ for the present work, and therefore, we have directly defined it as above. 

\begin{theorem}[Hiding theorem \cite{yamada2016dynamic}]
\label{ThmHidingTheorem}
If $\sigma : G$, then $\mathcal{H}^\omega(\sigma) : \mathcal{H}^\omega(G)$.
\end{theorem}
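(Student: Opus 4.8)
The plan is to check directly that $\mathcal{H}^\omega(\sigma)$ satisfies the three defining conditions of Definition~\ref{DefStrategies} with respect to the normalized game $\mathcal{H}^\omega(G)$ of Definition~\ref{DefOmegaHidingOnGames}, namely $\mathcal{H}^\omega(\sigma) \subseteq P_{\mathcal{H}^\omega(G)}^{\mathsf{Even}}$, \textsc{(S1)}, and \textsc{(S2)}. We may freely use that $\mathcal{H}^\omega(G)$ is a well-defined game \cite{yamada2016dynamic}; in particular $P_{\mathcal{H}^\omega(G)}$ is prefix-closed, and \textsc{(Alt)} holds on it, which is exactly the statement that the external moves occurring in any $\boldsymbol{s} \in P_G$ alternate in O/P-parity (so that a prefix of $\boldsymbol{s}$ ending with an external move ends with an external O-move iff it has odd length). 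We also use the easy fact, immediate from Definition~\ref{DefJSubsequences}, that $\mathcal{H}^\omega$ commutes with taking prefixes: if $\boldsymbol{u} \preceq \boldsymbol{t}$ in $P_G$, then $\mathcal{H}^\omega(\boldsymbol{u})$ is the prefix of $\mathcal{H}^\omega(\boldsymbol{t})$ consisting of the external moves of $\boldsymbol{u}$, with the inherited pointers. For membership, fix $\boldsymbol{s} \in \sigma$, so $\boldsymbol{s} \in P_G^{\mathsf{Even}}$. If $\boldsymbol{s}$ ends with an external move, then $\boldsymbol{s} \natural \mathcal{H}_G^d = \mathcal{H}^\omega(\boldsymbol{s}) \in P_{\mathcal{H}^\omega(G)}$ by definition, and that last move is a P-move by \textsc{(Alt)}, so the j-subsequence $\mathcal{H}^\omega(\boldsymbol{s})$ has even length. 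If $\boldsymbol{s}$ ends with an internal move, let $m$ be the last external move of $\boldsymbol{s}$ (it exists, as $\boldsymbol{s}(1)$ is external); then $m$ is immediately followed in $\boldsymbol{s}$ by an internal move, so the external/internal parity changes there, whence $m$ is an O-move by \textsc{(EI)}. Hence $\mathcal{H}^\omega(\boldsymbol{s})$ ends with the external O-move $m$, so it has odd length and $\boldsymbol{s} \natural \mathcal{H}_G^d$, obtained by deleting $m$, has even length; it lies in $P_{\mathcal{H}^\omega(G)}$ by prefix-closedness, no pointer of it referring to the deleted final move.

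For \textsc{(S1)}: non-emptiness holds since $\boldsymbol{\epsilon} \in \sigma$ and $\boldsymbol{\epsilon} \natural \mathcal{H}_G^d = \boldsymbol{\epsilon}$. For even-prefix-closedness it is convenient first to record that $\mathcal{H}^\omega$ restricts to a bijection from $\sigma_{\mathsf{E}} \stackrel{\mathrm{df.}}{=} \{ \boldsymbol{u} \in \sigma \mid \boldsymbol{u} \text{ ends with an external move} \}$ onto $\mathcal{H}^\omega(\sigma)$: every $\boldsymbol{t} \natural \mathcal{H}_G^d$ equals $\mathcal{H}^\omega(\boldsymbol{u})$ for $\boldsymbol{u}$ the prefix of $\boldsymbol{t}$ ending at the occurrence of the last (external P-) move of $\boldsymbol{t} \natural \mathcal{H}_G^d$, and $\boldsymbol{u} \in \sigma$ by \textsc{(S1)} for $\sigma$, giving surjectivity; injectivity is the induction below. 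Granting this, given $\boldsymbol{s}mn \in \mathcal{H}^\omega(\sigma)$, let $\boldsymbol{u} \in \sigma_{\mathsf{E}}$ be its preimage and $\boldsymbol{v} \preceq \boldsymbol{u}$ the prefix of $\boldsymbol{u}$ ending at the occurrence of the last move of $\boldsymbol{s}$, which is an external P-move; then $\boldsymbol{v} \in \sigma_{\mathsf{E}}$ by \textsc{(S1)} for $\sigma$ and $\mathcal{H}^\omega(\boldsymbol{v}) = \boldsymbol{s}$, so $\boldsymbol{s} \in \mathcal{H}^\omega(\sigma)$.

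The heart of the argument is a ``forcing principle'' for the internal part of $\sigma$, from which both the injectivity of $\mathcal{H}^\omega$ on $\sigma_{\mathsf{E}}$ and \textsc{(S2)} for $\mathcal{H}^\omega(\sigma)$ follow. Suppose $\boldsymbol{w} \in \sigma$ ends with an external P-move. Then any one-move extension of $\boldsymbol{w}$ occurring inside a play of $\sigma$ begins with an \emph{external} O-move, since by \textsc{(EI)} only P, not O, may change the external/internal parity; and thereafter, as long as the external subsequence has not yet been extended by a further external P-move, every O-move that occurs is internal and hence, by \textsc{(Dum)}, is the uniquely determined dummy $\Delta_G$ of the preceding internal P-move \emph{together with} a uniquely determined justifier, while every P-move is uniquely determined by \textsc{(S2)} for $\sigma$ (applied to the relevant odd-length position, which lies in a prefix of the play and is therefore followed by a unique element of $\sigma$). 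Consequently, from $\boldsymbol{w}$, once the next external O-move $m'$ and its justifier are fixed, the entire ensuing internal computation and the next external P-move it produces, if any, are uniquely determined. Injectivity of $\mathcal{H}^\omega$ on $\sigma_{\mathsf{E}}$ now follows by induction on length: if $\boldsymbol{u}, \boldsymbol{u}' \in \sigma_{\mathsf{E}}$ satisfy $\mathcal{H}^\omega(\boldsymbol{u}) = \mathcal{H}^\omega(\boldsymbol{u}')$, then either this common value is $\boldsymbol{\epsilon}$, forcing $\boldsymbol{u} = \boldsymbol{u}' = \boldsymbol{\epsilon}$ (as $\boldsymbol{u}(1)$ would otherwise be an external initial move), or it has the form $\boldsymbol{s}mn$, in which case the prefixes of $\boldsymbol{u}, \boldsymbol{u}'$ realizing $\boldsymbol{s}$ coincide by the induction hypothesis, the subsequent external O-move $m$ and its justifier coincide (the justifier of the O-move $m$ being external by \textsc{(Jus)}, it is read off identically from the agreed external subsequences), and then the forcing principle makes the remainder identical. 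Finally, for \textsc{(S2)}: given $\boldsymbol{s}mn, \boldsymbol{s}mn' \in \mathcal{H}^\omega(\sigma)$, take their preimages $\boldsymbol{u}, \boldsymbol{u}' \in \sigma_{\mathsf{E}}$; the prefixes realizing $\boldsymbol{s}$ agree by injectivity, the external O-move $m$ and its justifier agree as before, and the forcing principle forces the internal computation and the resulting external P-move to agree, i.e.\ $n = n'$.

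The step I expect to be the main obstacle is the forcing principle, or rather its careful statement and proof: one has to interleave \textsc{(S2)} for $\sigma$ (which controls P-moves) with \textsc{(Dum)} (which controls internal O-moves and pins down their justifiers) and argue that the shape of the external subsequence genuinely forces each intermediate O-move to remain internal until the computation surfaces, so that two plays of $\sigma$ with identical visible behaviour cannot branch apart internally. A persistent secondary nuisance is the pointer bookkeeping when translating between a position of $G$ and its external j-subsequence; here the axiom \textsc{(Jus)} --- which prevents O-moves from being justified across the external/internal boundary --- is what keeps the surfacing external O-move's justifier unambiguous. An alternative, closer to \cite{yamada2016dynamic}, would be to prove the statement first for the one-step hiding operation and then iterate; the argument above is essentially the unfolded version of that.
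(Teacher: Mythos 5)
The paper does not actually prove this theorem: it is imported verbatim from the cited prior work on dynamic games, and the present text only remarks (for the game-level analogue) that well-definedness follows from \textsc{Alt} and \textsc{EI}. So there is no in-paper proof to match yours against; what can be said is that your direct verification is sound. The three checks are organized correctly, and the essential content is exactly where you locate it: the ``forcing principle'' that, from an even-length position ending in an external P-move, the next O-move is forced external by \textsc{EI}, and thereafter each internal O-move is pinned down (move and justifier) by \textsc{Dum} while each P-move is pinned down by \textsc{S2} applied to the even-length prefixes (which lie in $\sigma$ by even-prefix-closedness), so that the whole internal segment up to the next surfacing external P-move is determined by the visible history. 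Combined with your observation that a non-initial external O-move's justifier must itself be external by \textsc{Jus}, this gives both injectivity of $\mathcal{H}^\omega$ on the external-ending part of $\sigma$ and determinism of $\mathcal{H}^\omega(\sigma)$. The only blemishes are cosmetic edge cases: $\boldsymbol{\epsilon}$ does not literally ``end with an external move,'' so it should be adjoined to $\sigma_{\mathsf{E}}$ explicitly, and the membership argument for a position ending with an internal move tacitly assumes nonemptiness; both are trivially repaired. Your closing remark is also accurate --- the original development proves the statement for one-step hiding $\mathcal{H}$ and iterates, whereas your argument is the unfolded version for $\mathcal{H}^\omega$ directly; the unfolded route buys a self-contained proof within the simplified definitions of this paper, at the cost of having to carry the full internal segment in one induction rather than peeling off one layer of internality at a time.
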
 

It is worth noting here that normalized games and strategies are equivalent to conventional games and strategies \cite{abramsky1999game}, respectively \cite{yamada2016dynamic}. Moreover, identifying these equivalent variants of games and strategies, the $\omega$-hiding operations $\mathcal{H}^\omega$ on games and strategies form a 2-functor from the bicategory $\mathcal{DG}$ of dynamic games and strategies to the category (or the trivial 2-category) $\mathcal{G}$ of conventional games and strategies; see \cite{yamada2016dynamic} for the details.

\begin{convention*}
Henceforth, we shall employ the identification, i.e., we regard normalized games and strategies as conventional games and strategies. 
\end{convention*}

\begin{example}
The \emph{\bfseries terminal game} $T$ is given by $T \stackrel{\mathrm{df. }}{=} (\emptyset, \emptyset, \emptyset, \{ \boldsymbol{\epsilon} \})$.
The unique strategy $\top \stackrel{\mathrm{df. }}{=} \{ \boldsymbol{\epsilon} \}$ on $T$ is called the \emph{\bfseries top strategy}. 
\end{example}

\begin{example}
Consider the \emph{\bfseries boolean game} $\boldsymbol{2}$, whose maximal positions are $[\hat{q}] [\mathit{tt}]$ and $[\hat{q}] [\mathit{ff}]$, where $[\mathit{tt}]$ and $[\mathit{ff}]$ are both justified by $[\hat{q}]$, or diagrammatically:
\begin{center}
\begin{tabular}{ccccccccc}
&$\boldsymbol{2}$ &&&&&$\boldsymbol{2}$& \\ \cline{1-3} \cline{6-8}
& \tikzmark{cbool1} $[\hat{q}]$&&&&& \tikzmark{cbool2} $[\hat{q}]$& \\
& \tikzmark{dbool1} $[\mathit{tt}]$&&&&& \tikzmark{dbool2} $[\mathit{ff}]$ & 
\end{tabular}
\begin{tikzpicture}[overlay, remember picture, yshift=.25\baselineskip]
\draw [->] ({pic cs:dbool1}) [bend left] to ({pic cs:cbool1});
\draw [->] ({pic cs:dbool2}) [bend left] to ({pic cs:cbool2});
\end{tikzpicture}
\end{center}
where each arrow $m' \leftarrow m$ means that $m'$ is the justifier of $m$.
We shall employ this notation in the rest of the paper. 
These plays can be read as follows:
\begin{enumerate}

\item O's question $[\hat{q}]$ for an output (`What is your boolean value?');

\item P's answer $[\mathit{tt}]$ (resp. $[\mathit{ff}]$) to $[\hat{q}]$ (`My value is \emph{true} (resp. \emph{false})!').

\end{enumerate}

Formally, the game $\boldsymbol{2}$ is given by:
\begin{itemize}

\item $M_{\boldsymbol{2}} \stackrel{\mathrm{df. }}{=} \{ [\hat{q}], [\mathit{tt}], [\mathit{ff}] \}$; $M_{\boldsymbol{2}}^{\mathsf{Init}} \stackrel{\mathrm{df. }}{=} \{ [\hat{q}] \}$;

\item $\lambda_{\boldsymbol{2}} : [\hat{q}] \mapsto (\mathsf{O}, \mathsf{E}), [\mathit{tt}] \mapsto (\mathsf{P}, \mathsf{E}), [\mathit{ff}] \mapsto (\mathsf{P}, \mathsf{E})$;

\item $\Delta_{\boldsymbol{2}} \stackrel{\mathrm{df. }}{=} \emptyset$;

\item $P_{\boldsymbol{2}} \stackrel{\mathrm{df. }}{=} \mathsf{Pref}(\{ [\hat{q}] [\mathit{tt}], [\hat{q}] [\mathit{ff}] \})$, where $[\mathit{tt}]$ and $[\mathit{ff}]$ are both justified by $[\hat{q}]$.


\end{itemize}
\end{example}

As expected, the boolean values $\mathit{tt}, \mathit{ff} \in \mathbb{B} \stackrel{\mathrm{df. }}{=} \{ \mathit{tt}, \mathit{ff} \}$, where $\mathit{tt}$ (resp. $\mathit{ff}$) denotes \emph{true} (resp. \emph{false}), are represented respectively by the strategies $\underline{\mathit{tt}}, \underline{\mathit{ff}} : \boldsymbol{2}$ given by $\underline{\mathit{tt}} \stackrel{\mathrm{df. }}{=} \{ \boldsymbol{\epsilon}, [\hat{q}] [\mathit{tt}] \}$ and $\underline{\mathit{ff}} \stackrel{\mathrm{df. }}{=} \{ \boldsymbol{\epsilon}, [\hat{q}] [\mathit{ff}] \}$.

\begin{example}
Let us recall the \emph{\bfseries lazy natural number game} $\mathcal{N}$, which is a game for natural numbers, playing a fundamental role in the previous work \cite{yamada2019game}.

A maximal position of $\mathcal{N}$ is either of the following forms:
\begin{center}
\begin{tabular}{ccccccccc}
&$\mathcal{N}$ &&&&&$\mathcal{N}$& \\ \cline{1-3} \cline{6-8}
&\tikzmark{cLazyN10} $[\hat{q}]$&&&&&\tikzmark{cLazyN1} $[\hat{q}]$& \\
&\tikzmark{dLazyN10} $[\mathit{no}]$&&&&& \tikzmark{cLazyN2} $[\mathit{yes}]$ & \\
&&&&&& \tikzmark{dLazyN2} $[q]$ & \\
&&&&&& \tikzmark{cLazyN4} $[\mathit{yes}]$ & \\
&&&&&&$\vdots$& \\
&&&&&& \tikzmark{cLazyN6} $[q]$ & \\
&&&&&& \tikzmark{cLazyN7} $[\mathit{yes}]$ & \\
&&&&&& \tikzmark{dLazyN7} $[q]$ & \\
&&&&&& \tikzmark{dLazyN8} $[\mathit{no}]$ & \\
\end{tabular}
\begin{tikzpicture}[overlay, remember picture, yshift=.25\baselineskip]
\draw [->] ({pic cs:cLazyN2}) [bend left] to ({pic cs:cLazyN1});
\draw [->] ({pic cs:dLazyN2}) [bend left] to ({pic cs:cLazyN2});
\draw [->] ({pic cs:cLazyN4}) [bend left] to ({pic cs:dLazyN2});
\draw [->] ({pic cs:cLazyN7}) [bend left] to ({pic cs:cLazyN6});
\draw [->] ({pic cs:dLazyN7}) [bend left] to ({pic cs:cLazyN7});
\draw [->] ({pic cs:dLazyN8}) [bend left] to ({pic cs:dLazyN7});
\draw [->] ({pic cs:dLazyN10}) [bend left] to ({pic cs:cLazyN10});
\end{tikzpicture}
\end{center}
where the number $n$ of $[\mathit{yes}]$ in the positions ranges over $\mathbb{N}$, which represents the number intended by P.
Note that the initial question $[\hat{q}]$ must be distinguished from the non-initial one $[q]$ for the axiom \textsc{Jus} (i.e., a move cannot be both initial and non-initial).

Formally, the game $\mathcal{N}$ is defined by:
\begin{itemize}

\item $M_{\mathcal{N}} \stackrel{\mathrm{df. }}{=} \{ [\hat{q}], [q], [\mathit{yes}], [\mathit{no}] \}$; $M_{\mathcal{N}}^{\mathsf{Init}} \stackrel{\mathrm{df. }}{=} \{ [\hat{q}] \}$;

\item $\lambda_{\mathcal{N}} : [\hat{q}] \mapsto (\mathsf{O}, \mathsf{E}), [q] \mapsto (\mathsf{O}, \mathsf{E}), [\mathit{yes}] \mapsto (\mathsf{P}, \mathsf{E}), [\mathit{no}] \mapsto (\mathsf{P}, \mathsf{E})$;

\item $\Delta_{\mathcal{N}} \stackrel{\mathrm{df. }}{=} \emptyset$;

\item $P_{\mathcal{N}} \stackrel{\mathrm{df. }}{=} \mathsf{Pref}(\{ [\hat{q}] . ([\mathit{yes}] [q])^n . [\mathit{no}] \mid n \in \mathbb{N} \ \! \})$, where each occurrence of a non-initial move is justified by the last occurrence.

\end{itemize}

As expected, each natural number $n \in \mathbb{N}$ is represented by the strategy $\underline{n} : \mathcal{N}$ given by $\underline{n} \stackrel{\mathrm{df. }}{=} \mathsf{Pref}(\{ [\hat{q}] . ([\mathit{yes}] [q])^n . [\mathit{no}] \})^{\mathsf{Even}}$.
\if0
\begin{equation*}
\underline{n} \stackrel{\mathrm{df. }}{=} \begin{cases} \mathsf{Pref}(\{ [\hat{q}] [\mathit{yes}] . ([q] [\mathit{yes}])^{n-1} [q] . [\mathit{no}] \})^{\mathsf{Even}} &\text{if $n \geqslant 1$;} \\ \{ \boldsymbol{\epsilon}, [\hat{q}] [\mathit{no}] \} &\text{otherwise.} \end{cases}
\end{equation*}
\fi
\end{example}

\subsection{Constructions on games and strategies}
\label{Constructions}
Next, let us recall standard constructions on games and strategies.
First, there is \emph{tensor (product)} $\otimes$ on games.
A position $\boldsymbol{s}$ of the tensor $A \otimes B$ of games $A$ and $B$ is an interleaving mixture of a position $\boldsymbol{t}$ of $A$ and a position $\boldsymbol{u}$ of $B$ such that an  $AB$-parity change (i.e., a switch between $\boldsymbol{t}$ and $\boldsymbol{u}$) is always by O. 

For $A \otimes B$, we need to take a disjoint union $M_{A \otimes B} \stackrel{\mathrm{df. }}{=} M_A + M_B$ to distinguish moves of $A$ from those of $B$.
The previous work \cite{yamada2019game} formalizes `tags' for $M_{A \otimes B}$ by inner tags $\mathscr{W}$ and $\mathscr{E}$, e.g., typical positions of the tensor $\mathcal{N} \otimes \mathcal{N}$ are:\footnote{The diagram is only make it explicit which component $\mathcal{N}$ each move belongs to; the positions are just finite sequences \if0 $[(\hat{q}, \mathscr{W})] [(\mathit{yes}, \mathscr{W})] [(\hat{q}, \mathscr{E})] [(\mathit{no}, \mathscr{E})] [(q, \mathscr{W})] [(\mathit{no}, \mathscr{W})]$ and $[(\hat{q}, \mathscr{E})] [(\mathit{yes}, \mathscr{E})] [(q, \mathscr{E})] [(\mathit{yes}, \mathscr{E})] [(\hat{q}, \mathscr{W})] [(\mathit{no}, \mathscr{W})] [(q, \mathscr{E})] [(\mathit{no}, \mathscr{E})]$ \fi equipped with the pointers represented by the arrows in the diagrams.}
\begin{center}
\begin{tabular}{ccccccccc}
$\mathcal{N}$ & $\otimes$ & $\mathcal{N}$ &&&& $\mathcal{N}$ & $\otimes$ & $\mathcal{N}$ \\ \cline{1-3} \cline{7-9}
\tikzmark{LazyTensorC1} $[(\hat{q}, \mathscr{W})]$ &&&&&&&& \tikzmark{LazyTensorC6}  $[(\hat{q}, \mathscr{E})]$ \\
\tikzmark{LazyTensorD1} $[(\mathit{yes}, \mathscr{W})]$ \tikzmark{LazyTensorC2}&&&&&&&&\tikzmark{LazyTensorD6} $[(\mathit{yes}, \mathscr{E})]$ \tikzmark{LazyTensorC7} \\
&&$[(\hat{q}, \mathscr{E})]$ \tikzmark{LazyTensorC4}&&&&&& \tikzmark{LazyTensorC8} $[(q, \mathscr{E})]$ \tikzmark{LazyTensorD7} \\
&&$[(\mathit{no}, \mathscr{E})]$ \tikzmark{LazyTensorD4}&&&&&& \tikzmark{LazyTensorD8} $[(\mathit{no}, \mathscr{E})]$ \tikzmark{LazyTensorC9} \\
\tikzmark{LazyTensorC3} $[(q, \mathscr{W})]$ \tikzmark{LazyTensorD2}&&&&&& \tikzmark{LazyTensorC5} $[(\hat{q}, \mathscr{W})]$&& \\
\tikzmark{LazyTensorD3} $[(\mathit{no}, \mathscr{W})]$ &&&&&& \tikzmark{LazyTensorD5} $[(\mathit{no}, \mathscr{W})]$ && 
\end{tabular}
\begin{tikzpicture}[overlay, remember picture, yshift=.25\baselineskip]
\draw [->] ({pic cs:LazyTensorD1}) [bend left] to ({pic cs:LazyTensorC1});
\draw [->] ({pic cs:LazyTensorD2}) [bend right] to ({pic cs:LazyTensorC2});
\draw [->] ({pic cs:LazyTensorD3}) [bend left] to ({pic cs:LazyTensorC3});
\draw [->] ({pic cs:LazyTensorD4}) [bend right] to ({pic cs:LazyTensorC4});
\draw [->] ({pic cs:LazyTensorD5}) [bend left] to ({pic cs:LazyTensorC5});
\draw [->] ({pic cs:LazyTensorD6}) [bend left] to ({pic cs:LazyTensorC6});
\draw [->] ({pic cs:LazyTensorD7}) [bend right] to ({pic cs:LazyTensorC7});
\draw [->] ({pic cs:LazyTensorD8}) [bend left] to ({pic cs:LazyTensorC8});
\end{tikzpicture}
\end{center}

Formally, tensor of games is given as follows:
\begin{definition}[Tensor of games \cite{yamada2019game}]
The \emph{\bfseries tensor (product)} $A \otimes B$ of games $A$ and $B$ is defined by:
\begin{itemize}

\item $M_{A \otimes B} \stackrel{\mathrm{df. }}{=} \{ [(a, \mathscr{W})]_{\boldsymbol{e}} \mid [a]_{\boldsymbol{e}} \in M_A \ \! \} \cup \{ [(b, \mathscr{E})]_{\boldsymbol{f}} \mid [b]_{\boldsymbol{f}} \in M_B \ \! \}$;

\item $M_{A \otimes B}^{\mathsf{Init}} \stackrel{\mathrm{df. }}{=} \{ [(a, \mathscr{W})]_{\boldsymbol{e}} \mid [a]_{\boldsymbol{e}} \in M_A^{\mathsf{Init}} \ \! \} \cup \{ [(b, \mathscr{E})]_{\boldsymbol{f}} \mid [b]_{\boldsymbol{f}} \in M_B^{\mathsf{Init}} \ \! \}$;

\item $\lambda_{A \otimes B} ([(m, X)]_{\boldsymbol{e}}) \stackrel{\mathrm{df. }}{=} \begin{cases} \lambda_A([m]_{\boldsymbol{e}}) &\text{if $X = \mathscr{W}$;} \\ \lambda_B([m]_{\boldsymbol{e}}) &\text{otherwise;} \end{cases}$

\item $\Delta_{A \otimes B} ([(m, X)]_{\boldsymbol{e}}) \stackrel{\mathrm{df. }}{=} \begin{cases} [(m', \mathscr{W})]_{\boldsymbol{e}} &\text{if $X = \mathscr{W}$, where $\Delta_A([m]_{\boldsymbol{e}}) = [m']_{\boldsymbol{e}}$;} \\ [(m'', \mathscr{E})]_{\boldsymbol{e}} &\text{otherwise, where $\Delta_B([m]_{\boldsymbol{e}}) = [m'']_{\boldsymbol{e}}$;} \end{cases}$

\item $P_{A \otimes B} \stackrel{\mathrm{df. }}{=} \{ \boldsymbol{s} \in \mathscr{L}_{A \otimes B} \mid \boldsymbol{s} \upharpoonright \mathscr{W} \in P_A, \boldsymbol{s} \upharpoonright \mathscr{E} \in P_B \ \! \}$, where $\boldsymbol{s} \upharpoonright X$ is the j-subsequence of $\boldsymbol{s}$ that consists of moves of the form $[(m, X)]_{\boldsymbol{e}}$ changed into $[m]_{\boldsymbol{e}}$.

\end{itemize}
\end{definition}

As shown in \cite{abramsky1997semantics}, in fact only O can switch between the component games $A$ and $B$ during a play of the tensor $A \otimes B$ (by the axiom \textsc{Alt}).

Next, let us recall a fundamental construction $\oc$ on games, called \emph{exponential}, which is the countably infinite iteration of tensor, i.e., $!A \cong A \otimes A \otimes A \dots$ for each game $A$, where the `tag' for each copy of $A$ is typically given by a natural number $i \in \mathbb{N}$ \cite{abramsky1999game}. In contrast, as in \cite{yamada2019game}, we implement the `tags' in a finitary manner by extended outer tags: Each move $[m]_{\boldsymbol{e}} \in M_A$ is duplicated as $[m]_{\Lbag^{\underline{0}} \boldsymbol{f}^+ \Rbag^{\underline{0}} \hbar \boldsymbol{e}} \in M_{\oc A}$ for each $\boldsymbol{f} \in \mathcal{T}$, intended to be the `tag' $i \in \mathbb{N}$ such that $\mathit{ede}(\Lbag^{\underline{0}} \boldsymbol{f}^+ \Rbag^{\underline{0}}) = (i)$.

For example, some typical positions of the exponential $\oc \boldsymbol{2}$ are as follows:
\begin{center}
\begin{tabular}{ccccc}
$\oc \boldsymbol{2}$ &&&& $\oc \boldsymbol{2}$ \\ \cline{1-1} \cline{5-5}
\tikzmark{ExC1} $[\hat{q}]_{\Lbag^{\underline{0}} \Rbag^{\underline{0}} \hbar}$&&&&\tikzmark{ExC3} $[\hat{q}]_{\Lbag^{\underline{0}} \ell^2 \hbar \ell^3 \hbar \ell^5 \Rbag^{\underline{0}} \hbar}$ \\
\tikzmark{ExD1} $[\mathit{tt}]_{\Lbag^{\underline{0}} \Rbag^{\underline{0}} \hbar}$&&&&\tikzmark{ExD3} $[\mathit{tt}]_{\Lbag^{\underline{0}} \ell^2 \hbar \ell^3 \hbar \ell^5 \Rbag^{\underline{0}} \hbar}$ \\
\tikzmark{ExC2} $[\hat{q}]_{\Lbag^{\underline{0}} \ell \Rbag^{\underline{0}} \hbar}$&&&&\tikzmark{ExC4} $[\hat{q}]_{\Lbag^{\underline{0}} \Lbag^{\underline{1}} \ell^2 \hbar \ell^3 \Rbag^{\underline{1}} \hbar \ell^5 \Rbag^{\underline{0}} \hbar}$ \\
\tikzmark{ExD2} $[\mathit{ff}]_{\Lbag^{\underline{0}} \ell \Rbag^{\underline{0}} \hbar}$&&&&\tikzmark{ExD4} $[\mathit{tt}]_{\Lbag^{\underline{0}} \Lbag^{\underline{1}} \ell^2 \hbar \ell^3 \Rbag^{\underline{1}} \hbar \ell^5 \Rbag^{\underline{0}} \hbar}$ 
\end{tabular}
\begin{tikzpicture}[overlay, remember picture, yshift=.25\baselineskip]
\draw [->] ({pic cs:ExD1}) [bend left] to ({pic cs:ExC1});
\draw [->] ({pic cs:ExD2}) [bend left] to ({pic cs:ExC2});
\draw [->] ({pic cs:ExD3}) [bend left] to ({pic cs:ExC3});
\draw [->] ({pic cs:ExD4}) [bend left] to ({pic cs:ExC4});
\end{tikzpicture}
\end{center}

Formally, exponential on games is given as follows:
\begin{definition}[Exponential of games \cite{yamada2019game}]
\label{DefExponential}
The \emph{\bfseries exponential} $\oc A$ of a game $A$ is defined by:
\begin{itemize}

\item $M_{\oc A} \stackrel{\mathrm{df. }}{=} \{ [m]_{\Lbag^{\underline{0}} \boldsymbol{f}^+ \Rbag^{\underline{0}} \hbar \boldsymbol{e}} \mid [m]_{\boldsymbol{e}} \in M_A, \boldsymbol{f} \in \mathcal{T} \ \! \}$; 

\item $M_{\oc A}^{\mathsf{Init}} \stackrel{\mathrm{df. }}{=} \{ [m]_{\Lbag^{\underline{0}} \boldsymbol{f}^+ \Rbag^{\underline{0}} \hbar \boldsymbol{e}} \mid [m]_{\boldsymbol{e}} \in M_A^{\mathsf{Init}}, \boldsymbol{f} \in \mathcal{T} \ \! \}$; 

\item  $\lambda_{\oc A} ([m]_{\Lbag^{\underline{0}} \boldsymbol{f}^+ \Rbag^{\underline{0}} \hbar \boldsymbol{e}}) \stackrel{\mathrm{df. }}{=} \lambda_A([m]_{\boldsymbol{e}})$;

\item  $\Delta_{\oc A} ([m]_{\Lbag^{\underline{0}} \boldsymbol{f}^+ \Rbag^{\underline{0}} \hbar \boldsymbol{e}}) \stackrel{\mathrm{df. }}{=} [m']_{\Lbag^{\underline{0}} \boldsymbol{f}^+ \Rbag^{\underline{0}} \hbar \boldsymbol{e}}$, where $\Delta_A([m]_{\boldsymbol{e}}) = [m']_{\boldsymbol{e}}$;

\item $P_{\oc A} \stackrel{\mathrm{df. }}{=} \{ \boldsymbol{s} \in \mathscr{L}_{\oc A} \mid \forall \boldsymbol{f} \in \mathcal{T} . \ \! \boldsymbol{s} \upharpoonright \boldsymbol{f} \in P_A \wedge (\boldsymbol{s} \upharpoonright \boldsymbol{f} \neq \boldsymbol{\epsilon} \Rightarrow \forall \boldsymbol{g} \in \mathcal{T} . \ \! \boldsymbol{s} \upharpoonright \boldsymbol{g} \neq \boldsymbol{\epsilon} \Rightarrow \mathit{ede}(\boldsymbol{f}) \neq \mathit{ede}(\boldsymbol{g})) \ \! \}$, where $\boldsymbol{s} \upharpoonright \boldsymbol{f}$ is the j-subsequence of $\boldsymbol{s}$ that consists of moves of the form $[m]_{\Lbag^{\underline{0}} \boldsymbol{f}^+ \Rbag^{\underline{0}} \hbar \boldsymbol{e}}$ yet changed into $[m]_{\boldsymbol{e}}$.

\end{itemize}
\end{definition}

Note that a sub-expression $\boldsymbol{f}$ in an extended outer tag $[\_]_{\Lbag^{\underline{0}} \boldsymbol{f}^+ \Rbag^{\underline{0}} \hbar \boldsymbol{e}}$ that represents a natural number $i \in \mathbb{N}$, i.e., $\mathit{ede}(\Lbag^{\underline{0}} \boldsymbol{f}^+ \Rbag^{\underline{0}}) = (i)$, is unique in each $\boldsymbol{s} \in P_{\oc A}$.

Another central construction $\multimap$, called \emph{linear implication}, which originally comes from \emph{linear logic} \cite{girard1987linear}, captures the notion of \emph{linear functions}, i.e., functions that consume exactly one input to produce an output. 
A position of the linear implication $A \multimap B$ from $A$ to $B$ is almost like a position of the tensor $A \otimes B$ except the following three points:
\begin{enumerate}

\item The first element of the position must be a move of $B$; 

\item A change of $AB$-parity in the position must be made by P; 

\item Each occurrence of an initial move (called an \emph{initial occurrence}) of $A$ points to an initial occurrence of $B$.

\end{enumerate}

Thus, a typical position of the game $\boldsymbol{2} \multimap \boldsymbol{2}$ is the following:
\begin{center}
\begin{tabular}{ccc}
$\boldsymbol{2}$ & $\multimap$ & $\boldsymbol{2}$ \\ \hline 
&&\tikzmark{cmultimap1} $[\hat{q}_{\mathscr{E}}]$ \tikzmark{cmultimap3} \\
\tikzmark{cmultimap2} $[\hat{q}_{\mathscr{W}}]$ \tikzmark{dmultimap1}&& \\
\tikzmark{dmultimap2} $[x_{\mathscr{W}}]$&& \\
&&$[x_{\mathscr{E}}]$ \tikzmark{dmultimap3}
\end{tabular}
\begin{tikzpicture}[overlay, remember picture, yshift=.25\baselineskip]
\draw [->] ({pic cs:dmultimap1}) to ({pic cs:cmultimap1});
\draw [->] ({pic cs:dmultimap2}) [bend left] to ({pic cs:cmultimap2});
\draw [->] ({pic cs:dmultimap3}) [bend right] to ({pic cs:cmultimap3});
\end{tikzpicture}
\end{center}
where $x \in \mathbb{B}$, which can be read as follows:
\begin{enumerate}
\item O's question $[\hat{q}_{\mathscr{E}}]$ for an output (`What is your output?');
\item P's question $[\hat{q}_{\mathscr{W}}]$ for an input (`Wait, what is your input?');
\item O's answer, say, $[\mathit{tt}_{\mathscr{W}}]$ (resp. $[\mathit{ff}_{\mathscr{W}}]$), to $[\hat{q}_{\mathscr{W}}]$ (`My input is \emph{true} (resp. \emph{false}).');
\item P's answer, say, $[\mathit{tt}_{\mathscr{E}}]$ (resp. $[\mathit{ff}_{\mathscr{E}}]$), to $[\hat{q}_{\mathscr{E}}]$ (`My output is \emph{true} (resp. \emph{false}).').
\end{enumerate}

This play is actually by the \emph{copy-cat strategy} $\mathit{cp}_{\boldsymbol{2}} : \boldsymbol{2} \multimap \boldsymbol{2}$ that always `copy-cats' the last O-move, which is the game-semantic counterpart of the identity function $\mathit{id}_{\mathbb{B}} : \mathbb{B} \rightarrow \mathbb{B}$.

Also, there is the \emph{negation strategy} $\neg : \boldsymbol{2} \multimap \boldsymbol{2}$ that plays as follows:
\begin{center}
\begin{tabular}{ccccccccc}
$\boldsymbol{2}$ & $\multimap$ & $\boldsymbol{2}$ &&&& $\boldsymbol{2}$ & $\multimap$ & $\boldsymbol{2}$ \\ \cline{1-3} \cline{7-9}
&& \tikzmark{cneg2} $[\hat{q}_{\mathscr{E}}]$ \tikzmark{cneg6} &&&&&& \tikzmark{cneg1} $[\hat{q}_{\mathscr{E}}]$ \tikzmark{cneg5} \\
\tikzmark{cneg4} $[\hat{q}_{\mathscr{W}}]$ \tikzmark{dneg2} &&&&&& \tikzmark{cneg3} $[\hat{q}_{\mathscr{W}}]$ \tikzmark{dneg1} && \\
\tikzmark{dneg4} $[\mathit{tt}_{\mathscr{W}}]$&&&&&& \tikzmark{dneg3} $[\mathit{ff}_{\mathscr{W}}]$&& \\
&&$[\mathit{ff}_{\mathscr{E}}]$ \tikzmark{dneg6} &&&&&&$[\mathit{tt}_{\mathscr{E}}]$ \tikzmark{dneg5}
\end{tabular}
\begin{tikzpicture}[overlay, remember picture, yshift=.25\baselineskip]
\draw [->] ({pic cs:dneg1}) to ({pic cs:cneg1});
\draw [->] ({pic cs:dneg2}) to ({pic cs:cneg2});
\draw [->] ({pic cs:dneg3}) [bend left] to ({pic cs:cneg3});
\draw [->] ({pic cs:dneg4}) [bend left] to ({pic cs:cneg4});
\draw [->] ({pic cs:dneg5}) [bend right] to ({pic cs:cneg5});
\draw [->] ({pic cs:dneg6}) [bend right] to ({pic cs:cneg6});
\end{tikzpicture}
\end{center}

Formally, linear implication between games is given as follows:
\begin{definition}[Linear implication between games \cite{yamada2019game}]
\label{DefLinearImplication}
The \emph{\bfseries linear implication} $A \multimap B$ between games $A$ and $B$ is defined by:
\begin{itemize}

\item $M_{A \multimap B} \stackrel{\mathrm{df. }}{=} \{ [(a, \mathscr{W})]_{\boldsymbol{e}} \mid [a]_{\boldsymbol{e}} \in M_{\mathcal{H}^\omega(A)} \ \! \} \cup \{ [(b, \mathscr{E})]_{\boldsymbol{f}} \mid [b]_{\boldsymbol{f}} \in M_B \ \! \}$;

\item $M_{A \multimap B}^{\mathsf{Init}} \stackrel{\mathrm{df. }}{=} \{ [(b, \mathscr{E})]_{\boldsymbol{f}} \mid [b]_{\boldsymbol{f}} \in M_B^{\mathsf{Init}} \ \! \}$;

\item $\lambda_{A \multimap B} ([(m, X)]_{\boldsymbol{e}}) \stackrel{\mathrm{df. }}{=} \begin{cases} \overline{\lambda_{\mathcal{H}^\omega(A)}}([m]_{\boldsymbol{e}}) &\text{if $X = \mathscr{W}$;} \\ \lambda_B([m]_{\boldsymbol{e}}) \ &\text{otherwise;} \end{cases}$ 

\item $\Delta_{A \multimap B}([(b, \mathscr{E})]_{\boldsymbol{f}}) \stackrel{\mathrm{df. }}{=} [(b', \mathscr{E})]_{\boldsymbol{f}}$, where $\Delta_B([b]_{\boldsymbol{f}}) = [b']_{\boldsymbol{f}}$;

\item $P_{A \multimap B} \stackrel{\mathrm{df. }}{=} \{ \boldsymbol{s} \in \mathscr{L}_{\mathcal{H}^\omega(A) \multimap B} \mid \boldsymbol{s} \upharpoonright \mathscr{W} \in P_{\mathcal{H}^\omega(A)}, \boldsymbol{s} \upharpoonright \mathscr{E} \in P_B \ \! \}$, where pointers in $\boldsymbol{s}$ from occurrences of $A$ to those of $B$ are deleted in $\boldsymbol{s} \upharpoonright \mathscr{W}$ and $\boldsymbol{s} \upharpoonright \mathscr{E}$

\end{itemize}
where $\overline{\lambda_{G}} \stackrel{\mathrm{df. }}{=} \langle \overline{\lambda_{G}^\textsf{OP}}, \lambda_{G}^\mathsf{EI} \rangle$ and $\overline{\lambda_{G}^\textsf{OP}} (x) \stackrel{\mathrm{df. }}{=} \begin{cases} \mathsf{P} &\text{if $\lambda_{G}^\textsf{OP} (x) = \mathsf{O}$;} \\ \mathsf{O} &\text{otherwise} \end{cases}$ for any game $G$.
\end{definition}

Note that the roles of O and P are interchanged in the domain $A$ of the linear implication $A \multimap B$, for which $A$  is \emph{normalized} into $\mathcal{H}^\omega(A)$ because:
\begin{itemize}

\item Conceptually, P, who is O in $A$, should not be able to `see' internal moves of $A$;

\item Technically, for the axioms \textsc{EI} and \textsc{Dum} to be preserved under $\multimap$.

\end{itemize}

Note also that indeed only P can switch between the component games $A$ and $B$ during a play of the linear implication $A \multimap B$ by the axiom \textsc{Alt}; see \cite{abramsky1997semantics}.

Let us remark that the following plays, which correspond to \emph{constant} maps $x \mapsto m$, where $m \in \mathbb{B}$ is fixed, for all $x \in \mathbb{B}$ is also possible in $\boldsymbol{2} \multimap \boldsymbol{2}$: 
\begin{center}
\begin{tabular}{ccccccccc}
$\boldsymbol{2}$ & $\multimap$ & $\boldsymbol{2}$ &&&& $\boldsymbol{2}$ & $\multimap$ & $\boldsymbol{2}$ \\ \cline{1-3} \cline{7-9}
&& $[\hat{q}_{\mathscr{E}}]$ \tikzmark{cneg16} &&&&&& $[\hat{q}_{\mathscr{E}}]$ \tikzmark{cneg15} \\
&&$[\mathit{ff}_{\mathscr{E}}]$ \tikzmark{dneg16} &&&&&&$[\mathit{tt}_{\mathscr{E}}]$ \tikzmark{dneg15}
\end{tabular}
\begin{tikzpicture}[overlay, remember picture, yshift=.25\baselineskip]
\draw [->] ({pic cs:dneg15}) [bend right] to ({pic cs:cneg15});
\draw [->] ({pic cs:dneg16}) [bend right] to ({pic cs:cneg16});
\end{tikzpicture}
\end{center}
Therefore, strictly speaking, $\multimap$ should be called \emph{affine implication}, but we have followed the standard convention to call it linear implication.

Also, copy-cats are given formally as follows:
\begin{definition}[Copy-cats \cite{yamada2019game}]
The \emph{\bfseries copy-cat} on a normalized game $A$ is the strategy $\mathit{cp}_A : A \multimap A$ given by:
\begin{equation*}
\mathit{cp}_A \stackrel{\mathrm{df. }}{=} \{ \boldsymbol{s} \in P_{A \multimap A}^\mathsf{Even} \mid \forall \boldsymbol{t} \preceq{\boldsymbol{s}}. \ \mathsf{Even}(\boldsymbol{t}) \Rightarrow \boldsymbol{t} \upharpoonright \mathscr{W} = \boldsymbol{t} \upharpoonright \mathscr{E} \ \! \}.
\end{equation*}
\end{definition}

Another construction $\&$ on games, called \emph{product}, is similar to yet simpler than tensor: A position $\boldsymbol{s}$ of the product $A \& B$ is a position of $A$ or $B$ up to tags.
It is the product in the category $\mathcal{G}$ of normalized games and strategies, e.g., there is the \emph{pairing} $\langle \sigma, \tau \rangle : C \multimap (A \& B)$ of normalized strategies $\sigma : C \multimap A$ and $\tau : C \multimap B$ that plays as $\sigma$ (resp. $\tau$) if O initiates the play by a move of $A$ (resp. $B$); see \cite{abramsky1999game}.

\begin{notation*}
Tensor $\otimes$ and product $\&$ are both left associative, while linear implication $\multimap$ is right associative. 
Exponential $\oc$ precedes any other constructions on games, and tensor $\otimes$ and product $\&$ both precede linear implication $\multimap$.
\end{notation*}

For example, typical positions of the product $\boldsymbol{2} \& \boldsymbol{2}$ are as follows:
\begin{center}
\begin{tabular}{ccccccccc}
$\boldsymbol{2}$ & $\&$ & $\boldsymbol{2}$ &&&& $\boldsymbol{2}$ & $\&$ & $\boldsymbol{2}$ \\ \cline{1-3} \cline{7-9}
\tikzmark{cproduct1} $[\hat{q}_{\mathscr{W}}]$&&&&&&&&\tikzmark{cproduct2} $[\hat{q}_{\mathscr{E}}]$ \\
\tikzmark{dproduct1} $[\mathit{tt}_{\mathscr{W}}]$&&&&&&&&\tikzmark{dproduct2} $[\mathit{ff}_{\mathscr{E}}]$ 
\end{tabular}
\begin{tikzpicture}[overlay, remember picture, yshift=.25\baselineskip]
\draw [->] ({pic cs:dproduct1}) [bend left] to ({pic cs:cproduct1});
\draw [->] ({pic cs:dproduct2}) [bend left] to ({pic cs:cproduct2});
\end{tikzpicture}
\end{center}

Formally, product of games is given as follows: 
\begin{definition}[Product of games \cite{yamada2019game}]
The \emph{\bfseries product} $A \& B$ of games $A$ and $B$ is given by:
\begin{itemize}

\item $M_{A \& B} \stackrel{\mathrm{df. }}{=} \{ [(a, \mathscr{W})]_{\boldsymbol{e}} \mid [a]_{\boldsymbol{e}} \in M_A \ \! \} \cup \{ [(b, \mathscr{E})]_{\boldsymbol{f}} \mid [b]_{\boldsymbol{f}} \in M_B \ \! \}$;

\item $M_{A \& B}^{\mathsf{Init}} \stackrel{\mathrm{df. }}{=} \{ [(a, \mathscr{W})]_{\boldsymbol{e}} \mid [a]_{\boldsymbol{e}} \in M_A^{\mathsf{Init}} \ \! \} \cup \{ [(b, \mathscr{E})]_{\boldsymbol{f}} \mid [b]_{\boldsymbol{f}} \in M_B^{\mathsf{Init}} \ \! \}$;

\item $\lambda_{A \& B} ([(m, X)]_{\boldsymbol{e}}) \stackrel{\mathrm{df. }}{=} \begin{cases} \lambda_A([m]_{\boldsymbol{e}}) &\text{if $X = \mathscr{W}$;} \\ \lambda_B([m]_{\boldsymbol{e}}) &\text{otherwise;} \end{cases}$

\item $\Delta_{A \& B} ([(m, X)]_{\boldsymbol{e}}) \stackrel{\mathrm{df. }}{=} \begin{cases} [(m', \mathscr{W})]_{\boldsymbol{e}} &\text{if $X = \mathscr{W}$, where $\Delta_A([m]_{\boldsymbol{e}}) = [m']_{\boldsymbol{e}}$;} \\ [(m'', \mathscr{E})]_{\boldsymbol{e}} &\text{otherwise, where $\Delta_B([m]_{\boldsymbol{e}}) = [m'']_{\boldsymbol{e}}$;} \end{cases}$

\item $P_{A \& B} \stackrel{\mathrm{df. }}{=} \{ \boldsymbol{s} \in \mathscr{L}_{A \& B} \mid (\boldsymbol{s} \upharpoonright \mathscr{W} \in P_A \wedge \boldsymbol{s} \upharpoonright \mathscr{E} = \boldsymbol{\epsilon}) \vee (\boldsymbol{s} \upharpoonright \mathscr{W} = \boldsymbol{\epsilon} \wedge \boldsymbol{s} \upharpoonright \mathscr{E} \in P_B) \ \! \}$.

\end{itemize}
\end{definition}

As another example, the pairing $\langle \mathit{cp}_{\boldsymbol{2}}, \neg \rangle : \boldsymbol{2} \multimap \boldsymbol{2} \& \boldsymbol{2}$ plays as:
\begin{center}
\begin{tabular}{ccccccccccccc}
$\boldsymbol{2}$ & $\stackrel{\langle \mathit{cp}_{\boldsymbol{2}}, \neg \rangle}{\multimap}$ & $\boldsymbol{2}$ & $\&$ & $\boldsymbol{2}$ &&&& $\boldsymbol{2}$ & $\stackrel{\langle \mathit{cp}_{\boldsymbol{2}}, \neg \rangle}{\multimap}$ & $\boldsymbol{2}$ & $\&$ & $\boldsymbol{2}$  \\
\cline{1-5} \cline{9-13}
&& \tikzmark{cpair64} $[\hat{q}_{\mathscr{W} \mathscr{E}}]$ \tikzmark{cpair66} && &&&& &&&& \tikzmark{cpair61} $[\hat{q}_{\mathscr{E} \mathscr{E}}]$ \tikzmark{cpair63} \\
\tikzmark{cpair65} $[\hat{q}_{\mathscr{W}}]$ \tikzmark{dpair64} &&&&&& && \tikzmark{cpair62} $[\hat{q}_{\mathscr{W}}]$ \tikzmark{dpair61} &&&& \\
\tikzmark{dpair65} $[\mathit{tt}_{\mathscr{W}}]$&&&& &&&& \tikzmark{dpair62} $[\mathit{tt}_{\mathscr{W}}]$ &&&& \\
&&$[\mathit{tt}_{\mathscr{W} \mathscr{E}}]$ \tikzmark{dpair66} && && &&&&&& $[\mathit{ff}_{\mathscr{E} \mathscr{E}}]$ \tikzmark{dpair63}
\end{tabular}
\begin{tikzpicture}[overlay, remember picture, yshift=.25\baselineskip]
\draw [->] ({pic cs:dpair61}) to ({pic cs:cpair61});
\draw [->] ({pic cs:dpair62}) [bend left] to ({pic cs:cpair62});
\draw [->] ({pic cs:dpair63}) [bend right] to ({pic cs:cpair63});
\draw [->] ({pic cs:dpair64}) to ({pic cs:cpair64});
\draw [->] ({pic cs:dpair65}) [bend left] to ({pic cs:cpair65});
\draw [->] ({pic cs:dpair66}) [bend right] to ({pic cs:cpair66});
\end{tikzpicture}
\end{center}

Formally, pairing of strategies is given as follows:
\begin{definition}[Pairing of strategies \cite{yamada2019game}]
\label{Pairing}
Given normalized games $A$, $B$ and $C$, and normalized strategies $\sigma : C \multimap A$ and $\tau : C \multimap B$, the \emph{\bfseries pairing} $\langle \sigma, \tau \rangle : C \multimap A \& B$ of $\sigma$ and $\tau$ is defined by: 
\begin{align*}
\langle \sigma, \tau \rangle \stackrel{\mathrm{df. }}{=} & \ \! \{ \boldsymbol{s} \in \mathscr{L}_{C \multimap A \& B} \mid \boldsymbol{s} \upharpoonright (\mathscr{W} \multimap \mathscr{W} \mathscr{E}) \in \sigma, \boldsymbol{s} \upharpoonright (\mathscr{W} \multimap \mathscr{E} \mathscr{E}) = \boldsymbol{\epsilon} \ \! \} \\ &\cup \ \! \{ \boldsymbol{s} \in \mathscr{L}_{C \multimap A \& B} \mid \boldsymbol{s} \upharpoonright (\mathscr{W} \multimap \mathscr{E} \mathscr{E}) \in \tau, \boldsymbol{s} \upharpoonright (\mathscr{W} \multimap \mathscr{W} \mathscr{E}) = \boldsymbol{\epsilon} \ \! \}
\end{align*}
where $\boldsymbol{s} \upharpoonright (\mathscr{W} \multimap \mathscr{W} \mathscr{E})$ (resp. $\boldsymbol{s} \upharpoonright (\mathscr{W} \multimap \mathscr{E} \mathscr{E})$) is the j-subsequence of $\boldsymbol{s}$ that consists of moves $[(c, \mathscr{W})]_{\boldsymbol{e}}$ or $[((a, \mathscr{W}), \mathscr{E})]_{\boldsymbol{f}}$ with $[a] \in M_A$ (resp. or $[((b, \mathscr{E}), \mathscr{E})]_{\boldsymbol{g}}$ with $[b] \in M_B$) yet the latter changed into $[(a, \mathscr{E})]_{\boldsymbol{f}}$ (resp. $[(b, \mathscr{E})]_{\boldsymbol{g}}$).
\end{definition}

The constructions $\otimes$, $!$, $\multimap$ and $\&$ originally come from the corresponding ones in linear logic; see \cite{abramsky1994games}. 
Therefore, the usual \emph{implication} (or the \emph{function space}) $\Rightarrow$ is recovered by \emph{Girard translation} \cite{girard1987linear}: $A \Rightarrow B \stackrel{\mathrm{df. }}{=} \oc A \multimap B$.

Girard translation makes explicit the point that some functions need to refer to an input \emph{more than once} to produce an output, i.e., there are non-linear functions.
For instance, the strategy on $(\boldsymbol{2} \Rightarrow \boldsymbol{2}) \Rightarrow \boldsymbol{2}$ that computes the disjunction $f(\mathit{true}) \vee f(\mathit{false})$ for a given boolean function $f : \mathbb{B} \Rightarrow \mathbb{B}$ plays as: 
\begin{center}
\begin{tabular}{ccccc}
$\oc (\oc \boldsymbol{2}$ & $\multimap$ & $\boldsymbol{2})$ & $\multimap$ & $\boldsymbol{2}$ \\ \hline 
&&&& \tikzmark{chigher11} $[\hat{q}_{\mathscr{E}}]$ \tikzmark{chigher19} \\
&&\tikzmark{chigher12} $[\hat{q}_{\mathscr{E} \mathscr{W}}]_{\Lbag^{\underline{0}} \Rbag^{\underline{0}} \hbar}$ \tikzmark{dhigher11} && \\
\tikzmark{chigher13} $[\hat{q}_{\mathscr{W} \mathscr{W}}]_{\Lbag^{\underline{0}} \boldsymbol{f}^+ \Rbag^{\underline{0}} \hbar \Lbag^{\underline{0}} \Rbag^{\underline{0}} \hbar}$ \tikzmark{dhigher12} && && \\
\tikzmark{dhigher13} $[\mathit{tt}_{\mathscr{W} \mathscr{W}}]_{\Lbag^{\underline{0}} \boldsymbol{f}^+ \Rbag^{\underline{0}} \hbar \Lbag^{\underline{0}} \Rbag^{\underline{0}} \hbar}$ && && \\
&& \tikzmark{dhigher14} $[f(\mathit{tt})_{\mathscr{E} \mathscr{W}}]_{\Lbag^{\underline{0}} \Rbag^{\underline{0}} \hbar}$&& \\
&&\tikzmark{chigher18} $[\hat{q}_{\mathscr{E} \mathscr{W}}]_{\Lbag^{\underline{0}} \ell \Rbag^{\underline{0}} \hbar}$ \tikzmark{dhigher15} && \\
\tikzmark{chigher17} $[\hat{q}_{\mathscr{W} \mathscr{W}}]_{\Lbag^{\underline{0}} \boldsymbol{\tilde{f}}^+ \Rbag^{\underline{0}} \hbar \Lbag^{\underline{0}} \ell \Rbag^{\underline{0}} \hbar}$ \tikzmark{dhigher16} && && \\
\tikzmark{dhigher17} $[\mathit{ff}_{\mathscr{W} \mathscr{W}}]_{\Lbag^{\underline{0}} \boldsymbol{\tilde{f}}^+ \Rbag^{\underline{0}} \hbar \Lbag^{\underline{0}} \ell \Rbag^{\underline{0}} \hbar}$&& && \\
&& \tikzmark{dhigher18} $[f(\mathit{ff})_{\mathscr{E} \mathscr{W}}]_{\Lbag^{\underline{0}} \ell \Rbag^{\underline{0}} \hbar}$ && \\
&&&& $[(f(\mathit{tt})) \vee f(\mathit{ff}))_{\mathscr{E}}]$ \tikzmark{dhigher19}
\end{tabular}
\begin{tikzpicture}[overlay, remember picture, yshift=.25\baselineskip]
\draw [->] ({pic cs:dhigher11}) to ({pic cs:chigher11});
\draw [->] ({pic cs:dhigher12}) to ({pic cs:chigher12});
\draw [->] ({pic cs:dhigher13}) [bend left] to ({pic cs:chigher13});
\draw [->] ({pic cs:dhigher14}) [bend left] to ({pic cs:chigher12});
\draw [->] ({pic cs:dhigher15}) to ({pic cs:chigher11});
\draw [->] ({pic cs:dhigher16}) to ({pic cs:chigher18});
\draw [->] ({pic cs:dhigher17}) [bend left] to ({pic cs:chigher17});
\draw [->] ({pic cs:dhigher18}) [bend left] to ({pic cs:chigher18});
\draw [->] ({pic cs:dhigher19}) [bend right] to ({pic cs:chigher19});
\end{tikzpicture}
\end{center}
where $\boldsymbol{\epsilon}, \ell \in \mathcal{T}$ occurring in the intermediate $\oc \boldsymbol{2}$ are arbitrarily chosen by P, i.e., any $\boldsymbol{g}, \boldsymbol{g'} \in \mathcal{T}$ work as long as $\mathit{ede}(\boldsymbol{g}) \neq \mathit{ede}(\boldsymbol{g'})$, and $\boldsymbol{f}, \boldsymbol{\tilde{f}} \in \mathcal{T}$ occurring in the leftmost $\oc \oc \boldsymbol{2}$ are chosen by O.
In this play, P asks O \emph{twice} about an input strategy $\boldsymbol{2} \Rightarrow \boldsymbol{2}$.
Clearly, such a play is not possible on the linear implication $(\boldsymbol{2} \multimap \boldsymbol{2}) \multimap \boldsymbol{2}$ or $(\boldsymbol{2} \Rightarrow \boldsymbol{2}) \multimap \boldsymbol{2}$.

Next, recall that any normalized strategy $\phi : \oc A \multimap B$ gives its \emph{promotion} $\phi^\dagger : \oc A \multimap \oc B$ such that if $\phi$ plays, for instance, as:
\begin{center}
\begin{tabular}{ccc}
$\oc A$ & $\multimap$ & $B$ \\ \hline 
&&\tikzmark{cpromotion1} $[b^{(1)}_{\mathscr{E}}]$ \tikzmark{cpromotion3} \\
\tikzmark{cpromotion2} $[a^{(1)}_{\mathscr{W}}]_{\Lbag^{\underline{0}} \boldsymbol{f}^+ \Rbag^{\underline{0}} \hbar}$ \tikzmark{dpromotion1}&& \\
\tikzmark{dpromotion2} $[a^{(2)}_{\mathscr{W}}]_{\Lbag^{\underline{0}} \boldsymbol{f}^+ \Rbag^{\underline{0}} \hbar}$ && \\
&&$[b^{(2)}_{\mathscr{E}}]$ \tikzmark{dpromotion3}
\end{tabular}
\begin{tikzpicture}[overlay, remember picture, yshift=.25\baselineskip]
\draw [->] ({pic cs:dpromotion1}) to ({pic cs:cpromotion1});
\draw [->] ({pic cs:dpromotion2}) [bend left] to ({pic cs:cpromotion2});
\draw [->] ({pic cs:dpromotion3}) [bend right] to ({pic cs:cpromotion3});
\end{tikzpicture}
\end{center}
then $\phi^\dagger$ plays as:
\begin{center}
\begin{tabular}{ccc}
$\oc A$ & $\multimap$ & $\oc B$ \\ \hline 
&&\tikzmark{cpromotion11} $[b^{(1)}_{\mathscr{E}}]_{\Lbag^{\underline{0}} \boldsymbol{e}^+ \Rbag^{\underline{0}} \hbar}$ \tikzmark{cpromotion13} \\
\tikzmark{cpromotion12} $[a^{(1)}_{\mathscr{W}}]_{\Lbag^{\underline{0}} \Lbag^{\underline{1}} \boldsymbol{e}^{++} \Rbag^{\underline{1}} \hbar \Lbag^{\underline{1}} \boldsymbol{f}^{++} \Rbag^{\underline{1}} \Rbag^{\underline{0}} \hbar}$ \tikzmark{dpromotion11} && \\
\tikzmark{dpromotion12} $[a^{(2)}_{\mathscr{W}}]_{\Lbag^{\underline{0}} \Lbag^{\underline{1}} \boldsymbol{e}^{++} \Rbag^{\underline{1}} \hbar \Lbag^{\underline{1}} \boldsymbol{f}^{++} \Rbag^{\underline{1}} \Rbag^{\underline{0}} \hbar}$ && \\
&& $[b^{(2)}_{\mathscr{E}}]_{\Lbag \boldsymbol{e} \Rbag \hbar}$ \tikzmark{dpromotion13} \\
&& \tikzmark{cpromotion14} $[b^{(1)}_{\mathscr{E}}]_{\Lbag^{\underline{0}} \boldsymbol{e'}^{+} \Rbag^{\underline{0}} \hbar}$ \tikzmark{cpromotion16} \\
\tikzmark{cpromotion15} $[a^{(1)}_{\mathscr{W}}]_{\Lbag^{\underline{0}} \Lbag^{\underline{1}} \boldsymbol{e'}^{++} \Rbag^{\underline{1}} \hbar \Lbag^{\underline{1}} \boldsymbol{f}^{++} \Rbag^{\underline{1}} \Rbag^{\underline{0}} \hbar}$ \tikzmark{dpromotion14} && \\
\tikzmark{dpromotion15} $[a^{(2)}_{\mathscr{W}}]_{\Lbag^{\underline{0}} \Lbag^{\underline{1}} \boldsymbol{e'}^{++} \Rbag^{\underline{1}} \hbar \Lbag^{\underline{1}} \boldsymbol{f}^{++} \Rbag^{\underline{1}} \Rbag^{\underline{0}} \hbar}$ && \\
&& $[b^{(2)}_{\mathscr{E}}]_{\Lbag^{\underline{0}} \boldsymbol{e'}^+ \Rbag^{\underline{0}} \hbar}$ \tikzmark{dpromotion16}
\end{tabular}
\begin{tikzpicture}[overlay, remember picture, yshift=.25\baselineskip]
\draw [->] ({pic cs:dpromotion11}) to ({pic cs:cpromotion11});
\draw [->] ({pic cs:dpromotion12}) [bend left] to ({pic cs:cpromotion12});
\draw [->] ({pic cs:dpromotion13}) [bend right] to ({pic cs:cpromotion13});
\draw [->] ({pic cs:dpromotion14}) to ({pic cs:cpromotion14});
\draw [->] ({pic cs:dpromotion15}) [bend left] to ({pic cs:cpromotion15});
\draw [->] ({pic cs:dpromotion16}) [bend right] to ({pic cs:cpromotion16});
\end{tikzpicture}
\end{center}
where $\boldsymbol{e}, \boldsymbol{e'} \in \mathcal{T}$ are chosen by O, and $\boldsymbol{f} \in \mathcal{T}$ by P.
That is, $\phi^\dagger$ plays as $\phi$ for each \emph{thread} in a position of $\oc A \multimap \oc B$ that corresponds to a position of $\oc A \multimap B$ (n.b., a thread is a certain kind of a j-subsequence; see \cite{abramsky1999game,yamada2019game} for the precise definition). 

Formally, promotion on strategies is given as follows:
\begin{definition}[Promotion on strategies \cite{yamada2019game}]
\label{DefPromotionOfStrategies}
Given normalized games $A$ and $B$, and a normalized strategy $\phi : \oc A \multimap B$, the \emph{\bfseries promotion} $\phi^{\dagger} : \oc A \multimap \oc B$ of $\phi$ is defined by: 
\begin{equation*}
\phi^{\dagger} \stackrel{\mathrm{df. }}{=} \{ \boldsymbol{s} \in \mathscr{L}_{\oc A \multimap \oc B} \mid \forall \boldsymbol{e} \in \mathcal{T} . \ \! \boldsymbol{s} \upharpoonright \boldsymbol{e} \in \phi \ \! \}
\end{equation*}
where $\boldsymbol{s} \upharpoonright \boldsymbol{e}$ is the j-subsequence of $\boldsymbol{s}$ that consists of moves of the form $[(b, \mathscr{E})]_{\Lbag^{\underline{0}} \boldsymbol{e}^+ \Rbag^{\underline{0}} \hbar \boldsymbol{e'}}$ with $[b]_{\boldsymbol{e'}} \in M_B$ or $[(a, \mathscr{W})]_{\Lbag^{\underline{0}} \Lbag^{\underline{1}} \boldsymbol{e}^{++} \Rbag^{\underline{1}} \hbar \Lbag^{\underline{1}} \boldsymbol{f}^{++} \Rbag^{\underline{1}} \Rbag^{\underline{0}} \hbar \boldsymbol{f'}}$ with $[a]_{\boldsymbol{f'}} \in M_A$, which are respectively changed into $[(b, \mathscr{E})]_{\boldsymbol{e'}}$ or $[(a, \mathscr{W})]_{\Lbag^{\underline{0}} \boldsymbol{f}^+ \Rbag^{\underline{0}} \hbar \boldsymbol{f'}}$.
\end{definition}

Constructions introduced so far preserve normalization of games and strategies; they are in fact employed for conventional games and strategies \cite{abramsky1999game}.
This point no longer holds as soon as we introduce \emph{concatenations} $\ddagger$ on games and strategies, which are first introduced in \cite{yamada2016dynamic}.
The idea is to decompose the standard \emph{composition} $\phi ; \psi : A \multimap C$ of normalized strategies $\phi : A \multimap B$ and $\psi : B \multimap C$, where $A$, $B$ and $C$ are normalized games, into \emph{concatenation} $\phi \ddagger \psi : (A \multimap B) \ddagger (B \multimap C)$ plus \emph{hiding} $\mathcal{H}^\omega$; in fact, the work \cite{yamada2016dynamic} shows $\mathcal{H}^\omega((A \multimap B) \ddagger (B \multimap C)) \trianglelefteqslant A \multimap C$ and $\mathcal{H}^\omega(\phi \ddagger \psi) = \phi ; \psi$, whence $\phi ; \psi : A \multimap C$ by Proposition~\ref{PropStrategiesOnSubgames} and Theorem~\ref{ThmHidingTheorem}.

In addition, the work \cite{yamada2016dynamic} generalizes this phenomenon by concatenation on (not necessarily normalized) games $J$ and $K$ such that $\mathcal{H}^\omega(J) \trianglelefteqslant A \multimap B$ and $\mathcal{H}^\omega(K) \trianglelefteqslant B \multimap C$ in such a way that satisfies $\mathcal{H}^\omega(J \ddagger K) \trianglelefteqslant A \multimap C$.
Also, it defines concatenation $\iota \ddagger \kappa : J \ddagger K$ on (not necessarily normalized) strategies $\iota : J$ and $\kappa : K$.
\if0
by:
\begin{equation*}
\sigma \ddagger \tau \stackrel{\mathrm{df. }}{=} \{ \boldsymbol{s} \in P_{J \ddagger K} \mid \boldsymbol{s} \upharpoonright J \in \sigma, \boldsymbol{s} \upharpoonright K \in \tau, \boldsymbol{s} \upharpoonright B^{[0]}, B^{[1]} \in \mathit{pr}_B \ \! \}
\end{equation*}
where the `tags' $(\_)^{[0]}$ and $(\_)^{[1]}$ are to distinguish the two copies of $B$, in J and K, respectively, $\boldsymbol{s} \upharpoonright B^{[0]}, B^{[1]}$ is the j-subsequence of $\boldsymbol{s}$ that consists of moves of $B^{[0]}$ or $B^{[1]}$, and $\mathit{pr}_B \stackrel{\mathrm{df. }}{=} \{ \bm{s} \in P_{B^{[0]} \multimap B^{[1]}} \mid \forall \bm{t} \preceq{\bm{s}}. \ \mathsf{Even}(\bm{t}) \Rightarrow \bm{t} \upharpoonright B^{[0]} = \bm{t} \upharpoonright B^{[1]} \ \! \}$.
\fi
This enables us to apply concatenations on games and strategies in an \emph{iterated} manner, e.g., we may obtain $(\phi \ddagger \psi) \ddagger \mu : ((A \multimap B) \ddagger (B \multimap C)) \ddagger C \multimap D$ for any normalized game $D$ and strategy $\mu : C \multimap D$.

Roughly, a position $\boldsymbol{s}$ of the concatenation $J \ddagger K$ is an interleaving mixture of a positions $\boldsymbol{t}$ of $J$ and a position $\boldsymbol{u}$ of $K$, which are `synchronized' to each other via moves of $B$: Each O-move of $B$ occurring in $\boldsymbol{s}$ is a mere `dummy' of the last P-move. That is, a position of $J \ddagger K$ begins with an initial occurrence of $\boldsymbol{u}$, and then a play of $K$ proceeds until a P-move of $B$ occurs; when a P-move of $B$ occurs in $K$, then it is copied and performed as an O-move in $J$, and then a play of $J$ proceeds until a P-move of $B$ occurs; if a P-move of $B$ occurs in $J$, then it is copied and performed as an O-move in $K$, and then a play of $K$ resumes and proceeds until a P-move of $B$ occurs, and so on.
In addition, moves of $B$ get \emph{internal} in $J \ddagger K$, while other external moves remain external.

Specifically, the previous work \cite{yamada2019game} formalizes the `tags' for the concatenation $J \ddagger K$ as follows:
\begin{itemize}

\item It does not change moves of $A$ or $C$, i.e., $[a_{\mathscr{W}}]_{\boldsymbol{e}} \in M_J^{\mathsf{Ext}}$ or $[c_{\mathscr{E}}]_{\boldsymbol{f}} \in M_K^{\mathsf{Ext}}$;

\item It changes moves of $B$ in $J$, i.e., $[b_{\mathscr{E}}]_{\boldsymbol{g}} \in M_J^{\mathsf{Ext}}$, into $[b_{\mathscr{ES}}]_{\boldsymbol{g}}$;

\item It changes moves of $B$ in $K$, i.e., $[b_{\mathscr{W}}]_{\boldsymbol{g}} \in M_K^{\mathsf{Ext}}$, into $[b_{\mathscr{WN}}]_{\boldsymbol{g}}$;

\item It changes internal moves $[m]_{\boldsymbol{l}}$ of $J$ into $[(m_{\mathscr{S}}]_{\boldsymbol{l}}$;

\item It changes internal moves $[n]_{\boldsymbol{r}}$ of $K$ into $[n_{\mathscr{N}}]_{\boldsymbol{r}}$.

\end{itemize}
Of course, this implementation of `tags' for $J \ddagger K$ is far from canonical, but the point is that it certainly achieves the required subgame relation $\mathcal{H}^\omega(J \ddagger K) \trianglelefteqslant A \multimap C$.

On the other hand, the concatenation $\iota \ddagger \kappa : J \ddagger K$ plays as $\iota$ if the last O-moves is of $J$, and as $\kappa$ otherwise.

For instance, the concatenation $\neg \ddagger \neg : (\boldsymbol{2} \multimap \boldsymbol{2}) \ddagger (\boldsymbol{2} \multimap \boldsymbol{2})$ of negation $\neg : \boldsymbol{2} \multimap \boldsymbol{2}$ with itself plays as follows:
\begin{center}
\begin{tabular}{ccccccc}
$(\boldsymbol{2}$ & $\multimap$ & $\boldsymbol{2})$ & $\ddagger$ & $(\boldsymbol{2}$ & $\multimap$ & $\boldsymbol{2})$ \\ \cline{1-7} 
&&&&&&\tikzmark{cCon1} $[\hat{q}_{\mathscr{E}}]$ \tikzmark{cCon7} \\
&&&&\tikzmark{cCon2} \fbox{$[\hat{q}_{\mathscr{WN}}]$} \tikzmark{dCon1}&& \\
&&\tikzmark{cCon3} \fbox{$[\hat{q}_{\mathscr{ES}}]$} \tikzmark{dCon2} &&&& \\
\tikzmark{cCon4} $[\hat{q}_{\mathscr{W}}]$ \tikzmark{dCon3} &&&&&& \\
\tikzmark{dCon4} $[x_{\mathscr{W}}]$ &&&&&& \\
&& \tikzmark{dCon5} \fbox{$[\overline{x}_{\mathscr{ES}}]$} &&&& \\
&&&& \tikzmark{dCon6} \fbox{$[\overline{x}_{\mathscr{WN}}]$} && \\
&&&&&& $[x_{\mathscr{E}}]$ \tikzmark{dCon7}
\end{tabular}
\begin{tikzpicture}[overlay, remember picture, yshift=.25\baselineskip]
\draw [->] ({pic cs:dCon1}) to ({pic cs:cCon1});
\draw [->] ({pic cs:dCon2}) to ({pic cs:cCon2});
\draw [->] ({pic cs:dCon3}) to ({pic cs:cCon3});
\draw [->] ({pic cs:dCon4}) [bend left] to ({pic cs:cCon4});
\draw [->] ({pic cs:dCon5}) [bend left] to ({pic cs:cCon3});
\draw [->] ({pic cs:dCon6}) [bend left] to ({pic cs:cCon2});
\draw [->] ({pic cs:dCon7}) [bend right] to ({pic cs:cCon7});
\end{tikzpicture}
\end{center}
where $x \in \mathbb{B}$ and $\overline{x} \stackrel{\mathrm{df. }}{=} \begin{cases} \mathit{ff} &\text{if $x = \mathit{tt}$;} \\ \mathit{tt} &\text{otherwise.} \end{cases}$
Moves with the inner tag $(\_)_{{\mathscr{WN}}}$ or $(\_)_{{\mathscr{ES}}}$ are \emph{internal}, for which we have marked them by square boxes just for clarity.
In the above play, the two copies of $\neg$ communicate to each other by `synchronizing' the codomain $\boldsymbol{2}$ of the left $\neg$ and the domain $\boldsymbol{2}$ of the right $\neg$, for which P also plays the role of O in these intermediate games by `copying' her last P-moves.
This phenomenon is what the axiom \textsc{Dum} (Definition~\ref{DefLegalPositions}) captures abstractly.

\begin{remark*}
Crucially, the game-semantic PCF-computations \cite{yamada2019game} employ \emph{dynamic} games and strategies \cite{yamada2016dynamic} for composition of dynamic strategies is concatenation, which, unlike composition of conventional strategies, \emph{keeps internal moves}. 
The point is that internal moves represent \emph{step-by-step} processes in computation, or `internal calculation' by P; thus, they enable the intrinsic, non-inductive, non-axiomatic definition of `effective computability' of dynamic strategies \cite{yamada2019game}. 
\end{remark*}

The formal definitions of concatenations on games and strategies are slightly involved, and hence, we leave their details to Appendices~\ref{DefConcatenationOfGames} and \ref{DefConcatenationAndCompositionOfStrategies}.

It is now appropriate to recall that the category $\mathcal{G}$ of conventional games and strategies has normalized games as objects and strategies $\phi : A \Rightarrow B$ as morphisms from $A$ to $B$; the composition of $\phi : A \rightarrow B$ and $\psi : B \rightarrow C$ in $\mathcal{G}$ is given by $\phi^\dagger ; \psi : A \rightarrow C$, and the identity on $A$ in $\mathcal{G}$ is the dereliction $\mathit{der}_A : A \rightarrow A$.
The bicategory $\mathcal{DG}$ \cite{yamada2016dynamic} generalizes the category $\mathcal{G}$: Objects of $\mathcal{DG}$ are normalized games, and 1-cells $A \rightarrow B$ are strategies $\phi : G$ such that $\mathcal{H}^\omega(G) \trianglelefteqslant A \Rightarrow B$; horizontal composition of 1-cells in $\mathcal{DG}$ is given by $(\_)^\dagger \ddagger (\_)$, and horizontal identities in $\mathcal{DG}$ are derelictions (where 2-cells in $\mathcal{DG}$ are the equivalence of 1-cells up to $\mathcal{H}^\omega$).
 
Accordingly, pairing and promotion of strategies in $\mathcal{DG}$ are generalized, for which product and exponential of games are also generalized in a straightforward manner to \emph{pairing and promotion of games} as follows.
Given games $L$ and $R$ such that $\mathcal{H}^\omega(L) \trianglelefteqslant C \multimap A$ and $\mathcal{H}^\omega(R) \trianglelefteqslant C \multimap B$ for normalized games $A$, $B$ and $C$, there is the \emph{pairing} $\langle L, R \rangle$ of $L$ and $R$ such that a position of $\langle L, R \rangle$ is a position of $L$ or $R$ up to inner tags. 
Specifically, to establish the required subgame relation $\mathcal{H}^\omega(\langle L, R \rangle) \trianglelefteqslant C \multimap A \& B$, the previous work \cite{yamada2019game} formalizes the `tags' for $\langle L, R \rangle$ by:
\begin{itemize}

\item Keeping external moves of the form $[c_{\mathscr{W}}]_{\boldsymbol{e}}$ of $L$ or $R$ unchanged, where $[c]_{\boldsymbol{e}}$ must be a move of $C$ by the definition of $\multimap$ (Definition~\ref{DefLinearImplication}); 

\item Changing external moves of the form $[a_{\mathscr{E}}]_{\boldsymbol{f}}$ of $L$, where $[a]_{\boldsymbol{f}}$ must be a move of $A$ by the definition of $\multimap$, into $[a_{\mathscr{WE}}]_{\boldsymbol{f}}$; 

\item Changing external moves of the form $[b_{\mathscr{E}}]_{\boldsymbol{g}}$ of $R$, where $[b]_{\boldsymbol{g}}$ must be a move of $B$ by the definition of $\multimap$, into $[b_{\mathscr{EE}}]_{\boldsymbol{g}}$; 

\item Changing internal moves $[l]_{\boldsymbol{h}}$ of $L$ into $[l_{\mathscr{S}}]_{\boldsymbol{h}}$; 

\item Changing internal moves $[r]_{\boldsymbol{k}}$ of $R$ into $[r_{\mathscr{N}}]_{\boldsymbol{k}}$.

\end{itemize} 

It is then easy to see that we may form the pairing $\langle \sigma, \tau \rangle : \langle L, R \rangle$ of strategies $\sigma : L$ and $\tau : R$ that plays as $\sigma$ if O begins the play by an initial move of $L$, and by $\tau$ otherwise. 
See Appendices~\ref{DefPairingOfGames} and \ref{DefGeneralizedPairingOfStrategies} for their formal definitions. 

Next, given a game $G$ such that $\mathcal{H}^\omega(G) \trianglelefteqslant \oc A \multimap B$ for some normalized games $A$ and $B$, there is the \emph{promotion} $G^\dagger$ of $G$ that coincides with the exponential $\oc G$ of $G$ up to tags. Specifically, to establish the required subgame relation $\mathcal{H}^\omega(G^\dagger) \trianglelefteqslant \oc A \multimap \oc B$, the previous work \cite{yamada2019game} formalizes the `tags' for $G^\dagger$ as follows:
\begin{itemize}

\item It duplicates moves of $G$ coming from $\oc A$, i.e., ones of the form $[a_{\mathscr{W}}]_{\Lbag^{\underline{0}} \boldsymbol{f}^+ \Rbag^{\underline{0}} \hbar \boldsymbol{e}}$, as $[a_{\mathscr{W}}]_{\Lbag^{\underline{0}} \Lbag^{\underline{1}} \boldsymbol{g}^{++} \Rbag^{\underline{1}} \hbar \Lbag^{\underline{1}} \boldsymbol{f}^{++} \Rbag^{\underline{1}} \Rbag^{\underline{0}} \hbar \boldsymbol{e}}$ for each $\boldsymbol{g} \in \mathcal{T}$;

\item It duplicates moves of $G$ coming from $B$, i.e., ones of the form $[b_{\mathscr{E}}]_{\boldsymbol{e}}$, as $[b_{\mathscr{E}}]_{\Lbag^{\underline{0}} \boldsymbol{g}^+ \Rbag^{\underline{0}} \hbar \boldsymbol{e}}$ for each $\boldsymbol{g} \in \mathcal{T}$;

\item It duplicates internal moves $[m]_{\boldsymbol{e}}$ of $G$ as $[m_{\mathscr{S}}]_{\Lbag^{\underline{0}} \boldsymbol{g}^+ \Rbag^{\underline{0}} \hbar \boldsymbol{e}}$ for each $\boldsymbol{g} \in \mathcal{T}$.

\end{itemize}

Given a strategy $\phi : G$, there is the \emph{promotion} $\phi^\dagger : G^\dagger$ that plays as $\phi$ for each copy of $G$ in $G^\dagger$, i.e., the last O-move of each odd-length position of $G^\dagger$ determines the currently active thread of the position, that is a position of $G$ up to tags, and $\phi^\dagger$ plays as $\phi$ on that thread; see Appendix \ref{DefPromotionOfGames} and Definition~\ref{DefPromotionOfStrategies} for the details. 

Finally, let us recall the trivial \emph{curryings} $\Lambda$ on games and strategies.
Roughly, they generalize the maps $A \otimes B \multimap C \mapsto A \multimap (B \multimap C)$ and $(\phi : A \otimes B \multimap C) \mapsto (\Lambda(\phi) : A \multimap (B \multimap C))$, where $A$, $B$ and $C$ are normalized games.
Since strategies in $\mathcal{DG}$ may be non-normalized, curryings need to be generalized as in the case of pairing and promotion, but it is just straightforward; it suffices to replace inner tags appropriately.
See Appendices~\ref{DefCurryingOnGames} and \ref{DefCurryingOnStrategies} for the details.

\subsection{Games and strategies for PCF-computation}
\label{GamesAndStrategiesForPCF}
We are now ready to recall the games and strategies that interpret the prototypical functional programming language \emph{PCF} \cite{scott1993type,plotkin1977lcf}.

\begin{notation*}
We often indicate the form of tags of moves $[m_{X_1 X_2 \dots X_k}]_{\boldsymbol{e}}$ of a game $G$ informally by $[G_{X_1 X_2 \dots X_k}]_{\boldsymbol{e}}$, especially when the tags are involved. 
\end{notation*}

The first one is the \emph{zero strategy} $\mathit{zero}_A : A \Rightarrow \mathcal{N}$ on any normalized game $A$:
\begin{definition}[Zero strategies \cite{yamada2019game}]
Given a normalized game $A$, the \emph{\bfseries zero strategy} on a normalized game $A$ is the strategy $\mathit{zero}_A : A \Rightarrow \mathcal{N}$ defined by: 
\begin{equation*}
\mathit{zero}_A \stackrel{\mathrm{df. }}{=} \mathsf{Pref}(\{ [\hat{q}_\mathscr{E}] [\mathit{no}_\mathscr{E}] \})^{\mathsf{Even}}.
\end{equation*}
\end{definition}

The canonical play by $\mathit{zero}_A$ can be described as follows:
\begin{center}
\begin{tabular}{ccccc}
$[!A_{\mathscr{W}}]_{\Lbag^{\underline{0}} \boldsymbol{e}^+ \Rbag^{\underline{0}} \hbar}$ && $\stackrel{\mathit{zero}_A}{\multimap}$ && $[\mathcal{N}_{\mathscr{E}}]$ \\ \cline{1-5}
&&&& $[\hat{q}_{\mathscr{E}}]$ \tikzmark{czero1} \\
&&&& $[\mathit{no}_{\mathscr{E}}]$ \tikzmark{dzero1}
\end{tabular}
\begin{tikzpicture}[overlay, remember picture, yshift=.25\baselineskip, shorten >=.5pt, shorten <=.5pt]
    \draw [->] ({pic cs:dzero1}) [bend right] to ({pic cs:czero1});
  \end{tikzpicture}
\end{center}
\if0
Clearly, the view-function of $\mathit{zero}$ is the obvious constant function 
\begin{equation*}
(\square, \square, [\hat{q}_{\mathscr{E}}]) \mapsto [\mathit{no}_{\mathscr{E}}].
\end{equation*}
\fi

Next, let us recall the \emph{successor strategy}:
\begin{definition}[Successor strategy \cite{yamada2019game}]
\label{DefSuccessorStrategy}
The \emph{\bfseries succesor strategy} is the strategy $\mathit{succ} : \mathcal{N} \Rightarrow \mathcal{N}$ defined by: 
\begin{equation*}
\mathit{succ} \stackrel{\mathrm{df. }}{=} \mathsf{Pref}(\{ [\hat{q}_\mathscr{E}] [\hat{q}_\mathscr{W}]_{\Lbag^{\underline{0}} \Rbag^{\underline{0}} \hbar} ([\mathit{y}_\mathscr{W}]_{\Lbag^{\underline{0}} \Rbag^{\underline{0}} \hbar} [\mathit{y}_\mathscr{E}] [q_\mathscr{E}] [q_\mathscr{W}]_{\Lbag^{\underline{0}} \Rbag^{\underline{0}} \hbar})^i [\mathit{n}_\mathscr{W}]_{\Lbag^{\underline{0}} \Rbag^{\underline{0}} \hbar} [\mathit{y}_\mathscr{E}] [q_\mathscr{E}]  [\mathit{n}_\mathscr{E}] \mid i \in \mathbb{N} \ \! \})^{\mathsf{Even}}
\end{equation*} 
where $\mathit{y}$ and $\mathit{n}$ abbreviate $\mathit{yes}$ and $\mathit{no}$, respectively. 
\end{definition}

The computation of $\mathit{succ}$ can be described as follows:
\begin{center}
\begin{tabular}{ccccccccccccc}
$[!\mathcal{N}_{\mathscr{W}}]_{\Lbag^{\underline{0}} \boldsymbol{e}^+ \Rbag^{\underline{0}} \hbar}$ && $\stackrel{\mathit{succ}}{\multimap}$ && $[\mathcal{N}_{\mathscr{E}}]$ &&&& $[!\mathcal{N}_{\mathscr{W}}]_{\Lbag^{\underline{0}} \boldsymbol{e}^+ \Rbag^{\underline{0}} \hbar}$ && $\stackrel{\mathit{succ}}{\multimap}$ && $[\mathcal{N}_{\mathscr{E}]}$ \\ \cline{1-5} \cline{9-13}
&&&&\tikzmark{csucc1} $[\hat{q}_{\mathscr{E}}]$ \tikzmark{csucc3} &&&& &&&&\tikzmark{csucc21} $[\hat{q}_{\mathscr{E}}]$ \tikzmark{csucc23} \\
\tikzmark{csucc2} $[\hat{q}_{\mathscr{W}}]_{\Lbag^{\underline{0}} \Rbag^{\underline{0}} \hbar}$ \tikzmark{dsucc1}&&&& &&&& \tikzmark{csucc22} $[\hat{q}_{\mathscr{W}}]_{\Lbag^{\underline{0}} \Rbag^{\underline{0}} \hbar}$ \tikzmark{dsucc21}&&&& \\
\tikzmark{dsucc2} $[\mathit{yes}_{\mathscr{W}}]_{\Lbag^{\underline{0}} \Rbag^{\underline{0}} \hbar}$ \tikzmark{csucc5}&&&& &&&& \tikzmark{dsucc22} $[\mathit{no}_{\mathscr{W}}]_{\Lbag^{\underline{0}} \Rbag^{\underline{0}} \hbar}$ \tikzmark{csucc25}&&&& \\
&&&&\tikzmark{csucc4} $[\mathit{yes}_{\mathscr{E}}]$ \tikzmark{dsucc3} &&&& &&&&\tikzmark{csucc24} $[\mathit{yes}_{\mathscr{E}}]$ \tikzmark{dsucc23} \\
&&&&\tikzmark{dsucc4} $[q_{\mathscr{E}}]$ \tikzmark{csucc9} &&&& &&&& \tikzmark{dsucc24} $[q_{\mathscr{E}}]$ \tikzmark{csucc29} \\
\tikzmark{csucc8} $[q_{\mathscr{W}}]_{\Lbag^{\underline{0}} \Rbag^{\underline{0}} \hbar}$ \tikzmark{dsucc5}&&&& &&&& &&&&$[\mathit{no}_{\mathscr{E}}]$ \tikzmark{dsucc29} \\
\tikzmark{dsucc8} $[\mathit{yes}_{\mathscr{W}}]_{\Lbag^{\underline{0}} \Rbag^{\underline{0}} \hbar}$ \tikzmark{csucc6}&&&& &&&& &&&& \\
&&&&\tikzmark{csucc7} $[\mathit{yes}_{\mathscr{E}}]$ \tikzmark{dsucc9} &&&& &&&& \\
&&&&\tikzmark{dsucc7} $[q_{\mathscr{E}}]$ &&&& &&&& \\
$[q_{\mathscr{W}}]_{\Lbag^{\underline{0}} \Rbag^{\underline{0}} \hbar}$ \tikzmark{dsucc6}&&&& &&&& &&&& \\
&&$\vdots$&& &&&& &&& \\
$[\mathit{yes}_{\mathscr{W}}]_{\Lbag^{\underline{0}} \Rbag^{\underline{0}} \hbar}$ \tikzmark{csucc10}&&&& &&&& &&&& \\
&&&&\tikzmark{csucc11} $[\mathit{yes}_{\mathscr{E}}]$ &&&& &&&& \\
&&&&\tikzmark{dsucc11} $[q_{\mathscr{E}}]$ \tikzmark{csucc13} &&&&&& && \\
\tikzmark{csucc12} $[q_{\mathscr{W}}]_{\Lbag^{\underline{0}} \Rbag^{\underline{0}} \hbar}$ \tikzmark{dsucc10}&&&& &&&& &&&& \\
\tikzmark{dsucc12} $[\mathit{no}_{\mathscr{W}}]_{\Lbag^{\underline{0}} \Rbag^{\underline{0}} \hbar}$&&&& &&&& &&&& \\
&&&&\tikzmark{csucc14} $[\mathit{yes}_{\mathscr{E}}]$ \tikzmark{dsucc13} &&&& &&&& \\
&&&&\tikzmark{dsucc14} $[q_{\mathscr{E}}]$ \tikzmark{csucc15} &&&& &&&& \\
&&&& $[\mathit{no}_{\mathscr{E}}]$ \tikzmark{dsucc15} &&&& &&&& 
\end{tabular}
\begin{tikzpicture}[overlay, remember picture, yshift=.25\baselineskip]
    \draw [->] ({pic cs:dsucc1}) to ({pic cs:csucc1});
    \draw [->] ({pic cs:dsucc21}) to ({pic cs:csucc21});
    \draw [->] ({pic cs:dsucc2}) [bend left] to ({pic cs:csucc2});
    \draw [->] ({pic cs:dsucc3}) [bend right] to ({pic cs:csucc3});
    \draw [->] ({pic cs:dsucc4}) [bend left] to ({pic cs:csucc4});
    \draw [->] ({pic cs:dsucc5}) [bend right] to ({pic cs:csucc5});
    \draw [->] ({pic cs:dsucc6}) [bend right] to ({pic cs:csucc6});
    \draw [->] ({pic cs:dsucc7}) [bend left] to ({pic cs:csucc7});
    \draw [->] ({pic cs:dsucc8}) [bend left] to ({pic cs:csucc8});
    \draw [->] ({pic cs:dsucc9}) [bend right] to ({pic cs:csucc9});
    \draw [->] ({pic cs:dsucc10}) [bend right] to ({pic cs:csucc10});
    \draw [->] ({pic cs:dsucc11}) [bend left] to ({pic cs:csucc11});
    \draw [->] ({pic cs:dsucc12}) [bend left] to ({pic cs:csucc12});
    \draw [->] ({pic cs:dsucc13}) [bend right] to ({pic cs:csucc13});
    \draw [->] ({pic cs:dsucc14}) [bend left] to ({pic cs:csucc14});
    \draw [->] ({pic cs:dsucc15}) [bend right] to ({pic cs:csucc15});
    \draw [->] ({pic cs:dsucc22}) [bend left] to ({pic cs:csucc22});
    \draw [->] ({pic cs:dsucc23}) [bend right] to ({pic cs:csucc23});
    \draw [->] ({pic cs:dsucc29}) [bend right] to ({pic cs:csucc29});
    \draw [->] ({pic cs:dsucc24}) [bend left] to ({pic cs:csucc24});
  \end{tikzpicture}
\end{center}

Abusing notation, let us define for each $n \in \mathbb{N}$ the strategy $\underline{n} : T \Rightarrow \mathcal{N}$ by $\underline{n} \stackrel{\mathrm{df. }}{=} \mathsf{Pref}(\{ [\hat{q}_{\mathscr{E}}] . ([\mathit{yes}_{\mathscr{E}}] [q_{\mathscr{E}}])^n . [\mathit{no}_{\mathscr{E}}] \})^{\mathsf{Even}}$.
Clearly, $\underline{n}^\dagger ; \mathit{succ} = \underline{n+1} : T \Rightarrow \mathcal{N}$ for all $n \in \mathbb{N}$, and thus $\mathit{succ}$ indeed computes the successor function $\mathbb{N} \rightarrow \mathbb{N}$. 

\if0
It is then easy to observe that the $\mathit{succ}$ only needs as an input at most the last three moves in the current P-view to compute the next P-move.
More concretely, the $\mathit{succ}$ can be represented by the following finite list of the input-output pairs:
\begin{align*}
&(\square, \square, [\hat{q}_{\mathscr{E}}]) \mapsto [\hat{q}_{\mathscr{W}}]_{\Lbag \Rbag \hbar} \mid ([\mathit{no}_{\mathscr{W}}]_{\Lbag \Rbag \hbar}, [\mathit{yes}_{\mathscr{E}}], [q_{\mathscr{E}}]) \mapsto [\mathit{no}_{\mathscr{E}}] \mid \\ &([q_{\mathscr{E}}], [q_{\mathscr{W}}]_{\Lbag \Rbag \hbar}, [\mathit{yes}_{\mathscr{W}}]_{\Lbag \Rbag \hbar}) \mapsto [\mathit{yes}_{\mathscr{E}}] \mid ([q_{\mathscr{E}}], [q_{\mathscr{W}}]_{\Lbag \Rbag \hbar}, [\mathit{no}_{\mathscr{W}}]_{\Lbag \Rbag \hbar}) \mapsto [\mathit{yes}_{\mathscr{E}}] \mid \\ &([\mathit{yes}_{\mathscr{W}}]_{\Lbag \Rbag \hbar}, [\mathit{yes}_{\mathscr{E}}], [q_{\mathscr{E}}]) \mapsto [q_{\mathscr{W}}]_{\Lbag \Rbag \hbar}
\end{align*}
\fi

There is a left inverse of the successor strategy, called the \emph{predecessor strategy}:
\begin{definition}[Predecessor strategy \cite{yamada2019game}]
The \emph{\bfseries predecessor strategy} is the strategy $\mathit{pred} : \mathcal{N} \Rightarrow \mathcal{N}$ defined by:
\begin{align*}
\mathit{pred} \stackrel{\mathrm{df. }}{=} \ &\mathsf{Pref}(\{ [\hat{q}_\mathscr{E}] [\hat{q}_\mathscr{W}]_{\Lbag^{\underline{0}} \Rbag^{\underline{0}} \hbar} [\mathit{y}_\mathscr{W}]_{\Lbag^{\underline{0}} \Rbag^{\underline{0}} \hbar} [q_\mathscr{W}]_{\Lbag^{\underline{0}} \Rbag^{\underline{0}} \hbar} ([\mathit{y}_\mathscr{W}]_{\Lbag^{\underline{0}} \Rbag^{\underline{0}} \hbar} [\mathit{y}_\mathscr{E}] [q_\mathscr{E}] [q_\mathscr{W}]_{\Lbag^{\underline{0}} \Rbag^{\underline{0}} \hbar})^i [\mathit{n}_\mathscr{W}]_{\Lbag^{\underline{0}} \Rbag^{\underline{0}} \hbar} [\mathit{n}_\mathscr{E}] \mid i \in \mathbb{N} \ \! \} \\
&\cup \{ [\hat{q}_\mathscr{E}] [\hat{q}_\mathscr{W}]_{\Lbag^{\underline{0}} \Rbag^{\underline{0}} \hbar} [\mathit{n}_\mathscr{W}]_{\Lbag^{\underline{0}} \Rbag^{\underline{0}} \hbar} [\mathit{n}_\mathscr{E}] \})^{\mathsf{Even}}.
\end{align*}\end{definition}

The computation of $\mathit{pred}$ can be described as follows:
\begin{center}
\begin{tabular}{ccccccccccccc}
$[!\mathcal{N}_{\mathscr{W}}]_{\Lbag^{\underline{0}} \boldsymbol{e}^+ \Rbag^{\underline{0}} \hbar}$ && $\stackrel{\mathit{pred}}{\multimap}$ && $[\mathcal{N}_{\mathscr{E}}]$ &&&& $[!\mathcal{N}_{\mathscr{W}}]_{\Lbag^{\underline{0}} \boldsymbol{e}^+ \Rbag^{\underline{0}} \hbar}$ && $\stackrel{\mathit{pred}}{\multimap}$ && $[\mathcal{N}_{\mathscr{E}}]$ \\ \cline{1-5} \cline{9-13}
&&&&\tikzmark{cpred1} $[\hat{q}_{\mathscr{E}}]$ \tikzmark{cpred3} &&&& &&&&\tikzmark{cpred21} $[\hat{q}_{\mathscr{E}}]$ \tikzmark{cpred23} \\
\tikzmark{cpred2} $[\hat{q}_{\mathscr{W}}]_{\Lbag \Rbag \hbar}$ \tikzmark{dpred1}&&&& &&&& \tikzmark{cpred22} $[\hat{q}_{\mathscr{W}}]_{\Lbag \Rbag \hbar}$ \tikzmark{dpred21}&&&& \\
\tikzmark{dpred2} $[\mathit{yes}_{\mathscr{W}}]_{\Lbag \Rbag \hbar}$ \tikzmark{cpred5}&&&& &&&& \tikzmark{dpred22} $[\mathit{no}_{\mathscr{W}}]_{\Lbag \Rbag \hbar}$ &&&& \\
\tikzmark{cpred8} $[q_{\mathscr{W}}]_{\Lbag \Rbag \hbar}$ \tikzmark{dpred5}&&&& &&&& &&&& $[\mathit{no}_{\mathscr{E}}]$ \tikzmark{dpred23} \\
\tikzmark{dpred8} $[\mathit{yes}_{\mathscr{W}}]_{\Lbag \Rbag \hbar}$ \tikzmark{cpred6} &&&&&& \\
&&&&\tikzmark{cpred7} $[\mathit{yes}_{\mathscr{E}}]$ \tikzmark{dpred3} \\
&&&&\tikzmark{dpred7} $[q_{\mathscr{E}}]$ \\
\tikzmark{cpred9} $[q_{\mathscr{W}}]_{\Lbag \Rbag \hbar}$ \tikzmark{dpred6}&&&& \\
\tikzmark{dpred9} $[\mathit{yes}_{\mathscr{W}}]_{\Lbag \Rbag \hbar}$ \tikzmark{cpred10}&&&& \\
&&&&\tikzmark{cpred11} $[\mathit{yes}_{\mathscr{E}}]$ \\
&&&&\tikzmark{dpred11} $[q_{\mathscr{E}}]$ \\
$[q_{\mathscr{W}}]_{\Lbag \Rbag \hbar}$ \tikzmark{dpred10}&&&& \\
&&$\vdots$&& \\
\tikzmark{cpred12} $[q_{\mathscr{W}}]_{\Lbag \Rbag \hbar}$&&&& \\
\tikzmark{dpred12} $[\mathit{yes}_{\mathscr{W}}]_{\Lbag \Rbag \hbar}$&&&& \\
&&&&\tikzmark{cpred13} $[\mathit{yes}_{\mathscr{E}}]$ \\
&&&&\tikzmark{dpred13} $[q_{\mathscr{E}}]$ \tikzmark{cpred15} \\
\tikzmark{cpred14} $[q_{\mathscr{W}}]^{\Lbag \Rbag \hbar}$&&&& \\
\tikzmark{dpred14} $[\mathit{no}_{\mathscr{W}}]_{\Lbag \Rbag \hbar}$ &&&& \\
&&&&$[\mathit{no}_{\mathscr{E}}]$ \tikzmark{dpred15}
\end{tabular}
\begin{tikzpicture}[overlay, remember picture, yshift=.25\baselineskip]
    \draw [->] ({pic cs:dpred1}) to ({pic cs:cpred1});
    \draw [->] ({pic cs:dpred2}) [bend left] to ({pic cs:cpred2});
    \draw [->] ({pic cs:dpred3}) [bend right] to ({pic cs:cpred3});
    \draw [->] ({pic cs:dpred5}) [bend right] to ({pic cs:cpred5});
    \draw [->] ({pic cs:dpred6}) [bend right] to ({pic cs:cpred6}); 
    \draw [->] ({pic cs:dpred7}) [bend left] to ({pic cs:cpred7});
    \draw [->] ({pic cs:dpred8}) [bend left] to ({pic cs:cpred8});
    \draw [->] ({pic cs:dpred9}) [bend left] to ({pic cs:cpred9});
    \draw [->] ({pic cs:dpred10}) [bend right] to ({pic cs:cpred10});
    \draw [->] ({pic cs:dpred11}) [bend left] to ({pic cs:cpred11});
     \draw [->] ({pic cs:dpred12}) [bend left] to ({pic cs:cpred12});
    \draw [->] ({pic cs:dpred13}) [bend left] to ({pic cs:cpred13});
    \draw [->] ({pic cs:dpred14}) [bend left] to ({pic cs:cpred14});
    \draw [->] ({pic cs:dpred15}) [bend right] to ({pic cs:cpred15});
    \draw [->] ({pic cs:dpred21}) to ({pic cs:cpred21});
    \draw [->] ({pic cs:dpred22}) [bend left] to ({pic cs:cpred22});
    \draw [->] ({pic cs:dpred23}) [bend right] to ({pic cs:cpred23});
  \end{tikzpicture}
\end{center}

It is easy to see that $\underline{n+1}^\dagger ; \mathit{pred} = \underline{n} : T \Rightarrow \mathcal{N}$ for all $n \in \mathbb{N}$, and $\underline{0}^\dagger ; \mathit{pred} = \underline{0} : T \Rightarrow \mathcal{N}$; therefore, $\mathit{pred}$ in fact implements the predecessor function $\mathbb{N} \rightarrow \mathbb{N}$. 

Next, the following \emph{derelictions} play essentially in the same way as copy-cats:
\begin{definition}[Derelictions \cite{abramsky1999game,yamada2019game}]
\label{DefDerelictions}
The dereliction $\mathit{der}_A : A \Rightarrow A$ on a normalized game $A$ is defined by:
\begin{equation*}
\mathit{der}_A \stackrel{\mathrm{df. }}{=} \{ \boldsymbol{s} \in P_{A \Rightarrow A}^{\mathsf{Even}} \mid \forall \boldsymbol{t} \preceq \boldsymbol{s} . \ \! \mathsf{Even}(\boldsymbol{t}) \Rightarrow \boldsymbol{t} \upharpoonright (\mathscr{W})_{\Lbag^{\underline{0}} \Rbag^{\underline{0}} \hbar \_} = \boldsymbol{t} \upharpoonright (\mathscr{E})_{\_} \ \! \}
\end{equation*}
where $\boldsymbol{t} \upharpoonright (\mathscr{W})_{\Lbag^{\underline{0}} \Rbag^{\underline{0}} \hbar \_}$ (resp. $\boldsymbol{t} \upharpoonright (\mathscr{E})_{\_}$) is the j-subsequence of $\boldsymbol{t}$ that consists of moves $[(a, \mathscr{W})]_{\Lbag^{\underline{0}} \Rbag^{\underline{0}} \hbar \boldsymbol{e}}$ (resp. $[(a', \mathscr{E})]_{\boldsymbol{e'}}$) yet changed into $[a]_{\boldsymbol{e}}$ (resp. $[a']_{\boldsymbol{e'}}$).
\end{definition}

Next, let us recall a strategy for the conditional `if...then...else...':
\begin{definition}[Case strategies \cite{yamada2019game}]
Given a normalized game $A$, the \emph{\bfseries case strategy} $\mathit{case}_A : [A_\mathscr{WWW}]_{\Lbag^{\underline{0}} \boldsymbol{e''}^+ \Rbag^{\underline{0}} \hbar \boldsymbol{f}} \& [A_{\mathscr{EWW}}]_{\Lbag^{\underline{0}} \boldsymbol{e'}^+ \Rbag^{\underline{0}} \hbar \boldsymbol{f}} \& [\boldsymbol{2}_{\mathscr{EW}}]_{\Lbag^{\underline{0}} \boldsymbol{e}^+ \Rbag^{\underline{0}} \hbar} \Rightarrow [A_{\mathscr{E}}]_{\boldsymbol{g}}$ on $A$ is defined by:
\begin{align*}
\mathit{case}_A &\stackrel{\mathrm{df. }}{=} \mathsf{Pref}(\{ [a_{\mathscr{E}}]_{\boldsymbol{e}} [\hat{q}_{\mathscr{EW}}]_{\Lbag^{\underline{0}} \boldsymbol{e}^+ \Rbag^{\underline{0}} \hbar} [\mathit{tt}_{\mathscr{EW}}]_{\Lbag^{\underline{0}} \boldsymbol{e}^+ \Rbag^{\underline{0}} \hbar} [a_\mathscr{WWW}]_{\Lbag^{\underline{0}} \Rbag^{\underline{0}} \hbar \boldsymbol{e}} . \boldsymbol{s} \mid [a_{\mathscr{E}}]_{\boldsymbol{e}} [a_\mathscr{WWW}]_{\Lbag^{\underline{0}} \Rbag^{\underline{0}} \hbar \boldsymbol{e}} . \boldsymbol{s} \in \mathit{der}^\mathscr{W}_A \} \\
& \ \ \ \ \cup \{ [a_{\mathscr{E}}]_{\boldsymbol{f}} [\hat{q}_{\mathscr{EW}}]_{\Lbag^{\underline{0}} \boldsymbol{f}^+ \Rbag^{\underline{0}} \hbar} [\mathit{ff}_{\mathscr{EW}}]_{\Lbag^{\underline{0}} \boldsymbol{f}^+ \Rbag^{\underline{0}} \hbar} [a_{\mathscr{EWW}}]_{\Lbag^{\underline{0}} \Rbag^{\underline{0}} \hbar \boldsymbol{f}} . \boldsymbol{t} \mid [a_{\mathscr{E}}]_{\boldsymbol{f}} [a_{\mathscr{EWW}}]_{\Lbag^{\underline{0}} \Rbag^{\underline{0}} \hbar \boldsymbol{f}} . \boldsymbol{t} \in \mathit{der}^{\mathscr{E}}_A \})^{\mathsf{Even}}
\end{align*}
where $\mathit{der}^\mathscr{W}_A : [A_\mathscr{WWW}]_{\Lbag^{\underline{0}} \boldsymbol{e''}^+ \Rbag^{\underline{0}} \hbar \boldsymbol{f}} \Rightarrow  [A_{\mathscr{E}}]_{\boldsymbol{g}}$ and $\mathit{der}^{\mathscr{E}}_A : [A_{\mathscr{EWW}}]_{\Lbag^{\underline{0}} \boldsymbol{e'}^+ \Rbag^{\underline{0}} \hbar \boldsymbol{f}} \Rightarrow  [A_{\mathscr{E}}]_{\boldsymbol{g}}$ are the same as the dereliction $\mathit{der}_A : [A_\mathscr{W}]_{\Lbag^{\underline{0}} \boldsymbol{e'}^+ \Rbag^{\underline{0}} \hbar \boldsymbol{f}} \Rightarrow [A_\mathscr{E}]_{\boldsymbol{g}}$ up to inner tags.
\end{definition}

As the name suggests, it implements the case distinction on $A$: Given input strategies $\sigma_1, \sigma_2 : T \Rightarrow A$ and $\beta : T \Rightarrow \boldsymbol{2}$, the composition $\langle \langle \sigma_1, \sigma_2 \rangle, \beta \rangle^\dagger ; \mathit{case}_A$ is $\sigma_1$ (resp. $\sigma_2$) if $\beta$ is $\underline{\mathit{tt}} \stackrel{\mathrm{df. }}{=} \{ \boldsymbol{\epsilon}, [\hat{q}_{\mathscr{E}}] [\mathit{tt}_{\mathscr{E}}] \}$ (resp. $\underline{\mathit{ff}} \stackrel{\mathrm{df. }}{=} \{ \boldsymbol{\epsilon}, [\hat{q}_{\mathscr{E}}] [\mathit{ff}_{\mathscr{E}}] \}$).

Next, let us recall a strategy that sees if a given input is zero or not:
\begin{definition}[Ifzero strategy \cite{yamada2019game}]
The \emph{\bfseries ifzero strategy} is the strategy $\mathit{zero?} : [!\mathcal{N}_\mathscr{W}]_{\Lbag^{\underline{0}} \boldsymbol{e}^+ \Rbag^{\underline{0}} \hbar} \multimap [\boldsymbol{2}_\mathscr{E}]$ defined by:
\begin{equation*}
\mathit{zero?} \stackrel{\mathrm{df. }}{=} \mathsf{Pref}(\{ [\hat{q}_\mathscr{E}] [\hat{q}_\mathscr{W}]_{\Lbag^{\underline{0}} \Rbag^{\underline{0}} \hbar} [\mathit{no}_\mathscr{W}]_{\Lbag^{\underline{0}} \Rbag^{\underline{0}} \hbar} [\mathit{tt}_\mathscr{E}], [\hat{q}_\mathscr{E}] [\hat{q}_\mathscr{W}]_{\Lbag^{\underline{0}} \Rbag^{\underline{0}} \hbar} [\mathit{yes}_\mathscr{W}]_{\Lbag^{\underline{0}} \Rbag^{\underline{0}} \hbar} [\mathit{ff}_\mathscr{E}] \})^{\mathsf{Even}}.
\end{equation*}
\end{definition}

Clearly, we have $\underline{0}^\dagger ; \mathit{zero?} = \underline{\mathit{tt}} : T \Rightarrow \boldsymbol{2}$ and $\underline{n+1}^\dagger ; \mathit{zero?} = \underline{\mathit{ff}} : T \Rightarrow \boldsymbol{2}$ for all $n \in \mathbb{N}$, in fact checking correctly if the input is zero or not.

Finally, let us recall strategies that model \emph{fixed-point combinators} of PCF \cite{plotkin1977lcf,amadio1998domains}.
Since the precise definition is rather involved, we leave it to \cite{hyland1997game,yamada2019game}; for the present work, the following description suffices: 
\begin{definition}[Fixed-point strategies \cite{yamada2019game}]
\label{DefFixedPointStrategies}
Given a normalized game $A$, the \emph{\bfseries fixed-point strategy} $\mathit{fix}_A : ([A_{\mathscr{WW}}]_{\Lbag^{\underline{0}} \boldsymbol{e'}^+ \Rbag^{\underline{0}} \hbar \Lbag^{\underline{0}} \boldsymbol{e}^+ \Rbag^{\underline{0}} \hbar \boldsymbol{f}} \Rightarrow [A_{\mathscr{EW}}]_{\Lbag^{\underline{0}} \boldsymbol{e'}^+ \Rbag^{\underline{0}} \hbar \boldsymbol{f}}) \Rightarrow [A_{\mathscr{E}}]_{\boldsymbol{g}}$ on $A$ computes as follows:
\begin{itemize}

\item After the first occurrence $[a_\mathscr{E}]_{\boldsymbol{g}}$, $\mathit{fix}_A$ copies it and performs the move $[a_{\mathscr{EW}}]_{\Lbag^{\underline{0}} \Rbag^{\underline{0}} \hbar \boldsymbol{g}}$ with the pointer towards the initial occurrence $[a_\mathscr{E}]_{\boldsymbol{g}}$;

\item If O initiates a new thread in the inner implication by $[a'_{\mathscr{WW}}]_{\Lbag^{\underline{0}} \boldsymbol{e'}^+ \Rbag^{\underline{0}} \hbar \Lbag^{\underline{0}} \boldsymbol{e}^+ \Rbag^{\underline{0}} \hbar \boldsymbol{f}}$, then $\mathit{fix}_A$ copies it and launches a new thread in the outer implication by $[a'_{\mathscr{EW}}]_{\Lbag^{\underline{0}} \Lbag^{\underline{1}} \boldsymbol{e'}^{++} \Rbag^{\underline{1}} \hbar \Lbag^{\underline{1}} \boldsymbol{e}^{++} \Rbag^{\underline{1}} \Rbag^{\underline{0}} \hbar \boldsymbol{f}}$ together with the pointer towards the justifier of the justifier of the O-move;

\item If O performs in an existing thread a move $[a''_{\mathscr{WW}}]_{\Lbag^{\underline{0}} \boldsymbol{e'}^+ \Rbag^{\underline{0}} \hbar \Lbag^{\underline{0}} \boldsymbol{e}^+ \Rbag^{\underline{0}} \hbar \boldsymbol{f}}$ (resp. $[a''_{\mathscr{EW}}]_{\Lbag^{\underline{0}} \Rbag^{\underline{0}} \hbar \boldsymbol{f}}$, $[a''_{\mathscr{EW}}]_{\Lbag^{\underline{0}} \Lbag^{\underline{1}} \boldsymbol{e'}^{++} \Rbag^{\underline{1}} \hbar \Lbag^{\underline{1}} \boldsymbol{e}^{++} \Rbag^{\underline{1}} \Rbag^{\underline{0}} \hbar \boldsymbol{f}}$, $[a''_{\mathscr{E}}]_{\boldsymbol{f}}$), then $\mathit{fix}_A$ copies it and performs in the \emph{dual thread}, i.e., in the thread to which the third last occurrence of the current  P-view (Appendix~\ref{DefViews}) belongs, the move $[a''_{\mathscr{EW}}]_{\Lbag^{\underline{0}} \Lbag^{\underline{1}} \boldsymbol{e'}^{++} \Rbag^{\underline{1}} \hbar \Lbag^{\underline{1}} \boldsymbol{e}^{++} \Rbag^{\underline{1}} \Rbag^{\underline{0}} \hbar \boldsymbol{f}}$ (resp. $[a''_{\mathscr{E}}]_{\boldsymbol{f}}$, $[a''_{\mathscr{WW}}]_{\Lbag^{\underline{0}} \boldsymbol{e'}^+ \Rbag^{\underline{0}} \hbar \Lbag^{\underline{0}} \boldsymbol{e}^+ \Rbag^{\underline{0}} \hbar \boldsymbol{f}}$, $[a''_{\mathscr{EW}}]_{\Lbag^{\underline{0}} \Rbag^{\underline{0}} \hbar \boldsymbol{f}}$) together with the pointer towards the third last move.

\end{itemize}
\end{definition}

\if0
A typical play by $\mathit{fix}_A$ can be depicted as the following diagram:
\begin{center}
\begin{tabular}{ccccc}
$([!!A_{\mathscr{WW}}]_{\Lbag^{\underline{0}} \boldsymbol{g}^+ \Rbag^{\underline{0}} \hbar \Lbag^{\underline{0}} \boldsymbol{f}^+ \Rbag^{\underline{0}} \hbar \boldsymbol{e}}$ & $\multimap$ & $[!A_{\mathscr{EW}}]_{\Lbag^{\underline{0}} \boldsymbol{f'}^+ \Rbag^{\underline{0}} \hbar \boldsymbol{e'}})$ & $\stackrel{\mathit{fix}_A}{\multimap}$ & $[A_{\mathscr{E}}]_{\boldsymbol{e''}}$  \\ \hline
&&&&\tikzmark{cfix1} $[a^{(1)}_{\mathscr{E}}]_{\boldsymbol{e^{(1)}}}$ \tikzmark{cfix3} \\
&&\tikzmark{cfix2} $[a^{(1)}_{\mathscr{EW}}]_{\Lbag^{\underline{0}} \Rbag^{\underline{0}} \hbar \boldsymbol{e^{(1)}}}$ \tikzmark{dfix1}&& \\
&&\tikzmark{dfix2} $[a^{(2)}_{\mathscr{EW}}]_{\Lbag^{\underline{0}} \Rbag^{\underline{0}} \hbar \boldsymbol{e^{(2)}}}$ \tikzmark{cfix5}&& \\
&&&&\tikzmark{cfix4} $[a^{(2)}_{\mathscr{E}}]_{\boldsymbol{e^{(2)}}}$ \tikzmark{dfix3} \\
&&&&\tikzmark{dfix4} $[a^{(3)}_{\mathscr{E}}]_{\boldsymbol{e^{(3)}}}$ \\
&& $[a^{(3)}_{\mathscr{EW}}]_{\Lbag^{\underline{0}} \Rbag^{\underline{0}} \hbar \boldsymbol{e^{(3)}}}$ \tikzmark{dfix5}&& \\
\tikzmark{cfix9} $[a^{(4)}_{\mathscr{WW}}]_{\Lbag^{\underline{0}} \Rbag^{\underline{0}} \hbar \Lbag^{\underline{0}} \boldsymbol{f}^+ \Rbag^{\underline{0}} \hbar \boldsymbol{e^{(4)}}}$ \tikzmark{dfix6}&&&& \\
&&\tikzmark{cfix8} $[a^{(4)}_{\mathscr{EW}}]_{\Lbag^{\underline{0}} \Lbag^{\underline{1}} \Rbag^{\underline{1}} \hbar \Lbag^{\underline{1}} \boldsymbol{f}^{++} \Rbag^{\underline{1}} \Rbag^{\underline{0}} \hbar \boldsymbol{e^{(4)}}}$ \tikzmark{dfix7}&& \\
&&\tikzmark{dfix8} $[a^{(5)}_{\mathscr{EW}}]_{\Lbag^{\underline{0}} \Lbag^{\underline{1}} \Rbag^{\underline{1}} \hbar \Lbag^{\underline{1}} \boldsymbol{f}^{++} \Rbag^{\underline{1}} \Rbag^{\underline{0}} \hbar \boldsymbol{e^{(5)}}}$ && \\
\tikzmark{dfix9} $[a^{(5)}_{\mathscr{WW}}]_{\Lbag^{\underline{0}} \Rbag^{\underline{0}} \hbar \Lbag^{\underline{0}} \boldsymbol{f}^+ \Rbag^{\underline{0}} \hbar \boldsymbol{e^{(5)}}}$ &&&& \\
$[a^{(6)}_{\mathscr{WW}}]_{\Lbag^{\underline{0}} \Rbag^{\underline{0}} \hbar \Lbag^{\underline{0}} \boldsymbol{\tilde{f}}^+ \Rbag^{\underline{0}} \hbar \boldsymbol{e^{(6)}}}$ \tikzmark{dfix10}&&&& \\
&&\tikzmark{cfix12} $[a^{(6)}_{\mathscr{EW}}]_{\Lbag^{\underline{0}} \Lbag^{\underline{1}} \Rbag^{\underline{1}} \hbar \Lbag^{\underline{1}} \boldsymbol{\tilde{f}}^{++} \Rbag^{\underline{1}} \Rbag^{\underline{0}} \hbar \boldsymbol{e^{(6)}}}$ \tikzmark{dfix11}&& \\
\tikzmark{cfix15} $[a^{(7)}_{\mathscr{WW}}]_{\Lbag^{\underline{0}} \Lbag^{\underline{1}} \Rbag^{\underline{1}} \hbar \Lbag^{\underline{1}} \boldsymbol{\tilde{f}}^{++} \Rbag^{\underline{1}} \Rbag^{\underline{0}} \hbar \Lbag^{\underline{0}} \boldsymbol{\hat{f}}^{+} \Rbag^{\underline{0}} \hbar \boldsymbol{e^{(7)}}}$ \tikzmark{dfix12}&&&& \\
&&\tikzmark{cfix14} $[a^{(7)}_{\mathscr{EW}}]_{\Lbag^{\underline{0}} \Lbag^{\underline{1}} \Lbag^{\underline{2}} \Rbag^{\underline{2}} \hbar \Lbag^{\underline{2}} \boldsymbol{\tilde{f}}^{+++} \Rbag^{\underline{2}} \Rbag^{\underline{1}} \hbar \Lbag^{\underline{1}} \boldsymbol{\hat{f}}^{++} \Rbag^{\underline{1}} \Rbag^{\underline{0}} \hbar \boldsymbol{e^{(7)}}}$ \tikzmark{dfix13}&&\\
&&\tikzmark{dfix14} $[a^{(8)}_{\mathscr{EW}}]_{\Lbag^{\underline{0}} \Lbag^{\underline{1}} \Lbag^{\underline{2}} \Rbag^{\underline{2}} \hbar \Lbag^{\underline{2}} \boldsymbol{\tilde{f}}^{+++} \Rbag^{\underline{2}} \Rbag^{\underline{1}} \hbar \Lbag^{\underline{1}} \boldsymbol{\hat{f}}^{++} \Rbag^{\underline{1}} \Rbag^{\underline{0}} \hbar \boldsymbol{e^{(8)}}}$ &&\\
\tikzmark{dfix15} $[a^{(8)}_{\mathscr{WW}}]_{\Lbag^{\underline{0}} \Lbag^{\underline{1}} \Rbag^{\underline{1}} \hbar \Lbag^{\underline{1}} \boldsymbol{\tilde{f}}^{++} \Rbag^{\underline{1}} \Rbag^{\underline{0}} \hbar \Lbag^{\underline{0}} \boldsymbol{\hat{f}}^+ \Rbag^{\underline{0}} \hbar \boldsymbol{e^{(8)}}}$ &&&& \\
\end{tabular}
\begin{tikzpicture}[overlay, remember picture, yshift=.25\baselineskip]
    \draw [->] ({pic cs:dfix1}) to ({pic cs:cfix1});
    \draw [->] ({pic cs:dfix2}) [bend left] to ({pic cs:cfix2});
    \draw [->] ({pic cs:dfix3}) [bend right] to ({pic cs:cfix3});
    \draw [->] ({pic cs:dfix4}) [bend left] to ({pic cs:cfix4});
    \draw [->] ({pic cs:dfix5}) [bend right] to ({pic cs:cfix5});
    \draw [->] ({pic cs:dfix6}) to ({pic cs:cfix2});
    \draw [->] ({pic cs:dfix7}) to ({pic cs:cfix1}); 
    \draw [->] ({pic cs:dfix8}) [bend left] to ({pic cs:cfix8}); 
    \draw [->] ({pic cs:dfix9}) [bend left] to ({pic cs:cfix9}); 
    \draw [->] ({pic cs:dfix10}) to ({pic cs:cfix2});
    \draw [->] ({pic cs:dfix11}) to ({pic cs:cfix1});
    \draw [->] ({pic cs:dfix12}) to ({pic cs:cfix12});
    \draw [->] ({pic cs:dfix13}) to ({pic cs:cfix1});
    \draw [->] ({pic cs:dfix14}) [bend left] to ({pic cs:cfix14});     
    \draw [->] ({pic cs:dfix15}) [bend left] to ({pic cs:cfix15});
  \end{tikzpicture}
\end{center}
\fi


We are now ready to recall an enumeration of all the strategies for PCF-computation: 
\begin{definition}[Strategies for PCF \cite{yamada2019game}]
\label{DefStrategiesForPCF}
Let $\mathcal{DPCF}$ be the least set of strategies that satisfies:  
\begin{enumerate}

\item $(\sigma : G) \in \mathcal{DPCF}$ if $\sigma : G$ is `atomic', i.e., $\mathit{der}_A : A \Rightarrow A$, $\mathit{zero}_A : A \Rightarrow \mathcal{N}$, $\mathit{succ} : \mathcal{N} \Rightarrow \mathcal{N}$, $\mathit{pred} : \mathcal{N} \Rightarrow \mathcal{N}$, $\mathit{zero}? : \mathcal{N} \Rightarrow \boldsymbol{2}$, $\mathit{case}_{A} : A \& A \& \boldsymbol{2} \Rightarrow A$ or $\mathit{fix}_A : (A \Rightarrow A) \Rightarrow A$, where $A$ is a normalized game generated from $\mathcal{N}$, $\boldsymbol{2}$ and/or $T$ by $\&$ and/or $\Rightarrow$ (n.b., the construction of $A$ is `orthogonal' to that of $\sigma : G$);

\item $(\Lambda(\sigma) : \Lambda(G)) \in \mathcal{DPCF}$ if $(\sigma : G) \in \mathcal{DPCF}$ and $\mathcal{H}^\omega(G) \trianglelefteqslant A \& B \Rightarrow C$ for some normalized games $A$, $B$ and $C$;

\item $(\langle \varphi, \psi \rangle : \langle L, R \rangle) \in \mathcal{DPCF}$ if $(\varphi : L), (\psi : R) \in \mathcal{DPCF}$, $\mathcal{H}^\omega(L) \trianglelefteqslant C \Rightarrow A$ and $\mathcal{H}^\omega \trianglelefteqslant C \Rightarrow B$ for some normalized games $A$, $B$ and $C$;

\item $(\iota^\dagger \ddagger \kappa : J^\dagger \ddagger K) \in \mathcal{DPCF}$ if $(\iota : J), (\kappa : K) \in \mathcal{DPCF}$, $\mathcal{H}^\omega(J) \trianglelefteqslant A \Rightarrow B$ and $\mathcal{H}^\omega(K) \trianglelefteqslant B \Rightarrow C$ for some normalized games $A$, $B$ and $C$

\end{enumerate}
where \emph{projections} and \emph{evaluations}  \cite{abramsky1999game} are derelictions up to inner tags, and therefore we count them as `atomic' ones.
\end{definition}

It has been shown in \cite{yamada2019game} that the set $\mathcal{DPCF}$ contains for every term $\mathsf{\Gamma \vdash M : A}$ of PCF the denotation of the term $\mathsf{M}$. 
Hence, our mathematical problem has been reduced to showing that every strategy in the set $\mathcal{DPCF}$ is `implementable' by an automaton that is strictly weaker than a TM.

\section{Seemingly counter-Chomsky}
\label{JPointingAutomata}
Our main contribution is the present section.
We first define our pushdown automata, called \emph{j-pushdown automata}, and show that they are strictly weaker than TMs in a certain sense in Section~\ref{J}.
Then, we prove that j-pushdown automata are PCF-complete, which is our main theorem, in Section~\ref{PCFCompletenessOfJPushdownAutomata}. 
Then, we further proceed to give a corollary of the theorem, establishing that similar stack automata, called \emph{j-stack automata}, are Turing complete without interaction with another computational agent in Section~\ref{StandAloneTuringCompleteness}.

\subsection{J-pushdown automata}
\label{J}
Let us first give the formal definition of j-pushdown automata.

\if0
\begin{remark*}
Again, the concepts introduced below make sense for both of the dynamic and the static variants of games and strategies, and therefore we just say \emph{games} and \emph{strategies}, referring to both of the variants. 
\end{remark*}
\fi

\begin{definition}[J-pointing tapes]
\label{DefJPointingTapes}
The \emph{\bfseries j-pointing tape} for a game $G$ is the infinite tape (which is standard in automata theory \cite{kozen2012automata,hopcroft1979introduction}) equipped with at most one directed edge\footnote{We regard the standard tape as a directed graph $\square \leftarrow \square \leftarrow \dots$, where vertices $\square$ are cells, and edges $\leftarrow$ give the order of the cells; we then equip the tape with additional edges, which we're talking about.} between two cells such that a tape content must be the string $\overline{\boldsymbol{s}}$ equipped with edges for some $\boldsymbol{s} \in P_G$ given by induction on $|\boldsymbol{s}|$: 
\begin{itemize}

\item If $\boldsymbol{s} . [o]_{g_1 g_2 \dots g_l} \in P_G$, and $[o]_{g_1 g_2 \dots g_l}$ is initial, then $\overline{\boldsymbol{s} . [o]_{g_1 g_2 \dots g_l}}$ is: \\
\begin{center}
\begin{tikzpicture}[every node/.style={block},
        block/.style={minimum height=1.5em,outer sep=0pt,draw,rectangle,node distance=0pt}]
   \node (X) {$\mathsf{\$}$};
   \node (Y) [left=of X] {$\mathsf{o}$};
   \node (Z) [left=of Y] {$\mathsf{g_1}$};
   \node (V) [left=of Z] {$\mathsf{g_2}$};
   \node (W) [left=of V] {$\ldots$}; 
   \node (S) [left=of W] {$\mathsf{g_{l}}$};
   \node (T) [left=of S] {$\overline{\boldsymbol{s}}$};
   \node (U) [left=of T] {$\mathsf{\vdash}$};
   \draw (U.north west) -- ++(-1cm,0) (U.south west) -- ++ (-1cm,0) 
                 (X.north east) -- ++(1cm,0) (X.south east) -- ++ (1cm,0);
\end{tikzpicture}
\end{center}

\item If $\boldsymbol{s} . [n]_{g_1 g_2 \dots g_l} . \boldsymbol{t} . [m]_{e_1 e_2 \dots e_k} \in P_G$, and $[n]_{g_1 g_2 \dots g_l}$ is the justifier of $[m]_{e_1 e_2 \dots e_k}$, then $\overline{\boldsymbol{s} . [n]_{g_1 g_2 \dots g_l} . \boldsymbol{t} . [m]_{e_1 e_2 \dots e_k}}$ is:
\begin{center}
\begin{tikzpicture}[every node/.style={block},
        block/.style={minimum height=1.5em,outer sep=0pt,draw,rectangle,node distance=0pt}]
   \node (A) {$\mathsf{e_{k}}$};
   \node (B) [left=of A] {$\tilde{\boldsymbol{t}}$};
   \node (C) [left=of B] {$\mathsf{\$}$};
   \node (G) [left=of C] {$\mathsf{n}$};
   \node (H) [left=of G] {$\mathsf{g_1}$};
   \node (K) [left=of H] {$\mathsf{g_2}$}; 
   \node (I) [left=of K] {$\ldots$};
   \node (J) [left=of I] {$\mathsf{g_{l}}$};
   \node (O) [left=of J] {$\tilde{\boldsymbol{s}}$};
   \node (P) [left=of O] {$\vdash$};
   \node (D) [right=of A] {$\ldots$};
   \node (E) [right=of D] {$\mathsf{e_2}$};
   \node (L) [right=of E] {$\mathsf{e_1}$};
   \node (M) [right=of L] {$\mathsf{m}$};
   \node (N) [right=of M] {$\mathsf{\$}$}; 
   \draw (P.north west) -- ++(-1cm,0) (P.south west) -- ++ (-1cm,0) 
                 (N.north east) -- ++(1cm,0) (N.south east) -- ++ (1cm,0);
   \draw [->] (N) [bend right] to (C);
\end{tikzpicture}
\end{center}

\end{itemize}
where $\tilde{\boldsymbol{s}} \mathsf{g_l \dots g_2 g_1 n} \$ \tilde{\boldsymbol{t}} \stackrel{\mathrm{df. }}{=} \overline{\boldsymbol{s} . [n]_{g_1 g_2 \dots g_l} \boldsymbol{t}}$ (n.b., there are possibly other edges from/towards the $\$$ between $\mathsf{n}$ and $\tilde{\boldsymbol{t}}$, but we omit them), $\mathsf{\vdash}$ is a distinguished symbol to indicate where $\overline{\boldsymbol{s}}$ begins, and $\mathsf{\$}$ is another to serve as a separator of moves.
\end{definition}

\begin{definition}[J-pushdown automata]
\label{DefJPushdownAutomata}
A \emph{\bfseries j-pushdown automaton} for a game $G$ is a deterministic, \emph{non-erasing} (i.e., it never pops off a symbol on the top of the stack) pushdown automaton \cite{kozen2012automata,sipser2006introduction,hopcroft1979introduction} such that:
\begin{itemize}

\item \textsc{(JPA1)} The tape and the stack symbols are elements of $\pi_1(M_G) \cup \mathcal{T} \cup \{ \vdash, \$, \mathsf{i}, \mathsf{ii}, \mathsf{iii} \}$, where we assume that they are pairwise distinct;

\item \textsc{(JPA2)} Its input tape is the j-pointing tape for $G$, and its reading head moves from the rightmost occurrence of $\$$ towards left up to $\vdash$;

\item \textsc{(JPA3)} If the head is on a cell of the tape containing $\mathsf{\$}$ that occurs on the immediate right of the cells containing symbols representing a non-initial O-move, then the head must \emph{jump} to the cell containing $\mathsf{\$}$ connected by the (unique) directed edge. 

\end{itemize}
\end{definition}

By the axiom JPA3, a j-pushdown automaton may move only to cells on the j-pointing tape that contains symbols representing moves in P-views (Appendix~\ref{DefViews}).
Hence, for a finite sequence of symbols (without edges), i.e., an ordinary input for an automaton, a j-pushdown automaton computes exactly in the same way as a deterministic, non-erasing pushdown automaton. Thus, we have:
\begin{proposition}[Weakness of j-pushdown automata]
\label{PropStrictWeakness}
There is a formal language that can be recognized by a TM but not by any j-pushdown automaton. 
\end{proposition}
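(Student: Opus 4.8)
The plan is to produce a decidable language that no j-pushdown automaton can recognise, exploiting the observation recorded just above the statement: on a tape content that carries no directed edges, the jump clause \textsc{JPA3} is vacuous, so a j-pushdown automaton for a game $G$ processes such an input in exactly the same fashion as an ordinary deterministic, non-erasing pushdown automaton --- it scans the symbols (right-to-left, from the rightmost $\mathsf{\$}$ up to $\vdash$), updates its state, and pushes onto the stack, never popping.

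First I would fix a game with at least three distinct inner elements, say $G \stackrel{\mathrm{df. }}{=} \mathcal{N}$, and choose three pairwise distinct $\mathsf{a}, \mathsf{b}, \mathsf{c} \in \pi_1(M_G)$; by \textsc{JPA1} these are legitimate tape symbols. Consider the language
\[
L \stackrel{\mathrm{df. }}{=} \{\, \mathsf{a}^n \mathsf{b}^n \mathsf{c}^n \mid n \in \mathbb{N} \,\}
\]
(presented on an edge-free tape between $\vdash$ and the rightmost $\mathsf{\$}$ in the evident way). This language is plainly decidable, hence recognised by some TM.

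Next I would show that no j-pushdown automaton --- for $G$ or for any other game --- recognises $L$. Let $\mathscr{A}$ be a j-pushdown automaton, accepting by final state (empty-stack acceptance being unavailable to a non-erasing device). If the tape alphabet of $\mathscr{A}$ does not contain $\{\mathsf{a}, \mathsf{b}, \mathsf{c}\}$, then trivially $\mathscr{A}$ does not recognise $L$. Otherwise, on edge-free inputs $\mathscr{A}$ never jumps (by \textsc{JPA3}), so it behaves as a deterministic, non-erasing pushdown automaton, and in particular as a pushdown automaton in the classical sense \cite{hopcroft1979introduction,sipser2006introduction,kozen2012automata}. Since pushdown automata --- even non-deterministic, erasing ones --- recognise only context-free languages, the language recognised by $\mathscr{A}$ is context-free (closure of context-free languages under reversal absorbs the right-to-left direction of \textsc{JPA2}, and the boundary markers $\vdash$ and $\mathsf{\$}$ cost only finitely many states). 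But $L$ is not context-free, by the pumping lemma for context-free languages \cite{hopcroft1979introduction,sipser2006introduction,kozen2012automata}; hence $\mathscr{A}$ does not recognise $L$. This gives the required separation.

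The only real work is bookkeeping: making precise what ``recognises'' should mean for a device whose intended product is the content of its (non-erasing) stack, and verifying that the idiosyncrasies of the j-pointing-tape format add no power on edge-free inputs. Both are routine --- the crux, that the directed edges are the sole extra ingredient behind the main theorem, is exactly what the remark preceding the statement isolates --- so strictness follows at once from the classical fact that the context-free languages sit properly below the decidable ones \cite{chomsky1956three,sipser2006introduction}.
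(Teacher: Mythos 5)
Your proposal is correct and follows essentially the same route as the paper, which proves the proposition simply by the preceding remark that on an edge-free input \textsc{JPA3} is vacuous and a j-pushdown automaton behaves as an ordinary deterministic, non-erasing pushdown automaton, hence is no stronger than a classical PDA. You merely make the standard witness explicit ($\mathsf{a}^n\mathsf{b}^n\mathsf{c}^n$ plus the pumping lemma and closure under reversal), which is a harmless elaboration of the paper's one-line argument.
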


Nevertheless, in the game-semantic framework, note that:
\begin{enumerate}

\item \textsc{(Game-semantic compromise).} J-pushdown automata only need to compute the next P-move \emph{into the stack}, not onto the input tape, assuming that \emph{Judge (J)} of the game copies the stack content onto the input tape;

\item \textsc{(Edges on the input tape).} The participants of a game perform moves with \emph{pointers}, which j-pushdown automata may utilize.

\end{enumerate}

As we shall see in the next section, this game-semantic setting brings j-pushdown automata not only Turing completeness but also PCF-completeness.

\subsection{PCF-completeness of j-pushdown automata}
\label{PCFCompletenessOfJPushdownAutomata}
We now prove that elements of the set $\mathcal{DPCF}$ (Definition~\ref{DefStrategiesForPCF}) are all `implementable' by j-pushdown automata (Theorem~\ref{ThmJ}) in the following sense: 

\begin{definition}[JPA-computability]
\label{DefAComputableStrategies}
A strategy $\sigma : G$ is \emph{\bfseries JPA-computable} if there is a j-pushdown automaton $\mathscr{A}$ for $G$ such that for each $\boldsymbol{s} . [o]_{g_1 g_2 \dots g_l} . [p]_{e_1 e_2 \dots e_k} \in \sigma$ with the prefix $\boldsymbol{s} . [o]_{g_1 g_2 \dots g_l} \in P_{G}^{\mathsf{Odd}}$ written on the j-pointing tape for $G$ in the form defined in Definition~\ref{DefJPointingTapes} the computation of $\mathscr{A}$ terminates with the stack content \\
\begin{center}
\begin{tikzpicture}[every node/.style={block},
        block/.style={minimum height=1.5em,outer sep=0pt,draw,rectangle,node distance=0pt}]
   \node (H) [left=of G] {$\mathsf{e_{k}}$};
   \node (K) [left=of H] {$\ldots$}; 
   \node (I) [left=of K] {$\mathsf{e_2}$};
   \node (J) [left=of I] {$\mathsf{e_1}$};
   \node (O) [left=of J] {$\mathsf{p}$};
   \node (P) [left=of O] {$\mathsf{J}$};
   \node (Q) [left=of P] {$\vdash$};
   \draw (Q.north west) -- ++(-0.5cm,0) (Q.south west) -- ++ (-0.5cm,0) 
                 (H.north east) -- ++(0.5cm,0) (H.south east) -- ++ (0.5cm,0);
\end{tikzpicture}
\end{center}
where $\mathsf{J} = \mathsf{i}$ (resp. $\mathsf{J} = \mathsf{ii}$, $\mathsf{J} = \mathsf{iii}$) iff the justifier of $[p]_{e_1 e_2 \dots e_k}$ is the last occurrence (resp. the justifier of the second last occurrence, the third last occurrence) of the P-view of $\boldsymbol{s} . [o]_{g_1 g_2 \dots g_l}$ (Appendix~\ref{DefViews}).
In this case, $\mathscr{A}$ is said to \emph{\bfseries realize} $\sigma$.
\end{definition}

\if0
\begin{remark*}
Strictly speaking, a j-pushdown automaton $\mathscr{A}$ for a game $G$ that realizes a strategy $\sigma : G$ should also satisfy: If $\boldsymbol{s} . [o]_{g_1 g_2 \dots g_l} \in P_G^{\mathsf{Odd}}$ and $\boldsymbol{s} \in \sigma$, but there is no P-move $[p]_{e_1 e_2 \dots e_k} \in M_G$ such that $\boldsymbol{s} . [o]_{g_1 g_2 \dots g_l} . [p]_{e_1 e_2 \dots e_k} \in \sigma$, then given $\boldsymbol{s} . [o]_{g_1 g_2 \dots g_l} \in P_{G}^{\mathsf{Odd}}$ written on the j-pointing tape for $G$ the computation of $\mathscr{A}$ \emph{never terminates}. 
However, such a situation does not occur in any strategy in the set $\mathcal{DPCF}$, i.e., they are all \emph{total} \cite{abramsky1997semantics} (n.b., unlike composition, concatenation of strategies preserves totality), and thus we omit the condition. 
\end{remark*}
\fi

\begin{theorem}[Main theorem]
\label{ThmJ}
Every strategy in $\mathcal{DPCF}$ is JPA-computable.
\end{theorem}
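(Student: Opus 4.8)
The plan is to induct on the construction of $\mathcal{DPCF}$ in Definition~\ref{DefStrategiesForPCF}: first show that each of the seven atomic strategy schemas is JPA-computable, then show that JPA-computability is preserved by currying $\Lambda$, by pairing $\langle -,-\rangle$, and by promotion–concatenation $(\_)^\dagger \ddagger (\_)$. The backbone of the whole argument is one observation about the head discipline of a j-pushdown automaton for $G$: moving leftward and, by axiom \textsc{JPA3}, jumping along the unique edge whenever it sits on the $\$$ of a non-initial O-move, the head realizes exactly a right-to-left traversal of the P-view of the current odd-length position $\boldsymbol{s}\,[o]_{g_1 g_2 \dots g_l}$ — at a P-move it walks one move to the left, reading off that move's substance and, innermost symbol first, its tag; at a non-initial O-move it follows the justification pointer; at an initial move it halts. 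This is immediate from the inductive clauses defining P-views (Appendix~\ref{DefViews}) and from the shape of j-pointing tapes (Definition~\ref{DefJPointingTapes}), and it is the precise sense in which the edges act as a ``route recorder''. Two auxiliary remarks are needed: (i) a move's tag occurs on the tape in the order $\mathsf{e_k}\dots\mathsf{e_1}$, so reading it leftward and pushing the symbols in the order read yields exactly the stack layout $\dots\mathsf{e_1}\mathsf{e_2}\dots\mathsf{e_k}$ of Definition~\ref{DefAComputableStrategies}; and (ii) all tag operations occurring in the PCF-strategies — prepending or deleting a bounded block such as $\Lbag^{\underline{0}}\Rbag^{\underline{0}}\hbar$, the depth shift $(\_)^+$, and the nesting $\boldsymbol{f}\mapsto\Lbag^{\underline{0}}\Lbag^{\underline{1}}\boldsymbol{f}^{++}\Rbag^{\underline{1}}\dots$ — are finite-state transductions of the tag-symbol stream, which is exactly why the depths of $\Lbag$ and $\Rbag$ were stored explicitly (Definition~\ref{DefDepths}).

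For the base case I would treat the atomic schemas $\mathit{der}_A,\mathit{zero}_A,\mathit{succ},\mathit{pred},\mathit{zero?},\mathit{case}_A,\mathit{fix}_A$ in turn. For each, the previous work \cite{yamada2019game} exhibits a \emph{bounded view function}: the substance and the justification pattern (one of the three cases $\mathsf{i},\mathsf{ii},\mathsf{iii}$) of the next P-move depend only on the substances of the last $N$ occurrences of the P-view for a fixed $N$ (at most three in every case), and the tag of the next P-move is obtained from the tags of those occurrences by one of the fixed transductions above. The automaton therefore pushes $\vdash$, walks the P-view as described while holding the last $N$ substances plus the relevant bounded tag information in finite control and, where the output tag is a transform of an already-seen tag, parking that tag on the stack; as soon as enough of the P-view has been scanned it pushes the justifier symbol $\mathsf{J}$, the substance $\mathsf{p}$, and the transformed tag, and halts. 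The strategies $\mathit{zero}_A,\mathit{succ},\mathit{pred},\mathit{zero?}$ are nearly immediate because their P-moves carry only bounded tags; $\mathit{der}_A$ and $\mathit{case}_A$ additionally copy one long tag, handled as above; the genuinely delicate case is $\mathit{fix}_A$, where the output must be placed in the ``dual thread'' — the thread of the third-last P-view occurrence — and carries a nested, depth-incremented tag. Here the dual thread is reached simply by following pointers from that third-last occurrence, i.e.\ by the \textsc{JPA3} mechanism, and the tag is produced by the finite-state transducer for $\boldsymbol{e}\mapsto\Lbag^{\underline{0}}\Lbag^{\underline{1}}\boldsymbol{e'}^{++}\Rbag^{\underline{1}}\hbar\Lbag^{\underline{1}}\boldsymbol{e}^{++}\Rbag^{\underline{1}}\Rbag^{\underline{0}}\hbar\boldsymbol{f}$ and its variants.

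For the inductive step, currying is easiest: positions of $\Lambda(G)$ are those of $G$ with inner tags rearranged by a fixed finite relabeling that preserves pointers, hence P-views, so $\mathscr{A}_{\Lambda(\sigma)}$ runs $\mathscr{A}_\sigma$ with that relabeling applied on the fly to tape reads and stack pushes. For pairing $\langle\varphi,\psi\rangle:\langle L,R\rangle$, a position is, up to inner tags, a position of $L$ or of $R$, and which one is revealed by the inner tag of its initial (hence leftmost, and also first-of-every-P-view) move; so $\mathscr{A}_{\langle\varphi,\psi\rangle}$ first walks to the start of the P-view, reads that tag, selects $\mathscr{A}_\varphi$ or $\mathscr{A}_\psi$, and simulates it modulo the relabeling fixed by the construction. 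The substantial case is the concatenation $\iota^\dagger\ddagger\kappa:J^\dagger\ddagger K$, which subsumes composition. Here $\iota^\dagger\ddagger\kappa$ plays as $\iota^\dagger$ if the last O-move lies in $J^\dagger$ and as $\kappa$ if it lies in $K$ — decidable from the $\mathscr{S}/\mathscr{N}/\mathscr{ES}/\mathscr{WN}$ markings of Section~\ref{Constructions} — and $\iota^\dagger$ plays as $\iota$ on the currently active thread. The claim to establish is that the P-view of a position of $J^\dagger\ddagger K$, projected onto the active thread, is, up to the relabeling and up to the de-promotion of tags, a P-view of $J$ (resp.\ of $K$), and that this projection is computed by the same leftward-plus-\textsc{JPA3} walk together with finite-state bookkeeping that skips internal ``dummy'' moves and moves of inactive threads — most of which skipping is already done by \textsc{JPA3}, which forces a jump at every non-initial O-move, in particular at every internal one. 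Granting this, $\mathscr{A}$ reconstructs the relevant sub-P-view, feeds it to $\mathscr{A}_\iota$ or $\mathscr{A}_\kappa$, and re-tags the result using the finite-state transducers for concatenation and for $(\_)^\dagger$.

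I expect the main obstacle to be precisely this last step: proving that the edge-following walk on the tape of $J^\dagger\ddagger K$ genuinely recovers the P-view of $J$ restricted to the active thread — this is the crux of why a \emph{single}, non-erasing stack suffices, in place of the second stack one would otherwise need to chase pointers — together with the accompanying promotion bookkeeping, i.e.\ the nested $\Lbag^{\underline{0}}\Lbag^{\underline{1}}\dots$ blocks and the iterated shifts $(\_)^{++}$, which must be performed by a finite-state transducer on the tag-symbol stream and is feasible only because depths are explicitly recorded in extended outer tags. In the base case, the analogous delicate point is $\mathit{fix}_A$, for the same two reasons in miniature (dual-thread navigation and nested depth-annotated tags).
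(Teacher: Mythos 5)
Your proposal is correct and follows essentially the same route as the paper's proof: a structural induction over $\mathcal{DPCF}$, with the atomic strategies realized by streaming the P-view off the tape (the \textsc{JPA3} jumps doing the pointer-chasing) while a finite-state transduction of the depth-annotated tags produces the output, and with currying, pairing, promotion and concatenation handled by relabelling/disjoint-union of the component automata on the grounds that these constructions preserve P-views up to tags. You in fact articulate more explicitly than the paper does the two load-bearing observations — that the head discipline computes exactly the P-view traversal, and that all tag manipulations are finite-state thanks to the recorded depths — and you correctly identify the same residual crux (P-view recovery for the active thread of a concatenation/promotion) that the paper also only asserts, citing the previous work.
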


\begin{proof}
First, consider the successor strategy $\mathit{succ} : \mathcal{N} \Rightarrow \mathcal{N}$; see Figure~\ref{FigSuccOnTape}, where any string (from left to right) occurring in a diagram is to be read vertically from bottom to top.
We employ this notation in the rest of the paper.
It is then easy to see that $\mathit{succ}$ is JPA-computable (n.b., since automata theory is an established branch, it should be legitimate and even appropriate \emph{not} to give the formal, full description of the automata).

Next, it is even simpler to see JPA-computability of the dereliction $\mathit{der}_A : A \Rightarrow A$ on any game $A$; see Figure~\ref{FigDerOnTape} below. 
\begin{figure}[H]
\begin{center}
\begin{tabular}{ccccccccccccccc}
$[\oc \mathcal{N}_{\mathscr{W}}]_{\Lbag^{\underline{0}} \boldsymbol{e}^+ \Rbag^{\underline{0}} \hbar}$ && $\stackrel{\mathit{succ}}{\multimap}$ && $[\mathcal{N}_{\mathscr{E}}]$ &&&&&& $[\oc \mathcal{N}_{\mathscr{W}}]_{\Lbag^{\underline{0}} \boldsymbol{e}^+ \Rbag^{\underline{0}} \hbar}$ && $\stackrel{\mathit{succ}}{\multimap}$ && $[\mathcal{N}_{\mathscr{E}}]$ \\ \cline{1-5} \cline{10-15}
&&&& $\mathsf{\hat{q}_{E}}$ &&&& &&&&&& $\mathsf{\hat{q}_{E}}$ \\
&&&&\tikzmark{csuccF1} $\$$ \tikzmark{csuccF3} &&&&&&&&&& \tikzmark{csuccF21} $\$$ \tikzmark{csuccF23} \\
$\hbar$&&&&&&&&&&$\hbar$&&&& \\
$\Lbag^{\underline{0}} \Rbag^{\underline{0}}$&&&&&&&&&&$\Lbag^{\underline{0}} \Rbag^{\underline{0}}$&&&& \\
$\mathsf{\hat{q}_W}$&&&&&&&&&&$\mathsf{\hat{q}_{W}}$&&&& \\
\tikzmark{csuccF2} $\$$ \tikzmark{dsuccF1}&&&&&& &&&& \tikzmark{csuccF22} $\$$ \tikzmark{dsuccF21}&&&& \\
$\hbar$&&&&&&&&&&$\hbar$&&&& \\
$\Lbag^{\underline{0}} \Rbag^{\underline{0}}$&&&&&&&&&&$\Lbag^{\underline{0}} \Rbag^{\underline{0}}$&&&& \\
$\mathsf{yes_W}$&&&&&&&&&&$\mathsf{no_W}$&&&& \\
\tikzmark{dsuccF2} $\$$ \tikzmark{csuccF5}&&&& &&&&&& \tikzmark{dsuccF22} $\$$ \tikzmark{csuccF25}&&&& \\
&&&&$\mathsf{yes_E}$&&&&&&&&&&$\mathsf{yes_E}$ \\
&&&&\tikzmark{csuccF4} $\$$ \tikzmark{dsuccF3} &&&&&& &&&&\tikzmark{csuccF24} $\$$ \tikzmark{dsuccF23} \\
&&&&$\mathsf{q_E}$&&&&&&&&&&$\mathsf{q_E}$ \\
&&&&\tikzmark{dsuccF4} $\$$ \tikzmark{csuccF9} &&&& &&&&&&\tikzmark{dsuccF24} $\$$ \tikzmark{csuccF29} \\
$\hbar$&&&&&&&&&&&&&&$\mathsf{no_E}$ \\
$\Lbag^{\underline{0}} \Rbag^{\underline{0}}$&&&&&&&&&&&&&&$\$$ \tikzmark{dsuccF29} \\
$\mathsf{q_W}$&&&&&&&&&&&&&& \\
\tikzmark{csuccF8} $\$$ \tikzmark{dsuccF5}&&&& &&&&&& &&&&\tikzmark{csuccF27} \\
$\hbar$&&&&&&&&&&&&&& \\
$\Lbag^{\underline{0}} \Rbag^{\underline{0}}$&&&&&&&&&&&&&& \\
$\mathsf{no_W}$&&&&&&&&&&&&&& \\
\tikzmark{dsuccF8} $\$$&&&& &&&&&& &&&& \\
&&&&$\mathsf{yes_E}$&&&&&&&&&& \\
&&&&\tikzmark{csuccF7} $\$$ \tikzmark{dsuccF9} &&&& &&&&&& \\
&&&&$\mathsf{q_E}$&&&&&&&&&& \\
&&&&\tikzmark{dsuccF7} $\$$ \tikzmark{csuccF6} &&&& &&&&&& \\
&&&& $\mathsf{no_E}$ &&&&&&&& \\
&&&& $\$$ \tikzmark{dsuccF6} &&&&&& &&&& 
\end{tabular}
\begin{tikzpicture}[overlay, remember picture, yshift=.25\baselineskip]
    \draw [->] ({pic cs:dsuccF1}) to ({pic cs:csuccF1});
    \draw [->] ({pic cs:dsuccF2}) [bend left] to ({pic cs:csuccF2});
    \draw [->] ({pic cs:dsuccF3}) [bend right] to ({pic cs:csuccF3});
    \draw [->] ({pic cs:dsuccF4}) [bend left] to ({pic cs:csuccF4});
    \draw [->] ({pic cs:dsuccF5}) [bend right] to ({pic cs:csuccF5});
    \draw [->] ({pic cs:dsuccF6}) [bend right] to ({pic cs:csuccF6}); 
    \draw [->] ({pic cs:dsuccF7}) [bend left] to ({pic cs:csuccF7});
    \draw [->] ({pic cs:dsuccF8}) [bend left] to ({pic cs:csuccF8});
    \draw [->] ({pic cs:dsuccF9}) [bend right] to ({pic cs:csuccF9});
    \draw [->] ({pic cs:dsuccF21}) to ({pic cs:csuccF21});
    \draw [->] ({pic cs:dsuccF22}) [bend left] to ({pic cs:csuccF22});
    \draw [->] ({pic cs:dsuccF23}) [bend right] to ({pic cs:csuccF23});
    \draw [->] ({pic cs:dsuccF24}) [bend left] to ({pic cs:csuccF24});
    \draw [->] ({pic cs:dsuccF29}) [bend right] to ({pic cs:csuccF29});
  \end{tikzpicture}
  \caption{Some plays by the successor strategy $\mathit{succ} : \mathcal{N} \Rightarrow \mathcal{N}$ on the j-pointing tape.}
  \label{FigSuccOnTape}
\end{center}
\end{figure}

\begin{figure}[H]
\begin{center}
\begin{tabular}{ccccc}
$[\oc A_{\mathscr{W}}]_{\Lbag^{\underline{0}} \boldsymbol{f}^+ \Rbag^{\underline{0}} \hbar \boldsymbol{e}}$ && $\stackrel{\mathit{der}_A}{\multimap}$ && $[A_{\mathscr{E}}]_{\boldsymbol{e}}$ \\ \hline
&&&&$\boldsymbol{\mathsf{e^{(1)}}}$ \\
&&&&$\mathsf{a^{(1)}_E}$ \\
&&&&\tikzmark{cder1} $\$$ \tikzmark{cder3} \\
$\boldsymbol{\mathsf{e^{(1)}}}$&&&& \\
$\Lbag^{\underline{0}} \Rbag^{\underline{0}} \hbar$&&&& \\
$\mathsf{a^{(1)}_W}$&&&& \\
\tikzmark{cder2} $\$$ \tikzmark{dder1}&&&& \\
$\boldsymbol{\mathsf{e^{(2)}}}$&&&& \\
$\Lbag^{\underline{0}} \Rbag^{\underline{0}} \hbar$&&&& \\
$\mathsf{a^{(2)}_W}$&&&& \\
\tikzmark{dder2} $\$$&&&& \\
&&&&$\boldsymbol{\mathsf{e^{(2)}}}$ \\
&&&&$\mathsf{a^{(2)}_E}$ \\
&&&&$\$$ \tikzmark{dder3} \\
&&$\vdots$&& \\
&&&&$\boldsymbol{\mathsf{e^{(2k-1)}}}$ \\
&&&&$\mathsf{a^{(2k-1)}_E}$ \\
&&&&$\$$ \tikzmark{cder5} \\
$\boldsymbol{\mathsf{e^{(2k-1)}}}$ \\
$\Lbag^{\underline{0}} \Rbag^{\underline{0}} \hbar$&&&& \\
$\mathsf{a^{(2k-1)}_W}$&&&& \\
\tikzmark{cder4} $\$$&&&& \\
$\boldsymbol{\mathsf{e^{(2k)}}}$ &&&&\\
$\Lbag^{\underline{0}} \Rbag^{\underline{0}} \hbar$&&&& \\
$\mathsf{a^{(2k)}_W}$&&&& \\
\tikzmark{dder4} $\$$&&&& \\
&&&&$\boldsymbol{\mathsf{e^{(2k)}}}$ \\
&&&&$\mathsf{a^{(2k)}_E}$ \\
&&&&$\$$ \tikzmark{dder5}
\end{tabular}
\begin{tikzpicture}[overlay, remember picture, yshift=.25\baselineskip]
    \draw [->] ({pic cs:dder1}) to ({pic cs:cder1});
    \draw [->] ({pic cs:dder2}) [bend left] to ({pic cs:cder2});
    \draw [->] ({pic cs:dder3}) [bend right] to ({pic cs:cder3});
    \draw [->] ({pic cs:dder4}) [bend left] to ({pic cs:cder4});
    \draw [->] ({pic cs:dder5}) [bend right] to ({pic cs:cder5});
\end{tikzpicture}
 \caption{Some play by the dereliction $\mathit{der}_A : A \Rightarrow A$ on the j-pointing tape.}
  \label{FigDerOnTape}
\end{center}
\end{figure}

\begin{figure}[H]
\begin{center}
\begin{tabular}{ccccccccccccc}
$([!!A_{\mathscr{WW}}]_{\Lbag^{\underline{0}} \boldsymbol{f}^+ \Rbag^{\underline{0}} \hbar \Lbag^{\underline{0}} \boldsymbol{g}^+ \Rbag^{\underline{0}} \hbar \boldsymbol{e}}$ &&& $\multimap$ &&& $[!A_{\mathscr{EW}}]_{\Lbag^{\underline{0}} \boldsymbol{f'}^+ \Rbag^{\underline{0}} \boldsymbol{e'}})$ &&& $\stackrel{\mathit{fix}_A}{\multimap}$ &&& $[A_{\mathscr{E}}]_{\boldsymbol{e''}}$  \\ \hline
&&&&&&&&&&&& $\boldsymbol{\mathsf{e^{(1)}}}$ \\
&&&&&&&&&&&&\tikzmark{cjfix1} $\mathsf{a^{(1)}_E}$ \\
&&&&&&&&&&&&\tikzmark{cjfix1} $\$$ \tikzmark{cjfix3} \\
&&&&&& $\boldsymbol{\mathsf{e^{(1)}}}$ &&&&&& \\
&&&&&& $\Lbag^{\underline{0}} \Rbag^{\underline{0}} \hbar$ &&&&&& \\
&&&&&& $\mathsf{a^{(1)}_{EW}}$ &&&&&& \\
&&&&&&\tikzmark{cjfix2} $\$$ \tikzmark{djfix1}&&&&&& \\
&&&&&& $\boldsymbol{\mathsf{e^{(2)}}}$ &&&&&& \\
&&&&&&$\Lbag^{\underline{0}} \Rbag^{\underline{0}} \hbar$&&&&&& \\
&&&&&&$\mathsf{a^{(2)}_{EW}}$&&&&&& \\
&&&&&&\tikzmark{djfix2} $\$$ \tikzmark{cjfix5}&&&&&& \\
&&&&&&&&&&&&$\boldsymbol{\mathsf{e^{(2)}}}$ \\
&&&&&&&&&&&& $\mathsf{a^{(2)}_E}$ \\
&&&&&&&&&&&&\tikzmark{cjfix4} $\$$ \tikzmark{djfix3} \\
&&&&&&&&&&&&$\boldsymbol{\mathsf{e^{(3)}}}$ \\
&&&&&&&&&&&&$\mathsf{a^{(3)}_{E}}$ \\
&&&&&&&&&&&&\tikzmark{djfix4} $\$$ \\
&&&&&& $\boldsymbol{\mathsf{e^{(3)}}}$ &&&&&& \\
&&&&&& $\Lbag^{\underline{0}} \Rbag^{\underline{0}} \hbar$ &&&&&& \\
&&&&&& $\mathsf{a^{(3)}_{EW}}$ &&&&&& \\
&&&&&& $\$$ \tikzmark{djfix5}&&&&&& \\
$\boldsymbol{\mathsf{e^{(4)}}}$&&&&&&&&&&&& \\
$\Lbag^{\underline{0}} \Rbag^{\underline{0}} \hbar \Lbag^{\underline{0}} \boldsymbol{\mathsf{f}}^+ \Rbag^{\underline{0}} \hbar$ &&&&&&&&&&&& \\
$\mathsf{a^{(4)}_{WW}}$ &&&&&&&&&&&& \\
$\$$ \tikzmark{djfix6}&&&&&&&&&&&& \\
&&&&&& $\boldsymbol{\mathsf{e^{(4)}}}$ &&&&&& \\
&&&&&& $\Lbag^{\underline{0}} \Lbag^{\underline{1}} \Rbag^{\underline{1}} \hbar \Lbag^{\underline{1}} \boldsymbol{\mathsf{f}}^{++} \Rbag^{\underline{1}} \Rbag^{\underline{0}} \hbar$ &&&&&& \\
&&&&&& $\mathsf{a^{(4)}_{EW}}$ &&&&&& \\
&&&&&&\tikzmark{cjfix8} $\$$ \tikzmark{djfix7}&&&&&& 
\end{tabular}
\begin{tikzpicture}[overlay, remember picture, yshift=.25\baselineskip]
    \draw [->] ({pic cs:djfix1}) to ({pic cs:cjfix1});
    \draw [->] ({pic cs:djfix2}) [bend left] to ({pic cs:cjfix2});
    \draw [->] ({pic cs:djfix3}) [bend right] to ({pic cs:cjfix3});
    \draw [->] ({pic cs:djfix4}) [bend left] to ({pic cs:cjfix4});
    \draw [->] ({pic cs:djfix5}) [bend right] to ({pic cs:cjfix5});
    \draw [->] ({pic cs:djfix6}) to ({pic cs:cjfix2});
    \draw [->] ({pic cs:djfix7}) to ({pic cs:cjfix1}); 
  \end{tikzpicture}
  \caption{Some play by the fixed-point strategy $\mathit{fix}_A$ on the j-pointing tape.}
  \label{FigFixOnTape}
\end{center}
\end{figure}

Then, perhaps surprisingly, the fixed-point strategy $\mathit{fix}_A : (A \Rightarrow A) \Rightarrow A$ on any game $A$ is also JPA-computable almost in the same manner as $\mathit{der}_A$; recall Definition~\ref{DefFixedPointStrategies} and see Figure~\ref{FigFixOnTape}.
The only non-trivial point is the calculation of extended outer tags on the domain $\oc (A \Rightarrow A)$.
However, e.g., see the O-move $\$ \mathsf{a^{(4)}_{WW}} \hbar \Lbag^{\underline{0}} \Rbag^{\underline{0}} \hbar \Lbag^{\underline{0}} \boldsymbol{\mathsf{f}}^+ \Rbag^{\underline{0}} \hbar \boldsymbol{\mathsf{e^{(4)}}}$ in Figure~\ref{FigFixOnTape}, in which every occurrence of $\Lbag$ or $\Rbag$ in $\boldsymbol{\mathsf{f}}^+$ has depth $\geqslant 1$; hence, it is just straightforward for the j-pushdown automaton to compute the next P-move $\$ \mathsf{a^{(4)}_{EW}} \hbar \Lbag^{\underline{0}}  \Lbag^{\underline{1}} \Rbag^{\underline{1}} \hbar \Lbag^{\underline{1}} \boldsymbol{\mathsf{f}}^{++} \Rbag^{\underline{1}} \Rbag^{\underline{0}} \hbar \boldsymbol{\mathsf{e^{(4)}}}$.


At this point, it is easy to see that the remaining `atomic' strategies of the set $\mathcal{DPCF}$ (Definition~\ref{DefStrategiesForPCF}) are all JPA-computable, and so we leave the details to the reader. 
It remains to show that JPA-computability is preserved under currying, pairing, promotion and concatenation of strategies. 
Currying is trivial for it suffices to modify computation of inner tags appropriately.
Pairing and concatenation are just straightforward for the required j-pushdown automata can be obtained essentially as the `disjoint union' of the j-pushdown automata that realize the respective component strategies.\footnote{It is possible essentially because the P-view of a position of a pairing $\langle L, R \rangle$ (resp. a concatenation $J \ddagger K$) is the P-view of a position of $L$ or $R$ (resp. $J$ and/or $K$) up to inner tags, which is easy to verify \cite{yamada2016dynamic}.} 

Finally, we consider promotion.
Let us focus on normalized strategies, say, $\phi : \oc A \multimap \oc B$, where $A$ and $B$ are normalized games, because for non-normalized ones we only need a trivial extension.
Assume $\phi$ is JPA-computable, i.e., there is a j-pushdown automaton $\mathscr{A}_\phi$ that realizes $\phi$.
Suppose $\phi$ plays as in Figure~\ref{FigPhiOnTape}.
\begin{figure}[H]
\begin{center}
\begin{tabular}{ccc}
$[\oc A_{\mathscr{W}}]_{\Lbag^{\underline{0}} \boldsymbol{f}^+ \Rbag^{\underline{0}} \hbar \boldsymbol{e}}$ & $\stackrel{\phi}{\multimap}$ & $[B_{\mathscr{E}}]_{\boldsymbol{h}}$ \\ \hline 
&$\vdots$& \\
&& $\boldsymbol{\mathsf{h}}$ \\
&& $\mathsf{b_E}$ \\
&& \tikzmark{cjpromotion4} $\$$ \tikzmark{cjpromotion3} \\
&$\vdots$& \\
$\boldsymbol{\mathsf{e}}$ && \\
$\hbar$ && \\
$\Lbag^{\underline{0}} \boldsymbol{\mathsf{f}}^+ \Rbag^{\underline{0}}$ && \\
$\mathsf{a_W}$ && \\
\tikzmark{cjpromotion2} $\$$ \tikzmark{djpromotion4} && \\
& $\vdots$ & \\
$\boldsymbol{\mathsf{e'}}$ && \\
$\hbar$ && \\
$\Lbag^{\underline{0}} \boldsymbol{\mathsf{f'}}^+ \Rbag^{\underline{0}}$&& \\
$\mathsf{a'_W}$ \\
\tikzmark{djpromotion2} $\$$ && \\
&&$\boldsymbol{\mathsf{h'}}$\\
&&$\mathsf{b'_E}$ \\
&&$\$$ \tikzmark{djpromotion3} 
\end{tabular}
\begin{tikzpicture}[overlay, remember picture, yshift=.25\baselineskip]
\draw [->] ({pic cs:djpromotion2}) [bend left] to ({pic cs:cjpromotion2});
\draw [->] ({pic cs:djpromotion3}) [bend right] to ({pic cs:cjpromotion3});
\draw [->] ({pic cs:djpromotion4}) to ({pic cs:cjpromotion4});
\end{tikzpicture}
\end{center}
\caption{A computation of $\phi : \oc A \multimap B$ on the j-pointing tape.}
\label{FigPhiOnTape}
\end{figure}
Then, $\phi^\dagger$ plays as in Figure~\ref{FigPhiDaggerOnTape}, where $\boldsymbol{g} \in \mathcal{T}$ is arbitrarily chosen by O.
\begin{figure}[H]
\begin{center}
\begin{tabular}{ccc}
$[\oc A_{\mathscr{W}}]_{\Lbag^{\underline{0}} \Lbag^{\underline{1}} \boldsymbol{g}^{++} \Rbag^{\underline{1}} \hbar \Lbag^{\underline{1}} \boldsymbol{f}^{++} \Rbag^{\underline{1}} \Rbag^{\underline{0}} \hbar \boldsymbol{e}}$ & $\stackrel{\phi^\dagger}{\multimap}$ & $[\oc B_{\mathscr{E}}]_{\Lbag^{\underline{0}} \boldsymbol{g}^+ \Rbag^{\underline{0}} \hbar \boldsymbol{h}}$ \\ \hline 
&$\vdots$& \\
&& $\boldsymbol{\mathsf{h}}$ \\
&& $\hbar$ \\
&& $\Lbag^{\underline{0}} \boldsymbol{\mathsf{g}}^+ \Rbag^{\underline{0}}$ \\
&& $\mathsf{b_E}$ \\
&& \tikzmark{cjpromotion100} $\$$ \tikzmark{cjpromotiond3} \\
&$\vdots$& \\
$\boldsymbol{\mathsf{e}}$&&\\
$\hbar$ && \\
$\Lbag^{\underline{1}} \boldsymbol{\mathsf{f}}^{++} \Rbag^{\underline{1}} \Rbag^{\underline{0}}$ \\
$\hbar$&& \\
$\Lbag^{\underline{0}}  \Lbag^{\underline{1}} \boldsymbol{\mathsf{g}}^{++} \Rbag^{\underline{1}}$ && \\
$\mathsf{a_W}$ && \\
\tikzmark{cjpromotiond2} $\$$ \tikzmark{djpromotion100} && \\
&$\vdots$ & \\
$\boldsymbol{\mathsf{e'}}$&&\\
$\hbar$ && \\
$\Lbag^{\underline{1}} \boldsymbol{\mathsf{f'}}^{++} \Rbag^{\underline{1}} \Rbag^{\underline{0}}$ \\
$\hbar$&& \\
$\Lbag^{\underline{0}} \Lbag^{\underline{1}} \boldsymbol{\mathsf{g}}^{++} \Rbag^{\underline{1}}$ && \\
$\mathsf{a'_W}$ && \\ 
\tikzmark{djpromotiond2} $\$$ && \\
&& $\boldsymbol{\mathsf{h'}}$ \\
&& $\hbar$ \\
&& $\Lbag^{\underline{0}} \boldsymbol{\mathsf{g}}^+ \Rbag^{\underline{0}}$ \\
&& $\mathsf{b'_E}$ \\
&&$\$$ \tikzmark{djpromotiond3} 
\end{tabular}
\begin{tikzpicture}[overlay, remember picture, yshift=.25\baselineskip]
\draw [->] ({pic cs:djpromotiond2}) [bend left] to ({pic cs:cjpromotiond2});
\draw [->] ({pic cs:djpromotiond3}) [bend right] to ({pic cs:cjpromotiond3});
\draw [->] ({pic cs:djpromotion100}) to ({pic cs:cjpromotion100});
\end{tikzpicture}
\caption{A computation of $\phi^\dagger : \oc A \multimap \oc B$.}
\label{FigPhiDaggerOnTape}
\end{center}
\end{figure}

Similarly to the case of fixed-point strategies, it is then easy to construct from $\mathscr{A}_\phi$ a j-pushdown automaton $\mathscr{A}_{\phi^\dagger}$ that realizes $\phi^\dagger$, completing the proof.
\end{proof}

Let us remark again that j-pushdown automata only compute the next P-move into the stack, not the entire play onto the tape.
In other words, they need to interact with O (and J) for PCF-completeness.
Hence, it seems that their computational power comes from the interactive nature of their computation.
However, the answer is not completely `yes' because the \emph{restriction} on the cells which j-pushdown automata may move to is actually a key point as well; see the next section.

\subsection{Stand-alone Turing completeness of j-stack automata}
\label{StandAloneTuringCompleteness}
Let us point out that instead of j-pushdown automata we may employ deterministic, non-erasing \emph{stack automata} \cite{ginsburg1967stack,hopcroft1967nonerasing} such that the stack is equipped with edges similarly to j-pointing tapes, and they can access only the stack cells that correspond to P-views, where positions of a game are recorded in the stack (n.b., the input tape is not used at all).

We call such restricted stack automata \emph{\bfseries j-stack automata} (n.b., we do not have to define them in detail here for Definition~\ref{DefJPushdownAutomata} implies clearly enough what they are).
Applying the proof of Theorem~\ref{ThmJ}, it is easy to see that j-stack automata are PCF-complete. 
Thus, this alternative approach would have certainly achieved the aim of Section~\ref{PCFCompletenessOfJPushdownAutomata} as well for j-stack automata are also strictly weaker than TMs \cite{hopcroft1967nonerasing}.
\begin{proposition}[Weakness of j-stack automata]
\label{PropStrictWeaknessSecond}
There is a formal language that can be recognized by a TM but not by any j-stack automaton. 
\end{proposition}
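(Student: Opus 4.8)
The plan is to follow the same route as the proof of Proposition~\ref{PropStrictWeakness}. The starting point is the classical result of Hopcroft and Ullman \cite{hopcroft1967nonerasing} that non-deterministic, non-erasing stack automata are strictly weaker than TMs: there is a language $L_0$ that is recognized by some TM but by no non-erasing stack automaton. It therefore suffices to show that every formal language recognized by a j-stack automaton is already recognized by some (non-deterministic) non-erasing stack automaton; then $L_0$ witnesses the claim.

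First I would make precise what it means for a j-stack automaton to recognize a formal language. As with ordinary stack automata, an input is a finite string presented in the standard way --- on the read-only input tape, with the stack starting empty --- and crucially it carries \emph{no} directed edges; the game-semantic encoding with edges, supplied by Judge during an interaction, is simply absent in this setting. Under this reading a j-stack automaton differs from a deterministic non-erasing stack automaton in only two respects: (i) the stack-analogue of the access restriction JPA3, which merely forbids the reading head from entering certain stack cells and never enables a transition that was not already available; and (ii) the ability to attach a directed edge to a stack cell while pushing. I would then argue that neither feature adds recognition power: restriction (i) can only shrink the set of reachable configurations, and as for (ii), attaching an edge to a target cell presupposes that the automaton has already located that cell by its own computation, so --- unlike in the interactive setting, where the edges act as an externally provided route recorder --- the edges yield no shortcut here. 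Hence whatever a j-stack automaton accepts, a plain deterministic (hence also non-deterministic) non-erasing stack automaton accepts too.

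Combining the two steps, $L_0$ is TM-recognizable but lies outside the class recognized by j-stack automata, which is exactly Proposition~\ref{PropStrictWeaknessSecond}.

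The only genuinely non-routine point --- and hence the expected main obstacle --- is the second step: turning the informal remark that ``an ordinary input carries no edges'' into a clean argument that the edge-creation capability collapses to nothing for language recognition. I expect to dispatch it in the same style as Proposition~\ref{PropStrictWeakness}, namely by noting that (a) an ordinary input is edge-free and (b) the access restriction can only remove, never add, reachable configurations; everything else is bookkeeping about the definition of stack automata, which, since stack automata are a well-established model \cite{ginsburg1967stack,hopcroft1967nonerasing}, can legitimately be left to the reader.
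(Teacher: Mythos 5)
Your proposal is correct and follows essentially the same route as the paper, which likewise reduces the claim to the Hopcroft--Ullman separation of non-erasing stack automata from TMs by observing that on an ordinary (edge-free) input a j-stack automaton behaves as a deterministic, non-erasing stack automaton, its extra features (the access restriction and the stack edges) only restricting or reorganizing, never extending, what it can reach. The paper states this even more tersely than you do, so your elaboration of why the edge mechanism adds no recognition power in the stand-alone setting is a welcome, but not divergent, refinement.
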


In addition, there is a conceptual advantage of j-stack automata over j-pushdown automata: For j-pushdown automata, we have to assume that each stack content (representing the next P-move) is \emph{automatically} copied onto the input tape, say, by J (i.e., the game-semantic compromise), while it is not the case at all for j-stack automata because their computation (with O) always occurs in the stack, where O outputs the next O-move into the stack too (i.e., we may dispense with the input tape and J). 

On the other hand, pushdown automata are a priori more restricted than stack automata, which is the main reason why we have employed j-pushdown automata, rather than j-stack automata, for the main theorem (Theorem~\ref{ThmJ}). 

Nevertheless, given an input in the stack, no interaction with O is necessary for the game-semantic \emph{first-order} or \emph{classical} computation of j-stack automata (n.b., recall that P, not O, computes internal O-moves occurring in a position of a game, which is possible thanks to the axiom \textsc{Dum} in Definition~\ref{DefLegalPositions}).
Hence, they are \emph{Turing complete}, where (given an input in the stack) they compute \emph{without any interaction with O (or J)}, which does not hold for j-pushdown automata.
Let us summarize the argument as:
\begin{corollary}[Main corollary]
\label{CoroT}
Deterministic, non-erasing stack automata, equipped with at most one directed (downwards) edge between each pair of stack cells and some restriction on the stack cells that the automata may access, are Turing complete, where the automata never interact with another computational agent. 
\end{corollary}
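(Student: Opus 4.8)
The plan is to obtain the corollary as a specialisation of Theorem~\ref{ThmJ}. First I would make precise that a \emph{j-stack automaton} is a deterministic, non-erasing stack automaton whose stack is viewed as a directed graph carrying, besides the cell-ordering edges, at most one further directed (downward, i.e.\ towards the bottom) edge between each pair of cells, and whose stack head, whenever it sits on the separator $\$$ that immediately follows the symbols encoding a non-initial O-move, is forced to jump along the (unique) such edge; thus, exactly as a j-pushdown automaton may visit only the input-tape cells encoding the P-view of the current position (axiom JPA3), a j-stack automaton may visit only the stack cells encoding the P-view of the position currently recorded in its stack, and the downward edges realise justifier pointers, which always point to earlier (hence lower) cells. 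Since a stack automaton subsumes a pushdown automaton, j-stack automata are at least as expressive as j-pushdown automata (which is why the latter were used for the main theorem), and by \cite{hopcroft1967nonerasing} they are still strictly weaker than TMs as formal-language recognisers, which is Proposition~\ref{PropStrictWeaknessSecond}.

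Next I would transfer the proof of Theorem~\ref{ThmJ} essentially verbatim, recording each position $\boldsymbol{s}\in P_G$ in the stack in the format $\overline{\boldsymbol{s}}$ of Definition~\ref{DefJPointingTapes}, with the pointers realised by downward edges and the distinguished symbols $\vdash$, $\$$ as before, and letting the automaton push the computed next P-move on top, prefixed by its justifier tag $\mathsf{i}$, $\mathsf{ii}$ or $\mathsf{iii}$. The sole new point over j-pushdown automata is that the automaton now reads, and follows edges inside, its own stack rather than a separate input tape, which a stack automaton can do; the non-erasing discipline is preserved because game-semantic PCF-computation never retracts a move, so the recorded positions only grow. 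This already yields: every strategy in $\mathcal{DPCF}$ is realised by a j-stack automaton, i.e.\ j-stack automata are PCF-complete (the ``JSAs with O'' entry of the table), and hence, Proposition~\ref{PropStrictWeaknessSecond} notwithstanding, they implement all of PCF.

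For the corollary proper I would then restrict to classical computation. Fix a partial recursive function $f:\mathbb{N}^k\rightharpoonup\mathbb{N}$; by Turing completeness of PCF it is denoted by some strategy $\sigma_f\in\mathcal{DPCF}$ on a game $G_f$ built from $\mathcal{N}$, $T$, $\&$ and $\Rightarrow$, and the previous paragraph gives a j-stack automaton $\mathscr{A}_{\sigma_f}$ realising $\sigma_f$. I would run $\mathscr{A}_{\sigma_f}$ stand-alone as follows: initialise the stack with a fixed reasonable encoding of $(n_1,\dots,n_k)$ — concretely the finite segments of positions of the input copies of $\mathcal{N}$ that encode $\underline{n_1},\dots,\underline{n_k}$, kept in a reserved region of the stack — and let the automaton alternate between playing the next O-move and invoking its P-move subroutine. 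The crux is that every O-move arising can be produced by the automaton itself: the first move is the unique initial O-move $[\hat{q}_{\mathscr{E}}]$; every external O-move of the codomain $\mathcal{N}$ is forced, since in $\mathcal{N}$ the only move with which O may prolong a play is $[q]$, justified by the last $[\mathit{yes}]$, and $\boldsymbol{2}$, $T$ admit no genuine O-choice beyond their unique initial move; every external O-move in an input copy of $\oc\mathcal{N}$ is the next digit of the relevant $n_i$, which the automaton reads off the pre-loaded segment; and every internal O-move is, by the axiom \textsc{Dum} of Definition~\ref{DefLegalPositions}, the mere dummy of the preceding internal P-move and is thus already computed by P. Hence $\mathscr{A}_{\sigma_f}$ needs neither Opponent nor Judge: from the pre-loaded input it reconstructs the whole play inside its stack, which terminates with $[\hat{q}_{\mathscr{E}}]\,([\mathit{yes}_{\mathscr{E}}]\,[q_{\mathscr{E}}])^{f(n_1,\dots,n_k)}\,[\mathit{no}_{\mathscr{E}}]$ recorded in the stack when $f(n_1,\dots,n_k)$ is defined, and runs forever otherwise — i.e.\ $\mathscr{A}_{\sigma_f}$ computes $f$ stand-alone. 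As these $\mathscr{A}_{\sigma_f}$ are exactly deterministic, non-erasing stack automata with at most one directed downward edge between each pair of stack cells plus the P-view access restriction, and such automata are strictly weaker than TMs as language recognisers, the corollary follows.

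The main obstacle I expect is the middle assertion of the last paragraph: one must verify carefully that at every O-turn of a first-order play the automaton has all it needs to continue on its own — that the external O-moves really are forced (or, where not literally unique, innocuous for the final output), that the input-providing O-moves are determined by the pre-loaded data even across the several $\oc$-threads that may open on a single input, and, crucially, that the P-view restriction on stack access never obstructs a cell the automaton must read, in particular the pre-loaded input cells when re-scanned for a new thread. All of this reduces to the standard combinatorics of P-views under $\&$, $\Rightarrow$, $\oc$ and $\ddagger$ already established in \cite{yamada2016dynamic,yamada2019game}, together with the elementary shape of plays in $\mathcal{N}$, $\boldsymbol{2}$, $T$; assembling it is routine but carries the real content of the corollary.
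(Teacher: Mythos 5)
Your proposal is correct and follows essentially the same route as the paper: transfer the proof of Theorem~\ref{ThmJ} to stack automata with edges and the P-view access restriction to get PCF-completeness, then observe that for first-order computation with the input pre-loaded in the stack every O-move is either forced, read off the input encoding, or an internal dummy computed by P via the axiom \textsc{Dum}, so no interaction with O or J is needed. You in fact spell out the case analysis of O-moves more explicitly than the paper does, and your closing caveat about verifying that the P-view restriction never blocks access to needed cells is a fair acknowledgement of detail the paper likewise leaves implicit.
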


As remarked before, the corollary in particular shows that Turing completeness of the game-semantic approach does not come from its interaction with O, which rather contributes to its higher-order aspect (or the path from Turing completeness to PCF-completeness), but rather from the additional edges on the input tape or the stack, which serve as a `route recorder' that indicates where necessary information exists. 
Let us also remark that by the same argument as the corollary we may show that j-pushdown automata with interaction with J are Turing complete without interaction with O. 

To summarize, we have shown:
\begin{center}
\begin{tabular}{| c | c |} \hline
{\bfseries Turing completeness} & {\bfseries PCF-completeness} \\ \hline
J-pushdown automata (JPAs) with J & JPAs with J \& O \\ 
J-stack automata (JSAs) & JSAs with O \\ \hline
\end{tabular}
\end{center}

\section{Conclusion and future work}
\label{ConclusionAndFutureWork}
The present work has revisited the game-semantic model of higher-order computation and reformulated it in terms of automata, or equivalently, brought the game-semantic framework into automata theory.
The resulting approach is novel and has established somewhat surprising consequences: PCF-completeness of j-pushdown automata (Theorem~\ref{ThmJ}) and Turing completeness of j-stack automata (Corollary~\ref{CoroT}).
Theorem~\ref{ThmJ} demonstrates the power of combining the interactive computation and the restriction of automata by edges, where the restriction \emph{saves} certain computation in a novel way, and Corollary~\ref{CoroT} deepens the result further by showing that only the restriction is actually enough for Turing completeness of automata that are strictly weaker than TMs, and the interaction (with O) rather contributes to higher-order nature of computation. 

From a methodological point, the present work has demonstrated high potential of the game-semantic method for automata theory. 
That is, assuming the setting that has been specific to game semantics, we may obtain novel, non-trivial results for automata theory. 

As future work, it would be interesting to identify an automata-theoretic lower bound of the game-semantic PCF- or Turing completeness, i.e., the least powerful automata that are PCF- or Turing complete in the game-semantic framework.
More generally, we are interested in a correspondence between automata in the game-semantic framework and formal languages; as the present work indicates, it would form a new hierarchy, which is different from the well-established \emph{Chomsky hierarchy} \cite{chomsky1956three,sipser2006introduction}. 
Finally, it would be fruitful to formulate computational complexity theory \cite{kozen2006theory} by combining automata theory and the game-semantic model of computation as in the present work; for instance, it might be possible to define computational complexity of strategies \emph{relative} to that of oracle computation (i.e., computation by O), which would be an accurate measure for computational complexity of higher-order computation.

\section*{Acknowledgments}
The author acknowleges the financial support from Funai Overseas Scholarship, and also he is grateful to Samson Abramsky, Luke Ong and John Longley for fruitful discussions.

\bibliographystyle{apalike}
\bibliography{CategoricalLogic,GamesAndStrategies,RecursionTheory,PCF,TypeTheoriesAndProgrammingLanguages,HoTT,GoI,LinearLogic}

\appendix
\section{Constructions on games}
\label{AppendixConstructionsOnGames}
In this section, we present the formal definitions of standard constructions on games given in the previous work \cite{yamada2019game}.

\subsection{Concatenation of games \cite{yamada2019game}}
\label{DefConcatenationOfGames}
The \emph{\bfseries concatenation} $J \ddagger K$ of games $J$ and $K$ such that $\mathcal{H}^\omega(J) \trianglelefteqslant A \multimap B$ and $\mathcal{H}^\omega(K) \trianglelefteqslant B \multimap C$ for some normalized games $A$, $B$ and $C$ is given by:
\begin{itemize}

\item $M_{J \ddagger K} \stackrel{\mathrm{df. }}{=} \{ [(a, \mathscr{W})]_{\boldsymbol{e}} \mid [(a, \mathscr{W})]_{\boldsymbol{e}} \in M_J^{\mathsf{Ext}}, [a]_{\boldsymbol{e}} \in M_A \ \! \} \\ \cup \{ [(c, \mathscr{E})]_{\boldsymbol{f}} \mid [(c, \mathscr{E})]_{\boldsymbol{f}} \in M_K^{\mathsf{Ext}}, [c]_{\boldsymbol{f}} \in M_C \ \! \} \\ \cup \{ [((b, \mathscr{E}), \mathscr{S})]_{\boldsymbol{g}} \mid [(b, \mathscr{E})]_{\boldsymbol{g}} \in M_J^{\mathsf{Ext}}, [b]_{\boldsymbol{g}} \in M_B \ \! \} \\ \cup \{ [((b, \mathscr{W}), \mathscr{N})]_{\boldsymbol{g}} \mid [(b, \mathscr{W})]_{\boldsymbol{g}} \in M_K^{\mathsf{Ext}}, [b]_{\boldsymbol{g}} \in M_B \ \! \} \\ \cup \{ [(m, \mathscr{S})]_{\boldsymbol{l}} \mid [m]_{\boldsymbol{l}} \in M_J^{\mathsf{Int}} \ \! \} \cup \{ [(n, \mathscr{N})]_{\boldsymbol{r}} \mid [n]_{\boldsymbol{r}} \in M_K^{\mathsf{Int}} \ \! \}$;

\item $M_{J \ddagger K}^{\mathsf{Init}} \stackrel{\mathrm{df. }}{=}\{ [(c, \mathscr{E})]_{\boldsymbol{f}} \mid [(c, \mathscr{E})]_{\boldsymbol{f}} \in M_K^{\mathsf{Ext}}, [c]_{\boldsymbol{f}} \in M_C^{\mathsf{Init}} \ \! \}$;

\item $\lambda_{J \ddagger K}([(m, X)]_{\boldsymbol{e}}) \stackrel{\mathrm{df. }}{=} \begin{cases} (\lambda_J^{\mathsf{OP}}([m]_{\boldsymbol{e}}), \mathsf{I}) &\text{if $X = \mathscr{S} \wedge \exists [b]_{\boldsymbol{e}} \in M_B . \ \! [m]_{\boldsymbol{e}} = [(b, \mathscr{E})]_{\boldsymbol{e}} \in M_J^{\mathsf{Ext}}$;} \\ \lambda_J([m]_{\boldsymbol{e}}) &\text{if $X = \mathscr{W} \vee (X = \mathscr{S} \wedge [m]_{\boldsymbol{e}} \in M_J^{\mathsf{Int}})$;} \\ (\lambda_K^{\mathsf{OP}}([m]_{\boldsymbol{e}}), \mathsf{I}) &\text{if $X = \mathscr{N} \wedge \exists [b]_{\boldsymbol{e}} \in M_B . \ \! [m]_{\boldsymbol{e}} = [(b, \mathscr{W})]_{\boldsymbol{e}} \in M_K^{\mathsf{Ext}}$;} \\ \lambda_K([m]_{\boldsymbol{e}}) &\text{if $X = \mathscr{E} \vee (X = \mathscr{N} \wedge [m]_{\boldsymbol{e}} \in M_K^{\mathsf{Int}})$;} \end{cases}$

\item $\Delta_{J \ddagger K} ([(m, X)]_{\boldsymbol{e}}) \stackrel{\mathrm{df. }}{=} \begin{cases} [(m', \mathscr{S})]_{\boldsymbol{e}} &\text{if $X = \mathscr{S}$ and $\Delta_J([m]_{\boldsymbol{e}}) = [m']_{\boldsymbol{e}}$;} \\ [(m'', \mathscr{N})]_{\boldsymbol{e}} &\text{if $X = \mathscr{N}$ and $\Delta_K([m]_{\boldsymbol{e}}) = [m'']_{\boldsymbol{e}}$;} \\ [((b, \mathscr{W}), \mathscr{N})]_{\boldsymbol{e}} &\text{if $X = \mathscr{S}$, $\Delta_J([m]_{\boldsymbol{e}}) \uparrow$ and $m = (b, \mathscr{E})$;} \\ [((b, \mathscr{E}), \mathscr{S})]_{\boldsymbol{e}} &\text{if $X = \mathscr{N}$, $\Delta_K([m]_{\boldsymbol{e}}) \uparrow$ and $m = (b, \mathscr{W})$;} \end{cases}$

\item $P_{J \ddagger K} \stackrel{\mathrm{df. }}{=} \{ \boldsymbol{s} \in \mathscr{J}_{J \ddagger K} \mid \boldsymbol{s} \upharpoonright J \in P_J, \boldsymbol{s} \upharpoonright K \in P_K, \boldsymbol{s} \upharpoonright B^{[0]}, B^{[1]} \in \mathit{pr}_B \ \! \}$, where the map $\mathit{att}_{J \ddagger K} : M_{J \ddagger K} \to \{ J, K \}$ is given by $[(a, \mathscr{W})]_{\boldsymbol{e}} \mapsto J$, $[(m, \mathscr{S})]_{\boldsymbol{l}} \mapsto J$, $[((b, \mathscr{E}), \mathscr{S})]_{\boldsymbol{g}} \mapsto J$, $[(c, \mathscr{E})]_{\boldsymbol{f}} \mapsto K$, $[(n, \mathscr{N})]_{\boldsymbol{r}} \mapsto K$, $[((b, \mathscr{W}), \mathscr{N})]_{\boldsymbol{g}} \mapsto K$, and the map $\mathit{peel}_{J \ddagger K} : M_{J \ddagger K} \to M_J \cup M_K$ by $[(a, \mathscr{W})]_{\boldsymbol{e}} \mapsto [(a, \mathscr{W})]_{\boldsymbol{e}}$, $[(c, \mathscr{E})]_{\boldsymbol{f}} \mapsto [(c, \mathscr{E})]_{\boldsymbol{f}}$, $[((b, \mathscr{E}), \mathscr{S})]_{\boldsymbol{g}} \mapsto [(b, \mathscr{E})]_{\boldsymbol{g}}$, $[((b, \mathscr{W}), \mathscr{N})]_{\boldsymbol{g}} \mapsto [(b, \mathscr{W})]_{\boldsymbol{g}}$, $[(m, \mathscr{S})]_{\boldsymbol{l}} \mapsto [m]_{\boldsymbol{l}}$, $[(n, \mathscr{N})]_{\boldsymbol{r}} \mapsto [n]_{\boldsymbol{r}}$, $\boldsymbol{s} \upharpoonright J$ (resp. $\boldsymbol{s} \upharpoonright K$) is the j-subsequence of $\boldsymbol{s}$ that consists of moves $m$ such that $\mathit{att}_{J \ddagger K}(m) = J$ (resp. $\mathit{att}_{J \ddagger K}(m) = K$) yet changed into $\mathit{peel}_{J \ddagger K}(m)$, $\boldsymbol{s} \upharpoonright B^{[0]}, B^{[1]}$ is the j-subsequence of $\boldsymbol{s}$ that consists of moves $[((b, X), Y)]_{\boldsymbol{e}}$ such that $[b]_{\boldsymbol{e}} \in M_B \wedge ((X = \mathscr{E} \wedge Y = \mathscr{S}) \vee (X = \mathscr{W} \wedge Y = \mathscr{N}))$ yet changed into $[(b, \overline{X})]_{\boldsymbol{e}}$, for which $\overline{\mathscr{E}} \stackrel{\mathrm{df. }}{=} \mathscr{W}$ and $\overline{\mathscr{W}} \stackrel{\mathrm{df. }}{=} \mathscr{E}$, and $\mathit{pr}_B \stackrel{\mathrm{df. }}{=} \mathsf{Pref}(\mathit{cp}_B)$.

\end{itemize}

\subsection{Pairing of games \cite{yamada2019game}}
\label{DefPairingOfGames}
The \emph{\bfseries pairing} $\langle L, R \rangle$ of games $L$ and $R$ such that $\mathcal{H}^\omega(L) \trianglelefteqslant C \multimap A$ and $\mathcal{H}^\omega(R) \trianglelefteqslant C \multimap B$ for any normalized games $A$, $B$ and $C$ is given by:
\begin{itemize}

\item $M_{\langle L, R \rangle} \stackrel{\mathrm{df. }}{=} \{ [(c, \mathscr{W})]_{\boldsymbol{e}} \mid [(c, \mathscr{W})]_{\boldsymbol{e}} \in M_L^{\mathsf{Ext}} \cup M_R^{\mathsf{Ext}}, [c]_{\boldsymbol{e}} \in M_C \ \! \} \\ \cup \{ [((a, \mathscr{W}), \mathscr{E})]_{\boldsymbol{f}} \mid [(a, \mathscr{E})]_{\boldsymbol{f}} \in M_L^{\mathsf{Ext}}, [a]_{\boldsymbol{f}} \in M_A \ \! \} \\ \cup \{ [((b, \mathscr{E}), \mathscr{E})]_{\boldsymbol{g}} \mid [(b, \mathscr{E})]_{\boldsymbol{g}} \in M_R^{\mathsf{Ext}}, [b]_{\boldsymbol{g}} \in M_B \ \! \} \\ \cup \{ [(l, \mathscr{S})]_{\boldsymbol{h}} \mid [l]_{\boldsymbol{h}} \in M_L^{\mathsf{Int}} \ \! \} \cup \{ [(r, \mathscr{N})]_{\boldsymbol{k}} \mid [r]_{\boldsymbol{k}} \in M_R^{\mathsf{Int}} \ \! \}$;

\item $M_{\langle L, R \rangle}^{\mathsf{Init}} \stackrel{\mathrm{df. }}{=} \{ m \in M_{\langle L, R \rangle} \mid \mathit{peel}_{\langle L, R \rangle}(m) \in M_{\mathit{att}_{\langle L, R \rangle}(m)}^{\mathsf{Init}} \ \! \}$, where the map $\mathit{peel}_{\langle L, R \rangle} : M_{\langle L, R \rangle} \to M_L \cup M_R$ is given by $[(c, \mathscr{W})]_{\boldsymbol{e}} \mapsto [(c, \mathscr{W})]_{\boldsymbol{e}}$, $[((a, \mathscr{W}), \mathscr{E})]_{\boldsymbol{f}} \mapsto [(a, \mathscr{E})]_{\boldsymbol{f}}$, $[((b, \mathscr{E}), \mathscr{E})]_{\boldsymbol{g}} \mapsto [(b, \mathscr{E})]_{\boldsymbol{g}}$, $[(l, \mathscr{S})]_{\boldsymbol{h}} \mapsto [l]_{\boldsymbol{h}}$, $[(r, \mathscr{N})]_{\boldsymbol{k}} \mapsto [r]_{\boldsymbol{k}}$, and the map $\mathit{att}_{\langle L, R \rangle} : M_{\langle L, R \rangle} \to \{ L, R, C \}$ by $[(c, \mathscr{W})]_{\boldsymbol{e}} \mapsto C$, $[((a, \mathscr{W}), \mathscr{E})]_{\boldsymbol{f}} \mapsto L$, $[((b, \mathscr{E}), \mathscr{E})]_{\boldsymbol{g}} \mapsto R$, $[(l, \mathscr{S})]_{\boldsymbol{h}} \mapsto L$, $[(r, \mathscr{N})]_{\boldsymbol{k}} \mapsto R$;

\item $\lambda_{\langle L, R \rangle} ([(m, X)]_{\boldsymbol{e}}) \stackrel{\mathrm{df. }}{=} \begin{cases} \overline{\lambda_C} ([m]_{\boldsymbol{e}}) &\text{if $X = \mathscr{W}$;} \\ \lambda_A([a]_{\boldsymbol{e}}) &\text{if $X = \mathscr{E}$ and $m$ is of the form $(a, \mathscr{W})$;} \\ \lambda_B([b]_{\boldsymbol{e}}) &\text{if $X = \mathscr{E}$ and $m$ is of the form $(b, \mathscr{E})$;} \\ \lambda_L([m]_{\boldsymbol{e}}) &\text{if $X = \mathscr{S}$;} \\ \lambda_R([m]_{\boldsymbol{e}}) &\text{if $X = \mathscr{N}$;} \end{cases}$

\item $\Delta_{\langle L, R \rangle} ([(m, X)]_{\boldsymbol{e}}) \stackrel{\mathrm{df. }}{=} \begin{cases} [(l', \mathscr{S})]_{\boldsymbol{e}} &\text{if $X = \mathscr{S}$, where $\Delta_L([l]_{\boldsymbol{e}}) = [l']_{\boldsymbol{e}}$;} \\ [(r', \mathscr{N})]_{\boldsymbol{e}} &\text{if $X = \mathscr{N}$, where $\Delta_R([r]_{\boldsymbol{e}}) = [r']_{\boldsymbol{e}}$;} \end{cases}$

\item $P_{\langle L, R \rangle} \stackrel{\mathrm{df. }}{=} \{ \boldsymbol{s} \in \mathscr{L}_{\langle L, R \rangle} \mid (\boldsymbol{s} \upharpoonright L \in P_L \wedge \boldsymbol{s} \upharpoonright B = \boldsymbol{\epsilon}) \vee (\boldsymbol{s} \upharpoonright R \in P_R \wedge \boldsymbol{s} \upharpoonright A = \boldsymbol{\epsilon}) \ \! \}$, where $\boldsymbol{s} \upharpoonright L$ (resp. $\boldsymbol{s} \upharpoonright R$) is the j-subsequence of $\boldsymbol{s}$ that consists of moves $x$ such that $\mathit{peel}_{\langle L, R \rangle}(x) \in M_L$ (resp. $\mathit{peel}_{\langle L, R \rangle}(x) \in M_R$) changed into $\mathit{peel}_{\langle L, R \rangle}(x)$, and $\boldsymbol{s} \upharpoonright B$ (resp. $\boldsymbol{s} \upharpoonright A$) is the j-subsequence of $\boldsymbol{s}$ that consists of moves of the form $[((b, \mathscr{E}), \mathscr{E})]_{\boldsymbol{g}}$ with $[b]_{\boldsymbol{g}} \in M_B$ (resp. $[((a_0, \mathscr{W}), \mathscr{E})]_{\boldsymbol{f}}$ with $[a]_{\boldsymbol{g}} \in M_A$).

\end{itemize}


\subsection{Promotion of games \cite{yamada2019game}}
\label{DefPromotionOfGames}
Given a game $G$ such that $\mathcal{H}^\omega(G) \trianglelefteqslant A \Rightarrow B$ for any normalized games $A$ and $B$, the \emph{\bfseries promotion} $G^\dagger$ of $G$ is given by:
\begin{itemize}

\item $M_{G^\dagger} \stackrel{\mathrm{df. }}{=} \{ [(a, \mathscr{W})]_{\Lbag^{\underline{0}} \boldsymbol{f}^+ \Rbag^{\underline{0}} \hbar \boldsymbol{e}} \mid [(a, \mathscr{W})]_{\Lbag^{\underline{0}} \boldsymbol{f}^+ \Rbag^{\underline{0}} \hbar \boldsymbol{e}} \in M_G, [a]_{\boldsymbol{e}} \in M_A \ \! \} \\
\cup \{ [(b, \mathscr{E})]_{\Lbag^{\underline{0}} \boldsymbol{f}^+ \Rbag^{\underline{0}} \hbar \boldsymbol{e}} \mid [(b, \mathscr{E})]_{\boldsymbol{e}} \in M_G, [b]_{\boldsymbol{e}} \in M_B, \boldsymbol{f} \in \mathcal{T} \ \! \} \\ \cup \{ [(m, \mathscr{S})]_{\Lbag^{\underline{0}} \boldsymbol{f}^+ \Rbag^{\underline{0}} \hbar \boldsymbol{e}} \mid [m]_{\boldsymbol{e}} \in M_G^{\mathsf{Int}}, \boldsymbol{f} \in \mathcal{T} \ \! \}$;

\item $M_{G^\dagger}^{\mathsf{Init}} \stackrel{\mathrm{df. }}{=} \{ [(b, \mathscr{E})]_{\Lbag^{\underline{0}} \boldsymbol{f}^+ \Rbag^{\underline{0}} \hbar \boldsymbol{e}} \mid [(b, \mathscr{E})]_{\boldsymbol{e}} \in M_G, [b]_{\boldsymbol{e}} \in M_B^{\mathsf{Init}}, \boldsymbol{f} \in \mathcal{T} \ \! \}$;

\item $\lambda_{G^\dagger} : [(a, \mathscr{W})]_{\Lbag^{\underline{0}} \boldsymbol{f}^+ \Rbag^{\underline{0}} \hbar \boldsymbol{e}} \mapsto \lambda_G([(a, \mathscr{W})]_{\Lbag^{\underline{0}} \boldsymbol{f}^+ \Rbag^{\underline{0}} \hbar \boldsymbol{e}})$, $[(b, \mathscr{E})]_{\Lbag^{\underline{0}} \boldsymbol{f}^+ \Rbag^{\underline{0}} \hbar \boldsymbol{e}} \mapsto \lambda_G([(b, \mathscr{E})]_{\boldsymbol{e}})$, $[(m, \mathscr{S})]_{\Lbag^{\underline{0}} \boldsymbol{f}^+ \Rbag^{\underline{0}} \hbar \boldsymbol{e}} \mapsto \lambda_G([m]_{\boldsymbol{e}})$;

\item $\Delta_{G^\dagger} : [(m, \mathscr{S})]_{\Lbag^{\underline{0}} \boldsymbol{f}^+ \Rbag^{\underline{0}} \hbar \boldsymbol{e}} \stackrel{\mathrm{df. }}{=} [(m', \mathscr{S})]_{\Lbag^{\underline{0}} \boldsymbol{f}^+ \Rbag^{\underline{0}} \hbar \boldsymbol{e}}$, where $\Delta_G([m]_{\boldsymbol{e}}) = [m']_{\boldsymbol{e}}$;

\item $P_{G^\dagger} \stackrel{\mathrm{df. }}{=} \{ \boldsymbol{s} \in \mathscr{L}_{G^\dagger} \mid \forall \boldsymbol{f} \in \mathcal{T} . \ \! \boldsymbol{s} \upharpoonright \boldsymbol{f} \in P_G \wedge (\boldsymbol{s} \upharpoonright \boldsymbol{f} \neq \boldsymbol{\epsilon} \Rightarrow \forall \boldsymbol{g} \in \mathcal{T} . \ \! \boldsymbol{s} \upharpoonright \boldsymbol{g} \neq \boldsymbol{\epsilon} \Rightarrow \mathit{ede}(\boldsymbol{f}) \neq \mathit{ede}(\boldsymbol{g})) \ \! \}$, where $\boldsymbol{s} \upharpoonright \boldsymbol{f}$ is the j-subsequence of $\boldsymbol{s}$ that consists of moves of the form $[(a, \mathscr{W})]_{\Lbag^{\underline{0}} \Lbag^{\underline{1}} \boldsymbol{f}^{++} \Rbag^{\underline{1}} \hbar \Lbag^{\underline{1}} \boldsymbol{g}^{++} \Rbag^{\underline{1}} \Rbag^{\underline{0}} \hbar \boldsymbol{e}}$ with $[a]_{\boldsymbol{e}} \in M_A$, $[(b, \mathscr{E})]_{\Lbag^{\underline{0}} \boldsymbol{f}^+ \Rbag^{\underline{0}} \hbar \boldsymbol{e}}$ with $[b]_{\boldsymbol{e}} \in M_B$ or $[(m, \mathscr{S})]_{\Lbag^{\underline{0}} \boldsymbol{f}^+ \Rbag^{\underline{0}} \hbar \boldsymbol{e}}$ with $[m]_{\boldsymbol{e}} \in M_G^{\mathsf{Int}}$ yet changed into $[(a, \mathscr{W})]_{\Lbag^{\underline{0}} \boldsymbol{g}^+ \Rbag^{\underline{0}} \hbar \boldsymbol{e}}$, $[(b, \mathscr{E})]_{\boldsymbol{e}}$ or $[m]_{\boldsymbol{e}}$, respectively.

\end{itemize}

\subsection{Currying of games \cite{yamada2019game}}
\label{DefCurryingOnGames}
For a game $G$ and normalized games $A$, $B$ and $C$ such that $\mathcal{H}^\omega(G) \trianglelefteqslant A \otimes B \multimap C$, the \emph{\bfseries currying} $\Lambda(G)$ of $G$ is given by:
\begin{itemize}

\item $M_{\Lambda(G)} \stackrel{\mathrm{df. }}{=} \{ [(a, \mathscr{W})]_{\boldsymbol{e}} \mid [((a, \mathscr{W}), \mathscr{W})]_{\boldsymbol{e}} \in M_G^{\mathsf{Ext}}, [a]_{\boldsymbol{e}} \in M_A \ \! \} \\ \cup \{ [((b, \mathscr{W}), \mathscr{E})]_{\boldsymbol{f}} \mid [((b, \mathscr{E}), \mathscr{W})]_{\boldsymbol{f}} \in M_G^{\mathsf{Ext}}, [b]_{\boldsymbol{f}} \in M_B \ \! \} \\ \cup \{ [((c, \mathscr{E}), \mathscr{E})]_{\boldsymbol{g}} \mid [(c, \mathscr{E})]_{\boldsymbol{g}} \in M_G^{\mathsf{Ext}}, [c]_{\boldsymbol{g}} \in M_C \ \! \} \\ \cup \{ [(m, \mathscr{N})]_{\boldsymbol{h}} \mid [m]_{\boldsymbol{h}} \in M_G^{\mathsf{Int}} \ \! \}$;

\item $M_{\Lambda(G)}^{\mathsf{Init}} \stackrel{\mathrm{df. }}{=} \{ [((c, \mathscr{E}), \mathscr{E})]_{\boldsymbol{g}} \mid [(c, \mathscr{E})]_{\boldsymbol{g}} \in M_G^{\mathsf{Ext}}, [c]_{\boldsymbol{g}} \in M_C^{\mathsf{Init}} \ \! \}$;

\item $\lambda_{\Lambda(G)} \stackrel{\mathrm{df. }}{=} \lambda_G \circ \mathit{peel}_{\Lambda(G)}$, where the map $\mathit{peel}_{\Lambda(G)} : M_{\Lambda(G)} \to M_G$ is given by $[(a, \mathscr{W})]_{\boldsymbol{e}} \mapsto [((a, \mathscr{W}), \mathscr{W})]_{\boldsymbol{e}}$, $[((b, \mathscr{W}), \mathscr{E})]_{\boldsymbol{f}} \mapsto [((b, \mathscr{E}), \mathscr{W})]_{\boldsymbol{f}}$, $[((c, \mathscr{E}), \mathscr{E})]_{\boldsymbol{g}} \mapsto [(c, \mathscr{E})]_{\boldsymbol{g}}$, $[(m, \mathscr{N})]_{\boldsymbol{h}} \mapsto [m]_{\boldsymbol{h}}$;

\item $\Delta_{\Lambda(G)} : [(m, \mathscr{N})]_{\boldsymbol{e}} \mapsto [(m', \mathscr{N})]_{\boldsymbol{e}}$, where $\Delta_G : [m]_{\boldsymbol{e}} \mapsto [m']_{\boldsymbol{e}}$;

\item $P_{\Lambda(G)} \stackrel{\mathrm{df. }}{=} \{ \boldsymbol{s} \in \mathscr{L}_{\Lambda(G)} \mid \mathit{peel}_{\Lambda(G)}^\ast(\boldsymbol{s}) \in P_G \ \! \}$, where the structure of pointers in $\mathit{peel}_{\Lambda(G)}^\ast(\boldsymbol{s})$ is the same as the one in $\boldsymbol{s}$.

\end{itemize}

\section{Constructions on strategies}
\label{AppendixConstructionsOnStrategies}
Next, we present the formal definitions of standard constructions on strategies given in the previous work \cite{yamada2019game}.

\subsection{Concatenation and composition of strategies \cite{yamada2019game}}
\label{DefConcatenationAndCompositionOfStrategies}
Let $\sigma : J$ and $\tau : K$ such that $\mathcal{H}^\omega(J) \trianglelefteqslant A \multimap B$ and $\mathcal{H}^\omega(K) \trianglelefteqslant B \multimap C$ for some normalized games $A$, $B$ and $C$. 
The \emph{\bfseries concatenation} $\sigma \ddagger \tau : J \ddagger K$ of $\sigma$ and $\tau$ is given by: 
\begin{equation*}
\sigma \ddagger \tau \stackrel{\mathrm{df. }}{=} \{ \boldsymbol{s} \in \mathscr{L}_{J \ddagger K} \mid \boldsymbol{s} \upharpoonright J \in \sigma, \boldsymbol{s} \upharpoonright K \in \tau, \boldsymbol{s} \upharpoonright B^{[0]}, B^{[1]} \in \mathit{pr}_B \ \! \}
\end{equation*}
and their \emph{\bfseries composition} $\sigma ; \tau : \mathcal{H}^\omega(J \ddagger K)$ by $\sigma ; \tau \stackrel{\mathrm{df. }}{=} \mathcal{H}^\omega (\sigma \ddagger \tau)$ (see Theorem~\ref{ThmHidingTheorem}).

\if0
\begin{definition}[Concatenation of strategies \cite{yamada2019game}]
\label{DefConcatenationAndCompositionOfStrategies}
Let $\sigma : J$ and $\tau : K$; and assume that $\mathcal{H}^\omega(J) \trianglelefteqslant A \multimap B$ and $\mathcal{H}^\omega(K) \trianglelefteqslant B \multimap C$ for some normalized games $A$, $B$ and $C$. Their \emph{\bfseries concatenation} $\sigma \ddagger \tau : J \ddagger K$ is defined by: 
\begin{equation*}
\sigma \ddagger \tau \stackrel{\mathrm{df. }}{=} \{ \boldsymbol{s} \in \mathscr{J}_{J \ddagger K} \mid \boldsymbol{s} \upharpoonright J \in \sigma, \boldsymbol{s} \upharpoonright K \in \tau, \boldsymbol{s} \upharpoonright B^{[1]}, B^{[2]} \in \mathit{pr}_B \}.
\end{equation*}
\end{definition}
\fi

\subsection{Generalized pairing of strategies \cite{yamada2019game}}
\label{DefGeneralizedPairingOfStrategies}
Given strategies $\sigma : L$ and $\tau : R$ such that $\mathcal{H}^\omega(L) \trianglelefteqslant C \multimap A$, $\mathcal{H}^\omega(R) \trianglelefteqslant C \multimap B$ for some normalized games $A$, $B$ and $C$, the \emph{\bfseries (generalized) pairing} $\langle \sigma, \tau \rangle : \langle L, R \rangle$ of $\sigma$ and $\tau$ is defined by:
\begin{align*}
\langle \sigma, \tau \rangle \stackrel{\mathrm{df. }}{=} \{ \boldsymbol{s} \in \mathscr{L}_{\langle L, R \rangle} \mid (\boldsymbol{s} \upharpoonright L \in \sigma \wedge \boldsymbol{s} \upharpoonright B = \boldsymbol{\epsilon}) \vee (\boldsymbol{s} \upharpoonright R \in \tau \wedge \boldsymbol{s} \upharpoonright A = \boldsymbol{\epsilon}) \ \! \}.
\end{align*}


\subsection{Generalized promotion of strategies \cite{yamada2019game}}
\label{DefGeneralizedPromotionOfStrategies}
Given a strategy $\phi : G$ such that $\mathcal{H}^\omega(G) \trianglelefteqslant A \Rightarrow B$ for some normalized games $A$ and $B$, the \emph{\bfseries (generalized) promotion} $\phi^{\dagger} : G^\dagger$ of $\phi$ is defined by: 
\begin{equation*}
\phi^{\dagger} \stackrel{\mathrm{df. }}{=} \{ \boldsymbol{s} \in \mathscr{L}_{G^\dagger} \mid \forall e \in \mathcal{T} . \ \! \boldsymbol{s} \upharpoonright \boldsymbol{e} \in \phi \ \! \}.
\end{equation*}

\subsection{Currying of strategies \cite{yamada2019game}}
\label{DefCurryingOnStrategies}
If $\phi :  G$ and $\mathcal{H}^\omega(G) \trianglelefteqslant A \otimes B \multimap C$, where $A$, $B$ and $C$ are normalized, then the \emph{\bfseries currying} $\Lambda(\phi) : \Lambda(G)$ of $\phi$ is given by:
\begin{align*}
\Lambda(\phi) &\stackrel{\mathrm{df. }}{=} \{ \boldsymbol{s} \in \mathscr{L}_{\Lambda(G)} \mid \mathit{peel}_{\Lambda(G)}^\ast(\boldsymbol{s}) \in \phi \ \! \}.
\end{align*}

\section{Views \cite{hyland2000full,abramsky1999game}}
\label{DefViews}
Finally, we give the formal definition of \emph{views}.
Given a legal position $\boldsymbol{s} \in \mathscr{L}_G$ of an arena $G$, the \emph{\bfseries Player (P-) view} $\lceil \boldsymbol{s} \rceil_G$ and the \emph{\bfseries Opponent (O-) view} $\lfloor \boldsymbol{s} \rfloor_G$ (we often omit the subscript $G$) are defined by induction on $| \boldsymbol{s} |$ as follows: 
\begin{itemize}

\item$\lceil \boldsymbol{\epsilon} \rceil_G \stackrel{\mathrm{df. }}{=} \boldsymbol{\epsilon}$;

\item $\lceil \boldsymbol{s} m \rceil_G \stackrel{\mathrm{df. }}{=} \lceil \boldsymbol{s} \rceil_G . m$ if $m$ is a P-move;

\item $\lceil \boldsymbol{s} m \rceil_G \stackrel{\mathrm{df. }}{=} m$ if $m$ is initial;

\item $\lceil \boldsymbol{s} m \boldsymbol{t} n \rceil_G \stackrel{\mathrm{df. }}{=} \lceil \boldsymbol{s} \rceil_G . m n$ if $n$ is an O-move with $\mathcal{J}_{\boldsymbol{s} m \boldsymbol{t} n}(n) = m$;

\item $\lfloor \boldsymbol{\epsilon} \rfloor_G \stackrel{\mathrm{df. }}{=} \boldsymbol{\epsilon}$;

\item $\lfloor \boldsymbol{s} m \rfloor_G \stackrel{\mathrm{df. }}{=} \lfloor \boldsymbol{s} \rfloor_G . m$ if $m$ is an O-move;

\item $\lfloor \boldsymbol{s} m \boldsymbol{t} n \rfloor_G \stackrel{\mathrm{df. }}{=} \lfloor \boldsymbol{s} \rfloor_G . m n$ if $n$ is a P-move with $\mathcal{J}_{\boldsymbol{s} m \boldsymbol{t} n}(n) = m$

\end{itemize}
where the justifiers of the remaining occurrences in $\lceil \boldsymbol{s} \rceil_G$ (resp. $\lfloor \boldsymbol{s} \rfloor_G$) are unchanged if they occur in $\lceil \boldsymbol{s} \rceil_G$ (resp. $\lfloor \boldsymbol{s} \rfloor_G$) and undefined otherwise. 

\if0
\begin{thebibliography}{100}

\bibitem{yamada2019game}
N. Yamada, A game-semantic model of computation, {\it Res. Math. Sci.} (accepted for publication, DOI: 10.1007/s40687-018-0163-z, 2018).

\bibitem{abramsky1999game}
S. Abramsky and G. McCusker, Game semantics, in {\it Computational Logic, NATO ASI Ser.} {\bf 165}, eds. U. Berger H. Schwichtenberg (Springer, Berlin, Heidelberg, 1999) pp. 1-55.  

\bibitem{hyland1997game}
M. Hyland, Game semantics, in {\it Semantics and Logics of Computation}, eds. A. Pitts and P. Dybjer (Cambridge University Press, New York, 1997) pp. 131-184.

\bibitem{turing1937computable}
A. M. Turing, On computable numbers, with an application to the Entscheidungsproblem, {\it Proc. London Math. Soc.} {\bf 2(1)} (1937) pp. 230--265.

\bibitem{longley2015higher}
J. Longley and D. Normann, {\it Higher-Order Computability} (Springer, Heidelberg, 2015).

\bibitem{scott1993type}
D. Scott, A type-theoretic alternative to ISWIM, CUCH, OWHY, {\it Theoret. Comput. Sci.} {\bf 121(1)} (1993) pp. 411--440.

\bibitem{plotkin1977lcf}
G. D. Plotkin, LCF considered as a programming language, {\it Theoret. Comput. Sci.} {\bf 5(3)} (1977) pp. 223--255.

\bibitem{gunter1992semantics}
C. A. Gunter, {\it Semantics of Programming Languages: Structures and Techniques} (MIT press, Cambridge, MA, 1992).

\bibitem{hopcroft1979introduction}
J. E. Hopcroft, R. Motwani and J. D. Ulman, {\it Introduction to Automata Theory, Languages, and Computation} (Addison-Wesley, Reading, MA, 1979).

\bibitem{sipser2006introduction}
M. Sipser, {\it Introduction to Theory of Computation} (Thomson Course Technology, Boston, 2006).

\bibitem{kozen2012automata}
D. C. Kozen, {\it Automata and Computability} (Springer Science \& Business Media, New York, 2012).

\bibitem{abramsky1997semantics}
S. Abramsky, Semantics of interaction: an introduction to game semantics, in {\it Semantics and Logics of Computation}, eds. A. Pitts and P. Dybjer (Cambridge University Press, New York, 1997) pp. 1-31.

\bibitem{hopcroft1967nonerasing}
J. E. Hopcroft and J. D. Ulman, Nonerasing stack automata, {\it J. Compt. Syst. Sci.} {\bf 1(2)} (1967) pp. 166--186.

\bibitem{hyland2000full}
J. M. E. Hyland and C.-H. L. Ong, On full abstraction for PCF: I, II, and III, in {\it Information and Computation} {\bf 163(2)} (2000) pp. 285--408.

\bibitem{ginsburg1967stack}
S. Ginsburg, S. Greibach and M. A. Harrison, Stack automata and compiling, {\it J. ACM} {\bf 14(1)} (1967) pp. 172--201.

\bibitem{chomsky1956three}
N. Chomsky, Three models for the description of language, {\it IRE Trans. Inf. theory} {\bf 2(3)} (1956) pp. 113--124.

\bibitem{hague2008collapsible}
M. Hague, A. S. Murawski, C.-H. L. Ong and O. Serre, Collapsible pushdown automata and recursion schemes, in {\it Proc. 23rd Annual IEEE Symp. Logic in Computer Science}, Pittsburgh, PA (June, 2008) pp. 452--461.

\bibitem{knapik2002higher}
T. Knapik, D. Niwi\'nski and P. Urzyczyn, Higher-order pushdown trees are easy, in {\it Proc. International Conference on Foundations of Software Science and Computation Structures}, eds. M. Nielsen and U. Engberg, Lecture Notes in Computer Science {\bf 2303} (Springer, Berlin, Heidelberg, 2002).

\bibitem{amadio1998domains}
R. M. Amadio and P.-L. Curien, {\it Domains and Lambda-Calculi}, in {\it Cambridge Tracts Theoret. Compt. Sci.} {\bf 46} (Cambridge University Press, Cambridge, 1998).

\bibitem{yamada2016dynamic}
N. Yamada and S. Abramsky, Dynamic games and strategies, {\it arXiv preprint arXiv:1601.04147} (2016).

\bibitem{abramsky1999concurrent}
S. Abramsky and P. A. Mellies, in {\it Proc. 14th Annual IEEE Symp. Logic in Computer Science}, Trento, Italy (July, 1999) pp. 431--442.

\bibitem{laurent2002polarized}
O. Laurent, in {\it Proc. 17th Annual IEEE Symp. Logic in Computer Science}, Copenhagen, Denmark (July, 2002) pp. 265--274.

\bibitem{cutland1980computability}
N. Cutland, {\it Computability: An Introduction to Recursive Function Theory} (Cambridge University Press, Cambridge, 1980).

\bibitem{abramsky1994games}
S. Abramsky and R. Jagadeesan, Games and full completeness for multiplicative linear logic {\it J. Symbolic Logic} {\bf 59(2)} (1994) pp. 543-574.

\bibitem{girard1987linear}
J.-Y. Girard, Linear logic, {\it Theoret. Compt. Sci.} {\bf 50(1)} (1987) pp. 1--101.

\bibitem{kozen2006theory}
D. C. Kozen, {\it Theory of Computation} (Springer-Verlag, London, 2006). 
\fi

\if0
\section{Bibliography styles}

There are various bibliography styles available. You can select the style of your choice in the preamble of this document. These styles are Elsevier styles based on standard styles like Harvard and Vancouver. Please use Bib\TeX\ to generate your bibliography and include DOIs whenever available.

Here are two sample references: \cite{Feynman1963118,Dirac1953888}.

\section*{References}

\bibliography{mybibfile}
\fi

\end{document}